\documentclass[11pt]{article}

\usepackage{setspace, longtable, graphicx, hyphenat, hyperref, fancyhdr, ifthen, everypage, enumitem, amsmath, setspace}
\usepackage[left=1in, right=1in, bottom=1in, top=1in]{geometry}

\usepackage[dvipsnames]{xcolor}
\definecolor{danlanzi}{RGB}{167, 168, 189}
\definecolor{carmine}{rgb}{0.59, 0.0, 0.09}
\definecolor{maroon(html/css)}{rgb}{0.5, 0.0, 0.0}
\definecolor{coolblack}{rgb}{0.0, 0.18, 0.39}
\definecolor{prussianblue}{rgb}{0.0, 0.19, 0.33}
\definecolor{winered}{rgb}{0.5, 0, 0}
\definecolor{hanada}{HTML}{006284}
\hypersetup{colorlinks=true, linkcolor=winered, urlcolor=winered, citecolor=winered, filecolor=winered}

\usepackage{graphicx}
\usepackage{scrextend}

\usepackage[hang]{footmisc}

\usepackage{titling}
\thanksfootextra{}{\hspace*{0.45em}}
\usepackage{xurl}

\usepackage{amssymb}
\usepackage{amsthm}
\usepackage{natbib}
\setcitestyle{aysep={}} %Remove the comma between authors and years
\usepackage[title]{appendix}

\usepackage{newpxtext}
\usepackage{newpxmath}
\usepackage[cal=boondoxupr]{mathalpha}
\usepackage[flushleft]{threeparttable}
\usepackage[font=sc]{caption}
\usepackage{subcaption}
\usepackage{tabularx}
\usepackage{booktabs}
\usepackage{chngpage}
\usepackage{dsfont}
\usepackage{mathtools}
\usepackage{bm}
\usepackage{aligned-overset}

\usepackage{tikz}
\usetikzlibrary{shapes.geometric, arrows, trees, calc, decorations.pathreplacing, calligraphy, fit, backgrounds}

\usepackage[flushleft]{threeparttable}
\usepackage{array}
\usepackage{tabularx}
\usepackage{booktabs}
\usepackage{algorithm}
\usepackage{algpseudocode}

\theoremstyle{plain}
\newtheorem{theorem}{Theorem}

\newtheorem{assumption}{Assumption}
\newtheorem{lemma}{Lemma}
\theoremstyle{definition}
\newtheorem{definition}{Definition}[section]

\setlist[itemize]{nolistsep, topsep=0pt}
\setlist[enumerate]{nolistsep, topsep=0pt}
\graphicspath{{figures/}}
\newcommand{\ind}{\mathrel{\perp\!\!\!\perp}}
\newcommand{\ceil}[1]{\left\lceil #1 \right\rceil}

\usepackage{mathtools}
\DeclarePairedDelimiterX{\norm}[1]{\lVert}{\rVert}{#1}
\DeclareMathOperator{\Var}{Var}
\DeclareMathOperator{\Cov}{Cov}

\newcommand\keywords[1]{\textbf{Keywords}: #1}

\title{Conformal Inference for Experimental Attrition\\ in Social Science Research}
\author{Xiangyu Song\thanks{Ph.D. student, Department of Political Science, Washington University in St. Louis. \href{mailto:xiangyusong@wustl.edu}{xiangyusong@wustl.edu}.}}
\date{%\textit{Preliminary draft. Please do not cite or circulate without permission.}
\today}

\begin{document}

    \maketitle
    \begin{abstract}
        Attrition in survey and field experiments presents a challenge for social science research. Common approaches to deal with this problem -- such as complete case analysis, multiple imputation, and weighting methods -- rely on strong assumptions that may not hold in practice. This paper introduces a new method that combines recent advances in statistical inference with established tools for handling missing data. The approach produces prediction intervals for treatment effects that are both robust and precise. Evidence from simulation studies shows that the method achieves better coverage and produces narrower intervals than common alternatives. The reanalysis of two recently published experimental studies illustrates how this framework allows researchers to compare treatment effects across participants who remain in the study, those who drop out, and the full sample. Taken together, these results highlight how the proposed approach provides a stronger foundation for causal inference in the presence of attrition.
    \end{abstract}

    \keywords{conformal inference, missing data, experiment attrition, causal inference}

    \clearpage

    \section{Introduction}

    Since the seminal contributions by Neyman and Fisher in the 1920s \citep{fisher1937the,splawa-neyman1990application}, the use of randomized experiments (randomized controlled trials) has become a cornerstone for estimating causal effects \citep{huber2012identification,imbens2024causal}. While rare in the 1970s and 1980s \citep[e.g.,][]{lalonde1986evaluating}, experimental studies have expanded rapidly in the social sciences over the past two decades \citep{imbens2024causal}. Political scientists now routinely employ survey and field experiments, often alongside survey instruments, to evaluate theories and estimate causal effects \citep[e.g.,][]{imai2011estimation,kalla2018minimal}. Yet, like all empirical strategies, experiments face threats to validity, particularly attrition \citep{hausman1979attrition}, which complicates inference \citep{gerber2012field}. In survey experiments, non-response due to attrition prevents observation of outcomes for all participants, while in field experiments attrition reduces sample size and statistical power. When attrition is non-random, it introduces selection bias that undermines theoretical claims. Ignoring attrition and restricting analysis to observed samples generally yields biased and inconsistent estimators \citep{coppock2017combining}.

    %Researchers have proposed several remedies to deal with the prevalence of missing data problem induced by experimental attrition. The first line of approach addresses the missingness in the outcome by introducing additional assumptions \citep{coppock2017combining}, e.g., sample selection model \citep{heckman1979sample} and instrumental variable \citep{huber2012identification}. However, such assumptions are hard to verify on the theoretical ground explicitly. A second approach uses (multiple) imputation or complete case analysis to handle missing data \citep[e.g.,][]{fukumoto2022nonignorable,king2001analyzing,rubin1976inference}. Nevertheless, these methods are also limited to strong distributional assumptions and missingness pattern. A third strategy, inverse probability weighting (IPW), reweights observations using covariates to adjust for missingness \citep{horvitz1952generalization,huber2012identification}. However, IPW alone cannot guarantee valid inference for the attrition group, especially when attrition is correlated with treatment or covariates. Finally, partial identification such as nonparametric bounds \citep{horowitz1998censoring,horowitz2000nonparametric,lee2009training} and double sampling with worst-case bounds \citep{coppock2017combining} offer alternatives that make fewer assumptions but have limitations across heterogeneous subpopulations.

    Existing approaches to address attrition each have their own limitations. Some rely on unverifiable assumptions about model specification or data structure, such as sample selection models \citep{heckman1979sample}, instrumental variables \citep{huber2012identification}, and imputation or complete case methods \citep[e.g.,][]{fukumoto2022nonignorable,king2001analyzing,rubin1976inference}. Some reweighting methods, like inverse propensity weighting (IPW) \citep{horvitz1952generalization,huber2012identification}, adjust for missingness using observed data but do not leverage information from attrited observations and can yield biased estimates when missingness depends on treatment and covariates. Other bounding approaches, like partial identification \citep{horowitz1998censoring,horowitz2000nonparametric,lee2009training,coppock2017combining}, relax these parametric assumptions but are not designed to address the covariate shift problem, where the covariate distribution differs between observed and attrited groups.

    In this paper, I propose to leverage conformal inference, a nonparametric and distribution-free method for uncertainty quantification on individual predictions, to construct valid prediction intervals for treatment effects in the presence of attrition. Conformal inference provides a way to quantify uncertainty without relying on strong parametric assumptions on the data generating process (DGP). Intuitively, the method works by comparing the model's predictions to the actual observed outcomes. By assessing how well the model's predictions align with the observed values, it determines a range (or interval) around each prediction that will contain the true value with a user-specified level of confidence (e.g., 90\% or 95\%). The advantage of conformal inference is that, regardless of the complexity of the model or the distribution of the data, it guarantees coverage: if we ask for 95\% coverage, then, on average, about 95\% of the prediction intervals will indeed contain the true outcome. This property makes conformal inference particularly appealing for causal inference, where we often want reliable, distribution-free uncertainty quantification for treatment effects at the individual or subgroup level.

    Recent advances in the conformal inference literature have demonstrated its potential in addressing causal inference problems under attrition. \citet{lei2021conformala} provide a theoretical framework for using conformal inference to construct prediction intervals for counterfactuals and individual treatment effect (ITE). However, their method is not designed specifically to handle missing data problems induced by attrition, as it relies on relatively strong missing data assumptions, and the resulting prediction intervals can be overly wide (non-informative) in practice. Building on this work, \citet{yang2024doubly} and \citet{gao2025role} introduce approaches that lead to more accurate and efficient prediction intervals. Rather than using the potential outcome framework, \citet{yang2024doubly} reformulate the covariate shift problem as a missing data problem, and provide a doubly robust and computationally efficient method for constructing prediction intervals for estimands of interest. However, they do not directly tackle the issue of missing data due to attrition. This is why one of the core motivations of this paper is to formally address missing data problems induced by experimental attrition.
    
    %My discussion aligns with recent development of applying conformal inference to construct prediction intervals for causal estimands of interest \citep{gao2025role,jin2023sensitivity,lei2021conformala}. Nevertheless, it also differs from existing work by adopting the potential outcome framework and placing experimental attrition and causal identification for attrition group at the center of the analysis. I explore the use of conformal inference to address the missing data problem induced by attrition in experiments. Specifically, this paper seeks to answer three questions: How can we construct valid prediction intervals for causal estimands of interest and extrapolate these intervals to the attrition group? How can we construct prediction intervals with guaranteed empirical coverage but narrower interval size? How can we extend the finite-sample nonasymptotic property of prediction intervals to the asymptotic property? By integrating conformal inference with semiparametric efficient estimator, I am able to construct valid and asymptotic prediction intervals for causal estimands of interest in the presence of experimental attrition. 

    This paper makes three contributions to the literature. First, it introduces conformal inference as a useful framework for political science research, demonstrating how it can provide reliable measures of uncertainty for causal estimands such as treatment effects and counterfactuals. By focusing on distribution-free and model-agnostic inference, this approach helps researchers quantify uncertainty even when conventional statistical assumptions may not hold. Second, building on \citet{lei2021conformala} and \citet{gao2025role}, this paper combines conformal inference with flexible estimators to construct prediction intervals for treatment effects among participants who drop out of experiments. It offers a general and practical way to handle missing data problems caused by experimental attrition under the potential outcome framework. Third, this paper presents one of the most comprehensive evaluations of how conformal inference performs across different predictive models, underscoring both its strengths and its potential limitations for empirical applications in social science research.

    %Compared with current methods, the proposed approach produces more precise and robust prediction intervals for the causal estimands of interest. Through extensive Monte Carlo (MC) simulations, this paper shows that the proposed method can construct prediction intervals for the treatment effects in the attrition group with both valid coverage and narrower interval length. It also outperforms existing parametric and nonparametric methods in terms of empirical coverage and average interval length. By reanalyzing two recently published experimental studies, this paper demonstrates how the proposed method can be applied in practice to compare treatment effects across participants who remain in the study, those who drop out, and the full sample. In addition, it enables researchers to aggregate individual-level treatment effects to estimate average treatment effects for different subpopulations.

    I demonstrate the advantages of my proposed approach to deal with experimental attrition in two ways. First, through extensive Monte Carlo (MC) simulations, this paper shows that the proposed method can construct prediction intervals for the treatment effects in the attrition group with both valid coverage and narrower interval length. It also outperforms existing parametric and nonparametric methods in terms of empirical coverage and average interval length. Second, by reanalyzing two recently published experimental studies, this paper demonstrates how the proposed method can be applied in practice to compare treatment effects across participants who remain in the study, those who drop out, and the full sample. In addition, it enables researchers to aggregate individual-level treatment effects to estimate average treatment effects for different subpopulations. Overall, compared with current methods, the proposed approach produces more precise and robust prediction intervals for the causal estimands of interest. 

    The rest of the paper is organized as follows. Section 2 goes over the current approaches to deal with the missing data problem. Section 3 introduces the problem setup. Section 4 provides a brief review of conformal inference and the covariate shift problem. Section 5 discusses the conformal inference method for the missing data problem induced by experimental attrition. Section 6 presents the proposed method for constructing prediction intervals for counterfactuals and ITE in the presence of experimental attrition. Section 7 evaluates the performance of the proposed method through extensive MC simulations. Section 8 illustrates the application of the proposed method by the reanalysis of two recently published experimental studies. Section 9 concludes.

    \section{Current Approaches to Deal with Attrition in Experimental Settings}

    %Researchers have proposed several remedies to deal with the prevalence of missing data problem induced by experimental attrition. The first line of approach addresses the missingness in the outcome by introducing additional assumptions \citep{coppock2017combining}. For example, sample selection model requires strong parametric assumption or an exclusion variable for robust inference \citep{heckman1979sample}. Similarly, the instrumental variable approach requires an instrumental variable that is related to the missingness but has no direct effect on the outcome \citep{huber2012identification}. However, such assumptions are hard to verify on the theoretical ground explicitly. 

    Researchers have proposed several remedies to deal with the prevalence of missing data as a result of experimental attrition. The first set of solutions addresses missingness problems due to attrition in the outcome by introducing parametric assumptions \citep{coppock2017combining}. For example, sample selection models \citep{heckman1979sample}, in which outcomes are observed only for a non-random subset of units, rely on strong assumptions, including correct specification of the selection process, distribution functions, and exclusion restrictions. Similarly, proponents of instrumental variable approaches as a way to deal with attrition assume that there exists a variable that is related to the missingness but has no direct effect on the outcome \citep{huber2012identification}. However, such assumptions are hard to verify on theoretical grounds, and valid instruments are hard to find in practice.

    A second family of approaches employs (multiple) imputation methods or complete case analysis (e.g., listwise or pairwise deletion) to correct for potential bias due to missing data \citep[e.g.,][]{fukumoto2022nonignorable,king2001analyzing,rubin1976inference}. Although not designed to handle causal inference problems directly, we can manipulate these imputation methods to impute the missing potential outcomes and attrited observations. Nevertheless, these methods are also limited by strong assumptions. For example, multiple imputation using Amelia II assumes that the complete data (that is, both observed and unobserved) are multivariate normal and that the missingness is missing at random (MAR). Complete case analysis can cause biased estimates when the missingness does not satisfy missing completely at random (MCAR) or missing at random \citep{shin2024differenceindifferences}. Imputation methods are also problematic under missing not at random (MNAR), which violates the ignorability of missingness, or when the covariates are themselves missing \citep{coppock2017combining,shin2024differenceindifferences}. 
    
    The third line of approach leverages inverse propensity weighting (IPW) to reweight the observations based on observed covariates \citep{horvitz1952generalization,huber2012identification}. However, IPW only provides a treatment effect estimate for the observed group, but as social scientists, we want the ATE for all observations, including observed and attrited groups. Moreover, this IPW approach alone does not resolve the issue of making valid inference on the attrition group -- particularly when the attrition is induced by treatment or other covariates. In that case, although IPW can give an unbiased treatment effect for the observed group, it could still be biased for all observations. Finally, the fourth line of approach tackles the missing data problem through partial identification, for example, nonparametric bounds \citep{horowitz1998censoring,horowitz2000nonparametric,lee2009training} and double sampling with worst-case bounds \citep{coppock2017combining}. Although these nonparametric approaches make fewer assumptions, they overlook the potential covariate shift problems, where the covariate distribution differs across attrited and observed groups. %This can limit the utility for inference across heterogeneous subpopulations. For instance, consider a field experiment where attrition is systematically higher among younger respondents than older ones. If the observed group is disproportionately older, while the attrition group is disproportionately younger, the covariate distribution of age differs across groups. Nonparametric bounds would treat these groups as if they were comparable, ignoring the covariate imbalance.

    %Instead of relying on current approaches with restrictions, this paper takes an alternative nonparametric approach to explore the inference with experiment attrition with minimal assumption. To be specific, this paper seeks to leverage conformal inference, a non-parametric and distribution-free method, to construct prediction intervals on treatment effects in the presence of experimental attrition. By using conformal inference, we can obtain valid prediction intervals for treatment effects under covariate shift without relying on strong parametric assumptions about the data generating process. We will discuss the problem setup and breif introduction to conformal inference in the next two sections.

    In this paper, I propose to leverage conformal inference, a nonparametric and distribution-free method for uncertainty quantification on individual predictions, to construct valid prediction intervals for treatment effects in the presence of attrition. Compared to current approaches, conformal inference does not rely on parametric assumptions on the data distribution, thus offering a more robust and flexible framework for handling missing data problems induced by experimental attrition. The next two sections provide the problem setup, a brief introduction to conformal inference, and why it can be used to deal with experimental attrition. 

    \section{Problem Setup and Notation}

    Consider the following setup for a randomized experiment with attrition induced by survey non-response or experiment dropout. We adopt the potential outcome framework \citep{splawa-neyman1990application,rubin1974estimating}. Let $D_i \in \{0, 1\}$ be a binary treatment indicator, where $D_i = 1$ if unit $i$ is treated and equals 0 otherwise (control), let $(Y_i (1), Y_i(0))$ be the potential outcomes for unit $i$, and $X_i$ the vector including $k$ pretreatment covariates. In a similar way, we define $D, Y(d),$ and $X$ without subscripts to be the population counterparts.
    %Denote $D \in \left\{ 0, 1 \right\}$ the binary treatment indicator, $\left( Y(1), Y(0) \right)$ the pair of potential outcomes, $X$ the vector of other pretreatment covariates. 
    We first assume that the data are independently and identically distributed (i.i.d.):
    \[\left( Y_i(0), Y_i(1), D_i, X_i \right) \overset{\text{i.i.d.}}{\sim} \left( Y(1), Y(0), D, X \right).\]
    In addition, we make the stable unit treatment value assumption (SUTVA). Then, the observed triples $Y_i^{\text{obs}}, D_i, X_i$ are given by 
    \[Y_i^{\text{obs}} = D_i Y_i(1) + (1 - D_i) Y_i(0).\]
    We define individual treatment effect (ITE) $\tau_i$ as follows
    \[\tau_i \coloneq Y_i(1) - Y_i(0).\]   
    We denote $R \in \left\{ 0, 1 \right\}$ the binary indicator of response, with $R = 1$ representing observations without attrition and $R = 0$ representing those with attrition. Following the literature on covariate shift, external validity, and missing data \citep{athey2020combining,egami2023elements,gao2025role,lei2021conformala}, we call the sample with observed outcome ($R = 1$) the source data and those with missing outcomes the target data ($R = 0$). Assume we observe the triples $\left( Y, D, X \right)$ from the source data, but only observe the pair $\left( D, X \right)$ from the target group.
    Table \ref{tab1} summarizes the problem setup. From the table, we can see that there are two missing data problems. First, due to the fundamental problem of causal inference \citep{holland1986statistics}, only one of the potential outcomes is observed for each unit (first four rows of Table \ref{tab1}). Second, due to experiment attrition, both potential outcomes are missing for the attrited units (last two rows of Table \ref{tab1}).

    \begin{table}[h]
        \centering
        \caption{Randomized Experiment with Attrition}
        \label{tab1}
        \begin{threeparttable}
            \begin{tabularx}{0.6\textwidth}{*5{>{\centering\arraybackslash}X}}
                $Y(1)$ & $Y(0)$ & $D$ & $X$ & $R$\\
                \toprule
                $\checkmark$ & ? & 1 & $\checkmark$ & 1 \\
                ? & $\checkmark$ & 0 & $\checkmark$ & 1 \\
                ? & $\checkmark$ & 0 & $\checkmark$ & 1 \\
                $\checkmark$ & ? & 1 & $\checkmark$ & 1 \\
                $\vdots$ & $\vdots$ & $\vdots$ & $\vdots$ & $\vdots$\\
                ? & ? & 0 & $\checkmark$ & 0\\
                ? & ? & 1 & $\checkmark$ & 0\\
                \hline\hline
            \end{tabularx}
            \begin{tablenotes}
                \footnotesize
                \setlength\labelsep{0pt}
                \item\textit{Note}: This table summarizes the setup of randomized experiment with attrition. $Y_i(1)$ and $Y_i(0)$ denote the potential outcomes. $D$ denotes the treatment, $X$ denotes the covariates, and $R$ denotes the indicator of attrition with $R = 1$ as the complete sample and $R = 0$ as the attrition sample. $\checkmark$ stands for observed quantity while $?$ stands for unobserved quantity.
            \end{tablenotes}
        \end{threeparttable}
        
    \end{table}

    We make the following additional assumptions for the problem setup throughout the paper:
    \begin{assumption}[Unconfoundedness]\label{asm1}
        \[\left( Y(1), Y(0) \right) \ind D \mid X.\]
    \end{assumption}
    The unconfoundedness assumption of treatment assignment requires that conditional on the covariates, the treatment assignment is independent of the potential outcomes. Similarly, we assume a version of unconfoundedness for the response indicator $R$:
    \begin{assumption}[Unconfoundedness of Attrition/MAR]\label{asm2}
        \[\left( Y(1), Y(0) \right) \ind R \mid X, D.\]
    \end{assumption}
    In the framework of missing data problem, Assumption \ref{asm2} can be referred to as missing at random (MAR). This assumption allows us to use the observed data from the source group to make inference about the target group, without assuming the covariate distribution remains the same across groups. In other words, the missing data induced by the attrition is independent of the potential outcomes, conditional on the observed covariates.\footnote{When we tackle the missingness pattern as missing completely at random (MCAR), we do not necessarily need Assumption \ref{asm2}. As discussed by \citet[221]{gerber2012field}, under MCAR, we only need to assume that 
    \[\left( Y(1), Y(0) \right) \ind R \mid X.\]
    Often this kind of missingness is relatively innocuous as the difference-in-means estimator remains an unbiased estimator of the average treatment effect (ATE).} 
    \begin{assumption}[Overlap]\label{asm3}
        We assume the overlap condition for both the propensity score of treatment and experiment attrition: for $c > 0$,
        \begin{align*}
            c < e_D(X) &\equiv \mathbb{P}\left( D = 1 \mid X \right) < 1 - c\\
            c < e_R(X, D) &\equiv \mathbb{P}\left( R = 1 \mid X, D \right) < 1 - c.
        \end{align*}
    \end{assumption}
    The overlap assumption is common in the causal inference and missing data literature \citep{imbens2015causal,rosenbaum1983central}. It requires that the probability of receiving treatment and the probability of response are both bounded away from 0 and 1, conditional on the covariates (and treatment). This assumption is crucial for the identification of ITE in the presence of experiment attrition. It ensures that there is a non-negligible probability of observing both treatment and control groups across all levels of covariates, as well as a non-negligible probability of response for both groups, which enables the comparisons across subpopulations.

    To make the following analysis more concrete, we define the following notation throughout the paper. Let $\mathcal{C}_d(X)$ be the prediction interval for the potential outcome $Y(d)$, where $d \in \left\{ 0, 1 \right\}$. Similarly, let $\mathcal{C}_{\text{ITE}}(X)$ be the prediction interval for the ITE. Denote $\check{\mathcal{C}}_{\text{ITE}}(X)$ as the extrapolated prediction interval for the ITE in the attrition group. %Table \ref{tab:notation} summarizes the notation and definitions used in this paper.

    %\begin{table}[ht]
    %    \centering
    %    \caption{Notation and Definition}
    %    \label{tab:notation}
    %    \begin{tabular}{cc}
    %        \hline\hline
    %        Notation & Definition\\
    %        \hline
%
    %        \hline
    %        %$Y(d)$ & Potential outcome under treatment $d \in \left\{ 0, 1 \right\}$\\
    %        $\mathcal{C}_d(X)$ & Prediction interval for potential outcome $Y(d)$\\
    %        $\mathcal{C}_{\text{ITE}}(X)$ & Prediction interval for ITE in observed group\\
    %        $\check{\mathcal{C}}_{\text{ITE}}(X)$ & Prediction interval for ITE in attrition group\\
    %        $V_d, V_\mathcal{C}$ & Nonconformity score for $Y(d)$ and $\mathcal{C}_{\text{ITE}}$\\
    %        $\eta_{\alpha, d}$ & $(1 - \alpha)$ quantile of the nonconformity score $V_d$\\
    %        $\eta_{\gamma, \mathcal{C}}$ & $(1 - \gamma)$ quantile of the nonconformity score $V_\mathcal{C}%$\\
    %        $e_D(X), \pi_D(X)$ & Propensity score of treatment and the odds ratio\\
    %        $e_R(X, D), \pi_R(X, D)$ & Propensity score of attrition and the odds ratio\\
    %        $m_d, m_\mathcal{C}$ & Conditional CDF of $V_d$ and $V_\mathcal{C}$ at $\eta_{\alpha, d}$ and %$\eta_{\gamma, \mathcal{C}}$\\
    %        $\psi_d, \psi_\mathcal{C}$ & Efficient influence function\\
    %        \hline\hline
    %    \end{tabular}
    %\end{table}
    
    \section{Conformal Inference with Covariate Shift}

    This section presents an introduction to conformal inference and covariate shift problems.

    \subsection{Conformal Inference}

    In this study, the primary estimands of interest are counterfactual potential outcomes and the ITE.\footnote{While political scientists often focus on aggregate effects such as the ATE or the average treatment effect on the treated (ATT), these quantities can be obtained by taking expectations over the ITE.} Rather than generating point predictions, I propose using conformal inference to construct valid prediction intervals that cover these random variables.

    Conformal inference provides a practical way to quantify uncertainty in modern prediction settings, especially when we rely on flexible machine-learning models \citep{angelopoulos2024theoretical}. In political science research, we can train a model to predict an outcome based on observed characteristics. For example, \citet{gohdes2020repression} tries to predict regime violence based on death records, and \citet{mueller2024crowd} tries to predict crowd cohesion score from survey responses collected at every protest. While a model produces a single predicted value, it does not automatically tell us how reliable that prediction is. Some predictions are based on abundant, stable patterns in the data, while others may be much more uncertain. 

    Conformal inference addresses this challenge by taking any prediction model as an input and adding a calibrated bound around each prediction, namely a prediction interval, that is designed to include the true outcome with a pre-specified probability (for instance, 95\%).\footnote{This prediction interval is different from a confidence interval. The confidence interval is a frequentist approach, in which we assume a correctly specified model for the DGP, or at least the conditional distribution of the estimand given covariates. Also, it comes from repeated sampling and the Central Limit Theorem, which guarantees the coverage in limit. However, prediction intervals constructed by conformal inference only assume exchangeability of the DGP and guarantees the finite-sample coverage without any reliance on asymptotic approximations.} The intuition here is to assess how well the model's predictions align with the actual observed values in a held-out calibration set. We first use training data to fit a prediction model, which can be any off-the-shelf machine learning algorithm, such as random forests, gradient boosting machines, or neural networks. Next, we apply this trained model to a separate calibration dataset to generate predictions. We then compare these predictions to the actual observed outcomes in the calibration dataset. The discrepancy between the predicted and observed value is called the nonconformity score.\footnote{There are many different measures of this discrepancy. One can imagine the most intuitive ones as the residual or softmax score.} By permuting the nonconformity scores and selecting a pre-specified quantile of the permuted distribution, we can determine a margin of error around each prediction. This margin of error is then used to construct the prediction interval. Then, given any new data point drawn from the same population, we can use the pre-trained model to generate a prediction and use the calibrated margin of error to construct a prediction interval.
    
    Introduced and developed by Vladimir Vovk and collaborators \citep{vovk1999machinelearning,vovk2009online,vovk2019nonparametric}, conformal inference has gained significant attention from the statistics community for regression problems \citep[e.g.,][]{lei2013distributionfree,lei2014distributionfree,lei2018distributionfree} and classification problems \citep[e.g.,][]{romano2020classification,sadinle2019least}. As noted by \citet{angelopoulos2022gentle}, these prediction intervals are valid in a distribution-free sense: they possess explicit, non-asymptotic guarantees even without distributional or model assumptions.  In other words, this coverage guarantee holds without requiring strong assumptions about how the data are generated or whether the underlying model is perfectly specified. Instead, conformal inference only relies on the idea that the observed data points come from a similar underlying process. Under this condition, the method ensures that, on average, the intervals will contain the true value with the desired probability. In addition to the interval coverage, we are also interested in the interval size, which reflects the model's uncertainty in a transparent way. Narrower intervals imply more confidence in the prediction, while wider intervals signal greater uncertainty. This feature allows researchers to understand not only what the model predicts, but also how stable or reliable that prediction appears to be.

    In the context of the missing data problem induced by experimental attrition, we are interested in two kinds of predictions, as indicated by two types of missingness mentioned in Table \ref{tab1}. First, we want to predict the potential outcome under treated (control) that is not observed for each non-attrited control (treatment) unit. Second, we want to predict both the potential outcomes (or the combined ITE) for each attrited unit. In both cases, instead of simply providing a point estimate with unknown uncertainty, we will use conformal inference to construct prediction intervals that quantify the uncertainty around these predictions. However, when making these two kinds of predictions, we face a challenge called covariate shift at both steps.

    \subsection{Covariate Shift}

    A key appeal of conformal inference is its nonparametric nature. It generates prediction intervals with guaranteed marginal coverage for data drawn i.i.d. from any distribution, without requiring parametric assumptions. This property makes conformal inference a fully distribution-free approach. Importantly, the i.i.d. assumption, while sufficient, is not strictly necessary. Conformal inference also applies under the weaker assumption of exchangeability, which allows observations to exhibit some dependence as long as their joint distribution is unchanged by permuting their order. In effect, the data can be dependent but must still be identically distributed in the sense that no observation is ``special'' or carries extra information simply because of when it appears.\footnote{See \ref{exch} for a formal definition. Exchangeability is \textit{the} crucial condition underpinning the validity of conformal inference: it ensures that the nonconformity scores can be meaningfully ranked to construct prediction intervals.} Thus, conformal inference retains theoretical robustness even beyond the standard i.i.d. framework.  

    In our setting, however, these assumptions must be relaxed further. Our goal is two-fold: first, to construct valid prediction intervals for the counterfactuals $\mathcal{C}_d$ and ITE $\mathcal{C}_{\text{ITE}}$ for the observed group; and second, to extrapolate these prediction intervals to the attrition group, yielding interval estimates of the ITE $\check{\mathcal{C}}_{\text{ITE}}$. In this context, the exchangeability assumption no longer holds across groups. To account for this violation, we must address the covariate shift problem, defined as a distribution shift for the covariates \citep{shimodaira2000improving,tibshirani2019conformal}, at both stages of the inference procedure. 

    Ideally, in a randomized experiment without attrition, the joint distribution of covariates and outcomes can be written as $P_X \times P_{Y \mid X}$ for all observations. However, due to the attrition, the joint distribution changes to $Q_X \times P_{Y \mid X}$, with potentially different covariate distribution $Q_X$ across groups. Under the shift from $P_X$ to $Q_X$, our goal is to construct the prediction interval $\mathcal{C}_d$ for the counterfactual $Y(d)$ with desired coverage:
    \begin{align}
        \mathbb{P}_{\left( X, Y(d) \right) \sim Q_X \times P_{Y(d) \mid X}} \left( Y(d) \in \mathcal{C}_d(X) \right) \geq 1 - \alpha,\quad d \in \left\{ 0, 1 \right\},\label{eq:covshift1}
    \end{align}
    and then extrapolate the prediction intervals to the attrition group:
    \begin{align}
        \mathbb{P}_{(X, \mathcal{C}_{\text{ITE}}) \sim Q_X \times P_{\mathcal{C}_{\text{ITE}} \mid X}} \left( \mathcal{C}_{\text{ITE}} \subset \check{\mathcal{C}}_{\text{ITE}}(X) \right) \geq 1 - \gamma. \label{eq:covshift2}
    \end{align}

    Under the ideal randomized experiment with i.i.d. data, these complications would not arise. Random assignment ensures that treatment is independent of covariates, yielding constant propensity scores across all covariate strata. Moreover, with no attrition, the covariate distribution remains identical across groups. In such a setting, covariate shift is not a concern.
    
    With attrition, the situation changes significantly. The covariate distribution may differ across groups in two distinct ways. First, covariates differ between the observed and attrition groups: $\mathbb{P}\left( X \mid D, R = 0 \right) \neq \mathbb{P}\left( X \mid D, R = 1 \right)$. Under the MAR assumption, attrition depends on observed covariates, generating a shift from $P_{X \mid D, R=1}$ to $P_{X \mid D, R=0}$. When constructing prediction intervals for the attrition group, only observations with $\mathcal{C}_{\text{ITE}}$ from the observed group are informative for constructing $\check{\mathcal{C}}_{\text{ITE}}(X)$ for the attrition group, making it necessary to correct for this distributional difference.
    
    Second, attrition also induces covariate shift between treatment and control units within the observed group: $\mathbb{P}\left( X \mid D = 0, R = 1 \right) \neq \mathbb{P}\left( X \mid D = 1, R = 1 \right)$. Even with random assignment of treatment, missingness correlated with covariates can distort the covariate balance originally ensured by randomization. Thus, when constructing prediction intervals for counterfactuals within the observed group, we still need to account for the covariate shift problem. %For example, constructing the prediction intervals $\mathcal{C}_1(X)$ for $Y(1)$ relies the treatment units only. Therefore, if the prediction intervals of the potential outcomes $Y(1)$ are intended for all observed units, we need to address the covariate shift from $P_{X \mid D = 1}$ to $P_X$. Likewise, if the the treated units are used to construct $\mathcal{C}_1(X)$ for control units only, the relevant covariate shift we need to control is from $P_{X \mid D = 1}$ to $P_{X\mid D = 0}$.

    \subsection{Weighted Conformal Inference}

    To tackle the covariate shift problem for interval estimates, one solution is to rely on the weighted conformal inference developed by \citet{tibshirani2019conformal}.\footnote{\citet{lei2021conformala} provide a version of split conformalized quantile regression, which combines the weighted conformal inference with conformalized quantile regression (CQR) for regression problems.} 
    Conformalized quantile regression (CQR), introduced by \citet{romano2019conformalized}, is designed to produce the interval estimate in the form of 
    \begin{align}
        \mathcal{C}(x) = \left[ \hat{q}_{\alpha_{\text{lo}}}(x) - \eta,  \hat{q}_{\alpha_{\text{hi}}}(x) + \eta \right] , \label{eq:cqr}
    \end{align}
    where $\hat{q}_{\alpha_{\text{lo}}}(x)$ and $\hat{q}_{\alpha_{\text{hi}}}(x)$ are the estimates of the pre-specified error margin $\alpha_{\text{lo}}$-th and $\alpha_{\text{hi}}$-th conditional quantiles of $Y \mid X = x$. $\eta$ is the constant computed after ranking the nonconformity scores. To compute the conditional quantile, we can simply change the loss function of a specific machine learning algorithm to a quantile loss, or use quantile regression \citep[e.g.,][]{koenker1978regression,koenker2001quantile}. As proved by \citet{romano2019conformalized}, CQR has finite sample coverage guarantee.

    %Under covariate shift between the source distribution and the target distribution, we need to adjust Equation \eqref{eq:ci} and construct the interval estimates as
    %\begin{align}
    %    \mathbb{P}_{(X, Y) \sim Q_X \times P_{Y \mid X}} \left( Y \in \mathcal{C}(X) \right) \geq 1 - \alpha. \label{eq:cqrcovshift}
    %\end{align}
    The key idea for addressing the covariate shift problem is reweighting. By reweighting each nonconformity score by a probability that is proportional to the likelihood ratio
    \[w(x) = \frac{dQ_X(x)}{dP_X(x)},\]
    weighted conformal inference can achieve the desired coverage \citep{tibshirani2019conformal}.  
    %Denote $V_i$ the nonconformity score for observation $i$. We are changing our interest from the empirical distribution of unweighted nonconformity score
    %\[\frac{1}{n + 1}\sum_{i = 1}^{n} \delta_{V_i(x)} + \frac{1}{n + 1}\delta_\infty\]
    %to
    %\[\sum_{i = 1}^{n} p_i(x) \delta_{V_i(x)} + p_\infty(x) \delta_\infty,\]
    %where $\delta$ is the Dirac measure and the weights are defined as 
    %\[p_i(x) = \frac{w(X_i)}{\sum_{j = 1}^{n} w(X_j) + w(x)}, i \in \left\{ 1, \dots, n \right\} \quad \text{and} \quad p_\infty(x) = \frac{w(x)}{\sum_{j = 1}^{n} w(X_j) + w(x)}.\]
    This weighting scheme resembles the nonconformity score computed on the target population, which makes the nonconformity scores ``look exchangeable'' at the test point \citep{tibshirani2019conformal}. In terms of the CQR, we are still constructing intervals like the one in Equation \eqref{eq:cqr}, $\eta$ is computed to incorporate the weights for a weighted interval estimate.

    %Algorithm \ref{alg:wcqr} sketches the procedure of weighted split CQR proposed by \citet{lei2021conformala}. As \citet{lei2021conformala} underline, the prediction intervals achieve the inequality \eqref{eq:cqrcovshift} if the likelihood ratio $w(x)$ is known \citep{tibshirani2019conformal}.

    To estimate the weight $w(x)$, \citet{lei2021conformala} leverage the propensity score of the treatment $e_D(x)$, which resembles the logic of inverse propensity weighting (IPW) \citep{imbens2015causal}. Under the overlap assumption, we are essentially calibrating the source covariate distribution to the target one. For example, if we are trying to construct prediction intervals of $Y(1)$ for the control group and $Y(0)$ for the treatment group, the weights are estimated as the following:
    \[w_1(x) %= \frac{\mathbb{P}(D = 1)}{\mathbb{P}(D = 0)} \cdot \frac{1 - e_D(x)}{e_D(x)}}
     \propto \frac{1 - e_D(x)}{e_D(x)}, \qquad w_0(x) %= \frac{\mathbb{P}(D = 0)}{\mathbb{P}(D = 1)} \cdot \frac{e_D(x)}{1 - e_D(x)} 
     \propto \frac{e_D(x)}{1 - e_D(x)}.\]
    %\[w_1(x) = \frac{dP_X(x)}{dP_{X \mid D = 1}} = \frac{\mathbb{P}(D = 1)}{e_D(x)} \propto \frac{1}{e_D(x)}.\]
    %Since the weighted conformal inference is invariant to the scaling of the likelihood ratio, the weights reduce to the ratio of propensity score. 
    
    %Therefore, weighted conformal inference relies on the estimation of propensity scores. \citet{lei2021conformala} prove that when either the conditional quantile function or the propensity scores are consistently estimated, the prediction intervals attain the doubly robust property of the guaranteed coverage. 

    %Now that we have the observed outcome and the prediction intervals for the counterfactuals, we can construct the prediction interval for the ITE within the source group $R = 1$. 
    %\[\mathcal{C}_{\text{ITE}}(x; d, y^{\text{obs}}) = \begin{cases}
    %    y^{\text{obs}} - \mathcal{C}_0(x) & d = 1,\\
    %    \mathcal{C}_1(x) - y^{\text{obs}} & d = 0.
    %\end{cases}\]
    %Denote $\mathcal{C}_i = \mathcal{C}_{\text{ITE}}(X_i; D_i, Y_i^{\text{obs}})$.

    \section{Conformal Inference with Attrition}

    The logic of reweighting the nonconformity score in weighted conformal inference offers a principled solution to the covariate shift problem. This paper aims to construct prediction intervals for ITE in the attrition group with guaranteed coverage. Achieving this goal requires addressing two key challenges. First, we need to construct prediction intervals for the counterfactuals and ITE within the observed group with guaranteed coverage. Second, we need to extrapolate these prediction intervals to the attrition group. The covariate shift problem arises in both stages.

    \citet{lei2021conformala} propose a two-step approach for constructing prediction intervals of treatment effects for observations with both potential outcomes missing. Firstly, they construct the prediction intervals for the counterfactuals and ITE within the group with observed outcome using the weighted split-CQR, which they refer to as the nested approach.\footnote{\citet{lei2021conformala} define the nested approach by splitting the data into two folds: using the first fold to construct prediction intervals for counterfactuals and using the second fold to compute the prediction intervals for ITEs.} Secondly, they conduct a second conformal inference on these prediction intervals of ITE to extrapolate the intervals to the unit with both potential outcomes missing. 
    
    While this two-step method offers finite-sample coverage guarantees, it faces three important limitations. First, this method is not designed to address attrition problems in randomized experiments. Second, even reframing the prediction on the ITE for units with both potential outcomes missing as a missing data problem, the method addresses covariate shift only in the first step. The second step employs an unweighted conformal inference, which implicitly requires a stronger assumption of missingness pattern as MCAR. Under MCAR, the distribution of covariates between source group and target group remains the same, justifying the use of unweighted conformal inference. Nevertheless, when the missingness pattern is MAR, their proposed method has less satisfying performance. Third, the prediction intervals are too conservative, in the sense that the coverage is mostly 1 with an interval that is too wide to be useful in practice.\footnote{Algorithm \ref{alg:wcqr}, \ref{alg:uci}, and \ref{alg:iteattr} sketch the conformal inference strategy proposed by \citet{lei2021conformala}. Figure \ref{fig:unwMC} shows the performance of their method in a Monte Carlo simulation.} 

    This paper follows a similar two-step approach. Consider the following condition:
    \begin{align*}
        \mathbb{P}( Y_i(1) - Y_i(0) &\in \mathcal{C}_i ) = \\
        &{\underbrace{\mathbb{P}( Y_i(1) - Y_i(0) \in \mathcal{C}_i \mid R_i = 1 ) P(R_i = 1)}_{\text{I}}}
         + \underbrace{\mathbb{P}\left( Y_i(1) - Y_i(0) \in \mathcal{C}_i \mid R_i = 0 \right)P(R_i = 0)}_{\text{II}},
    \end{align*}
    where $\mathcal{C}_i = \mathcal{C}_{\text{ITE}}(X_i; D_i, Y_i^{\text{obs}})$. Since only one of the potential outcomes is missing for $R = 1$, term I can be reduced to
    \begin{equation}
        \begin{aligned}
            \mathbb{P}( Y_i(1) - Y_i(0) \in \mathcal{C}_i \mid R_i = 1 ) P(R_i = 1) &= \\
            \mathbb{P}( Y_i(0) \in \mathcal{C}_0(X_i) \mid D_i = 1, R_i = 1 ) &P(D_i = 1 \mid R_i = 1) \\
             & + \mathbb{P}( Y_i(1) \in \mathcal{C}_1(X_i) \mid D_i = 0, R_i = 1 ) P(D_i = 0 \mid R_i = 1).
        \end{aligned} \label{eq:termI}
    \end{equation}
    If we are able to construct prediction intervals such that
    \[\mathbb{P}(Y_i(0) \in \mathcal{C}_0(X_i) \mid D_i = 1, R_i = 1) \geq 1 - \alpha \quad \text{and} \quad \mathbb{P}(Y_i(1) \in \mathcal{C}_1(X_i) \mid D_i = 0, R_i = 1) \geq 1 - \alpha,\]
    we can guarantee that
    \begin{align}
        \mathbb{P}( Y_i(1) - Y_i(0) \in \mathcal{C}_i \mid R_i = 1 ) \geq 1 - \alpha \label{eq:ITEcov}
    \end{align}
    %Theoretically, in the first step, by leveraging conformal inference, we can construct the prediction intervals for the counterfactuals ($\mathcal{C}_d$) with guaranteed coverage ($1 - \alpha$). Then, we can construct the prediction intervals for the ITE ($\mathcal{C}_{\text{ITE}}$) within the source group as 
    is satisfied by the following decomposition:
    \begin{align}\label{eq:iteintR1}
        \mathcal{C}_{\text{ITE}}(X; D, Y^{\text{obs}}) = \begin{cases}
            Y^{\text{obs}} - \mathcal{C}_0(X) & \text{if } D = 1\\
            \mathcal{C}_1(X) - Y^{\text{obs}} & \text{if } D = 0
        \end{cases},
    \end{align}
    where the intervals $\mathcal{C}_{\text{ITE}}$ can be perceived as the surrogate intervals or pseudo outcomes for ITE. 
    
    For term II, to extrapolate these intervals to the target group with attrition, we follow \citet{lei2021conformala} to find an interval expansion function $\check{\mathcal{C}}(\cdot)$ that maps a covariate value to an interval such that 
    \begin{align}
        \mathbb{P}\left( \mathcal{C}_{\text{ITE}} \subset \check{\mathcal{C}}_{\text{ITE}}(X) \mid R = 0 \right) \geq 1 - \gamma, \label{eq:ext}
    \end{align}
    By Bonferroni correction, we can construct the prediction intervals that satisfy the condition (\ref{eq:ext}).
    \begin{align*}
        \mathbb{P}\left(Y(1) - Y(0)  \notin \check{\mathcal{C}}_{\text{ITE}}(X) \mid R = 0\right) &\leq \\
        \mathbb{P}\left( Y(1) - Y(0) \notin \mathcal{C}_{\text{ITE}} \mid R = 0 \right) &+ \mathbb{P}\left( \mathcal{C}_{\text{ITE}} \not\subset \check{\mathcal{C}}_{\text{ITE}}(X) \mid R = 0 \right) \leq \alpha + \gamma.
    \end{align*}

    As mentioned above, we need to account for the covariate shift problems in both steps: (1) from treatment (control) units to the control (treatment) units inside the observed group, and (2) from the observed group ($R = 1$) to the attrition group ($R = 0$). Although weighted conformal inference provides a principled solution to covariate shift, it still relies on the empirical distribution of nonconformity scores and therefore can produce relatively wide prediction intervals in finite samples. Moreover, because our framework requires addressing two covariate shift problems sequentially, estimation error from the two propensity score models may compound, leading to further efficiency losses in the resulting intervals. Finally, our goal is to construct prediction intervals that attain valid asymptotic coverage under some regularity conditions. Therefore, following \citet{yang2024doubly} and \citet{gao2025role}, instead of deriving the threshold of nonconformity score from the empirical distribution which produces overly wide prediction intervals with only finite sample coverage, we leverage the semiparametric efficiency theory for identification. Specifically, as it will be explained in detail in the next section, we identify the threshold of nonconformity score by deriving the efficient influence function (EIF). We adopt the following notations for simplicity:
    \begin{align*}
        \mathbb{P}\left( Y(d) \in \mathcal{C}_{d} \mid D = 1 - d, R = 1 \right) &= \mathbb{P}\left( V_{d} < \eta_{\alpha, d} \mid D = 1 - d, R = 1 \right)\\
        \mathbb{P}\left( \mathcal{C}_{\text{ITE}} \subset \check{\mathcal{C}}_{\text{ITE}} \mid R = 0 \right) &= \mathbb{P}\left( V_{\mathcal{C}} < \eta_{\gamma, \mathcal{C}} \mid R = 0 \right),
    \end{align*}
    where $V_d$ and $V_\mathcal{C}$ are the nonconformity scores for the counterfactuals and ITE, respectively, $\eta_{\alpha, d}$ and $\eta_{\gamma, \mathcal{C}}$ are the thresholds of the nonconformity scores for the counterfactuals and ITE, respectively. Our goal is to identify $\eta_{\alpha, d}$ as $(1 - \alpha)$ quantile of $V_{d}$ and $\eta_{\gamma, \mathcal{C}}$ as $(1 - \gamma)$ quantile of $V_{\mathcal{C}}$.

    \section{Conformal Inference with Semiparametric Efficient Estimator}

    In this section, I first derive the EIF for identifying the semiparametric efficient estimator used to construct prediction intervals: $\eta_{\alpha, d}$ for the observed group and $\eta_{\gamma, \mathcal{C}}$ for the attrition group. Next, I present the details for implementing the identification algorithm. %Lastly, I derive the asymptotic properties for the prediction interval of ITE for the attrition group.

    \subsection{Conformal Inference for Counterfactuals and ITE on Observed Group}

    As outlined above, we adopt a two-step approach to construct prediction intervals for the ITE in the attrition group. In the first step, we construct prediction intervals for the counterfactuals $Y(d)$ for those with $D = 1 - d$ for $d \in \left\{ 0, 1 \right\}$ and then prediction intervals for ITE within the observed group using the semiparametric efficient estimator of the quantile of interest $\eta_{\alpha, d}$. In other words, to do so, it is important to extrapolate the information from units with $D = d$ to the group with $D = 1 - d$, for $d \in \left\{ 0, 1 \right\}$. Therefore, given a desired coverage level $(1 - \alpha)$, we identify $\eta_{\alpha, d}$ by linking the observed distribution of nonconformity score, $V_d \mid D = d, R = 1$, to the target distribution, $V_d \mid D = 1 - d, R = 1$. % This step extrapolates information from units with $D = d$ to the group with $D = 1 - d$.

    \begin{lemma}[Setting 1 in Theorem 1 of \citet{gao2025role}]\label{lem:ident1}
        Under Assumption \ref{asm1}, we have
        \begin{align*}
            1 - \alpha &= \mathbb{P}\left( V_d < \eta_{\alpha, d} \mid D = 1 - d, R = 1 \right)\\
            &= \mathbb{E}_X\left[ \mathbb{P}\left( V_d < \eta_{\alpha, d} \mid D = d, R = 1, X \right) \mid D = 1 - d, R = 1 \right].
        \end{align*}
    \end{lemma}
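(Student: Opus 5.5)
The plan is to reduce the statement to a single invariance claim about the conditional law of the nonconformity score, and then apply the tower property. Throughout, I treat the fitted quantile functions $\hat q_{\alpha_{\text{lo}}}, \hat q_{\alpha_{\text{hi}}}$ as fixed. This is justified by sample splitting: the models are trained on a fold independent of the calibration and test points. Then $V_d = s(X, Y(d))$ for a fixed measurable score function $s$, for instance the CQR score $\max\{\hat q_{\alpha_{\text{lo}}}(X) - Y(d),\, Y(d) - \hat q_{\alpha_{\text{hi}}}(X)\}$. The first equality in the lemma is simply the definition of $\eta_{\alpha,d}$ as the $(1-\alpha)$ quantile of $V_d$ in the target population $\{D = 1-d, R = 1\}$, assuming continuity of that distribution so the quantile is attained with equality. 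So only the second equality needs work.

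\textbf{Tower property.} Conditioning on $X$ inside the target population gives
\[
\mathbb{P}\left( V_d < \eta_{\alpha,d} \mid D = 1-d, R = 1 \right) = \mathbb{E}_X\left[ \mathbb{P}\left( V_d < \eta_{\alpha,d} \mid D = 1-d, R = 1, X \right) \mid D = 1-d, R = 1 \right].
\]

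\textbf{Swapping the treatment arm.} It remains to show that, for almost every $x$ in the support of $X \mid D = 1-d, R = 1$,
\[
\mathbb{P}\left( V_d < \eta \mid D = 1-d, R = 1, X = x \right) = \mathbb{P}\left( V_d < \eta \mid D = d, R = 1, X = x \right).
\]
Since $V_d$ is a fixed function of $(X, Y(d))$, it suffices to show that the law of $Y(d)$ given $(X = x, D = d', R = 1)$ does not depend on $d'$. I would argue this in two moves.
\begin{itemize}
\item By Assumption \ref{asm2}, $Y(d) \mid X, D = d', R = 1$ has the same law as $Y(d) \mid X, D = d'$.
\item By Assumption \ref{asm1}, $Y(d) \mid X, D = d'$ has the same law as $Y(d) \mid X$.
\end{itemize}
Applying both with $d' = d$ and with $d' = 1-d$ gives the claim, and substituting it into the tower identity yields the lemma.

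\textbf{Main obstacle.} The delicate point is that the lemma cites only Assumption \ref{asm1}. However, conditioning on $R = 1$ can break unconfoundedness unless attrition is ignorable given $(X, D)$, so Assumption \ref{asm2} (MAR) must be invoked, as Setting 1 of \citet{gao2025role} implicitly does. I would state this dependence explicitly.

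Assumption \ref{asm3} is also needed. It guarantees that $\mathbb{P}(D = d, R = 1 \mid X) \ge c^2 > 0$, so the conditional probability given $(D = d, R = 1, X)$ is well defined on the support of $X \mid D = 1-d, R = 1$. This is what makes the invariance argument legitimate rather than vacuous.
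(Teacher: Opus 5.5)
Your proof is correct and follows essentially the same route as the paper, whose proof is the one-line remark that the result follows from the law of iterated expectations and unconfoundedness. You are right to invoke Assumption~\ref{asm2} explicitly alongside Assumption~\ref{asm1}. Together they give $Y(d) \ind D \mid X, R = 1$, which is what lets you swap the treatment arm after conditioning on $R = 1$. This sharpens the paper's argument, since Assumption~\ref{asm1} alone does not suffice and the paper's appeal to ``unconfoundedness'' must be read as also covering MAR.
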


    \begin{proof}
        %Lemma \ref{lem:ident1} is immediate following the law of iterated expectations and the unconfoundedness.
        See Section \ref{sec:pflem12} for the proof of Lemma \ref{lem:ident1}.
    \end{proof}

    This identification formula is the foundation for estimating $\eta_{\alpha, d}$ and ensures the constructed prediction intervals $\mathcal{C}_d$ and $\mathcal{C}_{\text{ITE}}$ achieve desired coverage in the target data. We now derive the EIF for $\eta_{\alpha, d}$ under some regularity conditions following \citet{yang2024doubly} and setting 1 of \citet{gao2025role}. 
    
    \begin{theorem}[Lemma 1 of \citet{yang2024doubly}]\label{thm:eifcf}
        Suppose $\mathbb{E}\left[ \frac{\left( \mathbb{P}\left( D = 1 - d, R = 1 \mid X \right) \right)^2}{\mathbb{P}\left( D = d, R = 1 \mid X \right)} \right]$ is finite and that the density of the conditional distribution of $V_d \mid D = d, R = 1$ at $\eta_{\alpha, d}$ is bounded away from zero. Then under Assumption \ref{asm1} and \ref{asm3}, the efficient influence function of the conditional quantile $\eta_{\alpha, d}$ under $D = 1$ for constructing prediction interval of $Y(1)$ is given up to a proportionality constant by
        \begin{equation}
            \begin{aligned}
            \psi_1\left( \eta_{\alpha, 1}, X; m, e_R, \pi_D \right) &= R(1 - D)\left[ m_1(\eta_{\alpha, 1}, X) - (1 - \alpha) \right] \\
            &+ \frac{DR e_R(X, 0)}{\pi_D(X) e_R(X, 1)} \left[ \mathds{1}_{\left\{ V_1 < \eta_{\alpha, 1} \right\}} - m_1(\eta_{\alpha, 1}, X) \right],
        \end{aligned}\label{eq:eif1}
        \end{equation}
        and the one under $D = 0$ for constructing prediction interval of $Y(0)$ is given up to a proportionality constant by
        \begin{equation}
            \begin{aligned}
                \psi_0\left( \eta_{\alpha, 0}, X; m, e_R, \pi_D \right) &= RD \left[m_0(\eta_{\alpha, 0}, X) - (1 - \alpha)\right]\\
                & + \frac{(1 - D) R e_R(X, 1) \pi_D(X)}{e_R(X, 0)}\left[ \mathds{1}_{\left\{ V_0 < \eta_{\alpha, 0} \right\}} - m_0(\eta_{\alpha, 0}, X) \right],
            \end{aligned}
        \end{equation}
        where 
        \begin{align*}
            m_d(\eta_{\alpha, d}, X) &\coloneq \mathbb{E}\left[ \mathds{1}_{\left\{ V_d < \eta_{\alpha, d} \right\}} \mid X, D = d, R = 1 \right]\\
            \pi_D(X) \coloneq \frac{e_D(X)}{1 - e_D(X)} \quad &\text{and} \quad e_R(X, d) \coloneq \mathbb{P}(R = 1 \mid D = d, X).
        \end{align*}
    \end{theorem}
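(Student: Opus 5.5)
We derive the efficient influence function by the standard pathwise-derivative argument for a parameter defined implicitly through a moment condition. By Lemma~\ref{lem:ident1}, $\eta_{\alpha,1}$ is the root of
\[
\Phi(\eta; P) \coloneq \mathbb{E}\left[ \mathds{1}_{\{D=0,R=1\}}\, m_1(\eta, X) \right] - (1-\alpha)\, \mathbb{P}(D=0,R=1) = 0,
\]
where $m_1(\eta,X)=\mathbb{P}(V_1<\eta\mid X,D=1,R=1)$. This rewrites the conditional expectation over $X\mid D=0,R=1$ as an unconditional one.

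Take a regular parametric submodel $P_t$ with score $S(O)=S(X)+S(D,R\mid X)+S(V_1\mid X,D,R)$, where $O=(X,D,R,RV_1)$. Differentiate the identity $\Phi(\eta_t;P_t)=0$ at $t=0$ by implicit differentiation:
\[
\partial_t \eta_t = -\frac{\partial_t \Phi(\eta;P_t)\big|_{\eta=\eta_{\alpha,1}}}{\partial_\eta \Phi(\eta;P)}.
\]
The denominator equals $\mathbb{E}[\mathds{1}_{\{D=0,R=1\}} f_{V_1\mid X,D=1,R=1}(\eta_{\alpha,1})]$. It is nonzero by the density-bounded-away-from-zero hypothesis, and this constant is exactly the omitted proportionality factor.

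For the numerator, split the derivative into three pieces.
\begin{itemize}[label={}]
\item \emph{(a) Perturbing the marginal of $X$ and the $(D,R)\mid X$ law in the outer expectation.} This contributes the influence term $R(1-D)[m_1(\eta,X)-(1-\alpha)]$. The centering constant $(1-\alpha)\mathbb{P}(D=0,R=1)$ removes the mean.
\item \emph{(b) Perturbing the conditional law of $V_1$ given $(X,D=1,R=1)$ inside $m_1$.} Write $\partial_t m_{1,t}=\mathbb{E}[(\mathds{1}_{\{V_1<\eta\}}-m_1)S(V_1\mid X,1,1)\mid X,D=1,R=1]$. Then re-express
\[
\mathbb{E}[\mathds{1}_{\{D=0,R=1\}}\,\partial_t m_{1,t}]
\]
as an expectation over the $D=1,R=1$ units by multiplying by the density ratio
\[
\frac{\mathbb{P}(D=0,R=1\mid X)}{\mathbb{P}(D=1,R=1\mid X)}=\frac{(1-e_D(X))e_R(X,0)}{e_D(X)e_R(X,1)}=\frac{e_R(X,0)}{\pi_D(X)e_R(X,1)}.
\]
This gives the augmentation term $\frac{DR\,e_R(X,0)}{\pi_D(X)e_R(X,1)}[\mathds{1}_{\{V_1<\eta\}}-m_1(\eta,X)]$. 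Unconfoundedness (Assumption~\ref{asm1}) and overlap (Assumption~\ref{asm3}) are what make the ratio well defined and the change of measure legitimate. The finite-moment condition on $\mathbb{P}(D=0,R=1\mid X)^2/\mathbb{P}(D=1,R=1\mid X)$ is exactly what is needed for this term to have finite variance, hence to lie in $L_2$.
\end{itemize}

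Next, verify that the candidate $\psi_1$ is mean zero and lies in the tangent space. The model is nonparametric apart from the restrictions implied by the assumptions, so the tangent space is saturated. Hence the unique gradient is efficient, and $\partial_t\eta_t=\mathbb{E}[\psi_1 S]/(-\partial_\eta\Phi)$ identifies $\psi_1$ up to the stated constant.

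The case $d=0$ follows by symmetry: swap the roles of $D$ and $1-D$. The density ratio becomes $\frac{e_D(X)e_R(X,1)}{(1-e_D(X))e_R(X,0)}=\frac{\pi_D(X)e_R(X,1)}{e_R(X,0)}$, which yields $\psi_0$.

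The main obstacle is step (b): correctly handling the change of measure and confirming that the score components for the $(D,R)\mid X$ law contribute nothing beyond term (a). For this, check explicitly that
\[
\mathbb{E}\big[(\mathds{1}_{\{V_1<\eta\}}-m_1)\,g(X,D,R)\big]=0
\]
for every function $g$, which holds because the residual has conditional mean zero given $(X,D=1,R=1)$.
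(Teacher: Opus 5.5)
Your proposal is correct and follows essentially the same route as the paper. Both differentiate the identifying moment condition $\mathbb{E}[R(1-D)(m_1(\eta,X)-(1-\alpha))]=0$ along a submodel: the outer-expectation perturbation gives the first term of $\psi_1$; the perturbation of $m_1$, moved onto the $D=1,R=1$ units through the ratio $e_R(X,0)/(\pi_D(X)e_R(X,1))$, gives the augmentation term; the $\eta$-derivative supplies the proportionality constant; and $d=0$ follows by symmetry. You are more explicit than the paper in several places: you use factorized scores instead of $P_t=(1+ts)P$, you identify the constant $\partial_\eta\Phi$ (its nonvanishing implicitly uses overlap to pass from the density of $V_1\mid D=1,R=1$ to that under $D=0,R=1$), you note the $L_2$ role of the moment condition, and you invoke saturation of the tangent space rather than a bare appeal to Riesz representation.
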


    \begin{proof}
        See Section \ref{sec:pfeifcf} for the proof of Theorem \ref{thm:eifcf}.
    \end{proof}

    In practice, by finding the smallest value that satisfies the sample moment conditions for the EIFs
    \begin{align*}
        \frac{1}{N} \sum_{i = 1}^{N} \psi_d\left( \hat{\eta}_{\alpha, d}, X_i; \hat{m}, \hat{e}_R, \hat{\pi}_D \right) \geq 0,
    \end{align*}
    we can identify $\eta_{\alpha, d}$ and construct the prediction intervals for the counterfactuals, and then the ITE as in Equation \eqref{eq:iteintR1}.

    Now we have been able to identify $\eta_{\alpha, d}$ for the observed group. We need to similarly identify $\eta_{\gamma, \mathcal{C}}$ to extrapolate the prediction intervals of ITE to the attrition group.

    \subsection{Conformal Inference for ITE with Attrition}

    The second step is to extrapolate the prediction intervals of ITE from the observed group to the attrition group with attrition. Intuitively, we are constructing intervals that cover the prediction intervals of ITE constructed in the first step. We build on the interval expansion approach proposed by \citet{lei2021conformala} to construct the prediction intervals for ITE in the attrition group. Specifically, we treat the prediction intervals obtained in the first step as surrogates for the ITE intervals in the attrition group and jointly calibrate the left and right endpoints of $\mathcal{C}_{\text{ITE}}$ to construct the prediction interval $\check{\mathcal{C}}_{\text{ITE}}(X)$ for the attrition group. Unlike the unweighted conformal inference that fails to account for the covariate shift problem, we again derive the EIF for this secondary conformal inference step \citep{gao2025role,yang2024doubly}. 
    
    Analogous to Lemma \ref{lem:ident1}, we extend the information from the observed group in the first step to the attrition group by relating the study distribution of nonconformity score $V_\mathcal{C} \mid R = 1$ to the target distribution $V_\mathcal{C} \mid R = 0$. This yields the following identification formula for the quantile of interest $\eta_{\gamma, \mathcal{C}}$ for ITE in the attrition group.

    \begin{lemma}\label{lem:ident2}
        Under Assumption \ref{asm2}, we have the coverage for the interval of ITE from the attrition group as 
        \begin{align*}
            1 - \gamma &= \mathbb{P}\left(\mathcal{C} \subset \check{\mathcal{C}}_{\text{ITE}}(X)\right)\\
            &= \mathbb{P}\left( V_\mathcal{C} < \eta_{\gamma, \mathcal{C}} \mid R = 0 \right)\\
            &= \mathbb{E}_{X, D} \left[ \mathbb{P}\left( V_\mathcal{C} < \eta_{\gamma, \mathcal{C}} \mid X, D, R = 1 \right) \mid R = 0 \right].
        \end{align*}
    \end{lemma}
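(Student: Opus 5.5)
The first two equalities are definitional, so the plan concentrates on the third, which is a tower-property computation combined with Assumption \ref{asm2}.

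\textbf{Definitional equalities.} The first equality is the coverage requirement \eqref{eq:ext} read as an equality at the nominal level, conditioned on the attrition group $R=0$. The second comes from the notation introduced at the end of the previous section: the nonconformity score $V_\mathcal{C}$ and the threshold $\eta_{\gamma,\mathcal{C}}$ are defined so that the events $\{\mathcal{C}\subset\check{\mathcal{C}}_{\text{ITE}}(X)\}$ and $\{V_\mathcal{C}<\eta_{\gamma,\mathcal{C}}\}$ coincide. Concretely, I would take $V_\mathcal{C}$ to be the joint endpoint score of \citet{lei2021conformala}: the larger of the amounts by which the lower and upper endpoints of $\mathcal{C}_{\text{ITE}}$ fall outside the fitted expansion bounds at $X$.

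\textbf{Tower property.} For the third equality I would condition on $(X,D)$ inside the $R=0$ population:
\[
\mathbb{P}(V_\mathcal{C}<\eta \mid R=0)=\mathbb{E}_{X,D}\left[\mathbb{P}(V_\mathcal{C}<\eta\mid X,D,R=0)\mid R=0\right].
\]
It then remains to show that the inner conditional probability is the same under $R=0$ and $R=1$.

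\textbf{Main obstacle: swapping $R=0$ for $R=1$.} The difficulty is that $V_\mathcal{C}$ is a function of $(X,D,Y^{\text{obs}})$ through $\mathcal{C}_{\text{ITE}}$, and $Y^{\text{obs}}$ is not observed when $R=0$. I would therefore interpret $V_\mathcal{C}$ for attrited units as the score evaluated at the surrogate interval $\mathcal{C}_{\text{ITE}}(X;D,Y(D))$, built from the latent potential outcome $Y(D)$. This is exactly the object whose coverage the Bonferroni argument requires.

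With that interpretation, $V_\mathcal{C}=g(X,D,Y(1),Y(0))$ for a function $g$ that is fixed given the first-stage fits. This relies on sample splitting: the quantile and propensity models are estimated on a separate fold, so $g$ is independent of the evaluation data. Assumption \ref{asm2} gives $(Y(1),Y(0))\ind R\mid X,D$. Hence the conditional law of $g(X,D,Y(1),Y(0))$ given $(X,D,R)$ does not depend on $R$, and
\[
\mathbb{P}(V_\mathcal{C}<\eta\mid X,D,R=0)=\mathbb{P}(V_\mathcal{C}<\eta\mid X,D,R=1).
\]

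\textbf{Overlap.} Assumption \ref{asm3} ensures $e_R(X,D)\in(c,1-c)$. This makes the $R=1$ conditional well defined on the support of $(X,D)\mid R=0$, so the substitution holds almost surely under the target distribution.

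\textbf{Conclusion.} Substituting the equality of conditionals into the tower-property display yields the stated identification formula. Its right-hand side involves only the source-group conditional law of $V_\mathcal{C}$, averaged over the target covariate–treatment distribution, in parallel with Lemma \ref{lem:ident1}.
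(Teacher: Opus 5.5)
Your proposal is correct and takes the same route as the paper: the paper's proof applies the law of iterated expectations and then uses MAR to replace $R=0$ by $R=1$ inside the conditional probability, written out in the first two lines of the display in the proof of Lemma~\ref{lem:cov}. Your additional steps make explicit what the paper leaves implicit: reading $V_\mathcal{C}$ for attrited units through the latent $Y(D)$, treating the score map as fixed given the fitted nuisances, and using overlap so that the $R=1$ conditional is defined on the support of $(X,D)\mid R=0$.
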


    \begin{proof}
        See Section \ref{sec:pflem12} for the proof of Lemma \ref{lem:ident2}.
    \end{proof}

    Similarly, this identification formula guarantees the constructed prediction intervals $\check{\mathcal{C}}_{\text{ITE}}(X)$ achieve the desired coverage in the attrition group and motivates the EIF for $\eta_{\gamma, \mathcal{C}}$.

    \begin{theorem}\label{thm:eifint}
        Suppose $\mathbb{E}\left[ \frac{\left( \mathbb{P}\left( R = 0 \mid X, D \right)^2 \right)}{\mathbb{P}\left( R = 1 \mid X, D \right)} \right]$ is finite and the density of the conditional distribution of $V_\mathcal{C} \mid R = 0$ at $\eta_{\gamma}$ is bounded away from zero. Then under Assumption \ref{asm2} and \ref{asm3}, the efficient influence function of the conditional quantile $\eta_{\mathcal{C}, \gamma}$ is given up to a proportionality constant by
        \begin{equation}
            \begin{aligned}
                \psi_\mathcal{C}\left( \eta_{\gamma, \mathcal{C}}, X; m_{\mathcal{C}}, \pi_R \right) &= (1 - R) \left[ m_{\mathcal{C}}\left( \eta_{\gamma, \mathcal{C}}, X, D  \right) - (1 - \gamma) \right] \\
                &+ \frac{R}{\pi_R(X, D)} \left[ \mathds{1}_{\left\{ V_\mathcal{C} < \eta_{\gamma, \mathcal{C}} \right\}} - m_\mathcal{C}\left( \eta_{\gamma, \mathcal{C}}, X, D \right) \right],
            \end{aligned}
        \end{equation}
        where 
        \begin{align*}
            m_\mathcal{C}\left( \eta_{\gamma, \mathcal{C}}, X, D \right) &\coloneq \mathbb{E}\left[ \mathds{1}_{\left\{ V_\mathcal{C} < \eta_{\gamma, \mathcal{C}} \right\}} \mid X, D, R = 1 \right]\\
            \pi_R(X, D) &\coloneq \frac{\mathbb{P}(R = 1 \mid X, D)}{\mathbb{P}(R = 0 \mid X, D)}.
        \end{align*}
    \end{theorem}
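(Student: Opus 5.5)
The plan is to run the standard pathwise-derivative argument for a quantile defined implicitly by a moment condition. By Lemma \ref{lem:ident2}, $\eta_{\gamma,\mathcal{C}}$ is the root of
\[
\Phi(\eta; P) \coloneq \mathbb{E}\bigl[(1-R)\{m_\mathcal{C}(\eta,X,D) - (1-\gamma)\}\bigr] = 0 .
\]
Here $m_\mathcal{C}(\eta,X,D)=\mathbb{P}(V_\mathcal{C}<\eta\mid X,D,R=1)$; this holds because, under Assumption \ref{asm2}, the law of $V_\mathcal{C}$ given $(X,D)$ is the same for $R=0$ and $R=1$. The surrogate interval $\mathcal{C}_{\text{ITE}}$, and hence $V_\mathcal{C}$, is treated as a fixed function of the observed data $(X,D,R,RY^{\text{obs}})$, so the target is a functional of the observed-data law alone. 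I would take a regular parametric submodel $P_t$ through $P$ with score $S(O)$. Factorize the likelihood as
\[
p(x,d)\,p(r\mid x,d)\,p(v\mid x,d,r=1)^{r},
\]
which gives $S = S_{X,D} + S_{R\mid X,D} + R\,S_{V\mid X,D,R=1}$, with each component mean zero given its conditioning variables.

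Next, apply the implicit function theorem. Differentiating $\Phi(\eta_t;P_t)=0$ at $t=0$ gives
\[
\partial_\eta\Phi\cdot\dot\eta + \partial_t\Phi(\eta;P_t)\big|_{t=0}=0 .
\]
The factor $\partial_\eta\Phi = \mathbb{P}(R=0)\,f_{V_\mathcal{C}\mid R=0}(\eta_{\gamma,\mathcal{C}})$ is nonzero by the density assumption, so it is the proportionality constant that the statement allows us to drop. It therefore suffices to find a mean-zero $\psi$ with $\partial_t\Phi=\mathbb{E}[\psi S]$. I would split $\partial_t\Phi$ into two pieces.

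\textbf{(a) Perturbation of the $(X,D,R)$ law with $m_\mathcal{C}$ held fixed.} This contributes
\[
\mathbb{E}\bigl[(1-R)\{m_\mathcal{C}-(1-\gamma)\}\,(S_{X,D}+S_{R\mid X,D})\bigr].
\]
Since the integrand's mean is zero at the truth, this equals $\mathbb{E}\bigl[(1-R)\{m_\mathcal{C}-(1-\gamma)\}\,S\bigr]$. The remaining score component $R\,S_{V}$ is orthogonal to it, because $(1-R)R=0$. This yields the first term of $\psi_\mathcal{C}$.

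\textbf{(b) Perturbation of $m_\mathcal{C}$ through $p(v\mid x,d,r=1)$.} This contributes
\[
\mathbb{E}\Bigl[\mathbb{P}(R=0\mid X,D)\,\mathbb{E}\bigl[(\mathds{1}_{\{V_\mathcal{C}<\eta\}}-m_\mathcal{C})\,S_V\mid X,D,R=1\bigr]\Bigr].
\]
I would rewrite the outer expectation over the $R=1$ subpopulation by inserting $R/\mathbb{P}(R=1\mid X,D)$, which turns it into
\[
\mathbb{E}\Bigl[\tfrac{R}{\pi_R(X,D)}\,(\mathds{1}_{\{V_\mathcal{C}<\eta\}}-m_\mathcal{C})\,S\Bigr].
\]
The switch from $S_V$ to the full score $S$ is valid because the residual has conditional mean zero given $(X,D,R=1)$. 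This gives the second term of $\psi_\mathcal{C}$. Assumption \ref{asm3} and the finiteness of $\mathbb{E}[\mathbb{P}(R=0\mid X,D)^2/\mathbb{P}(R=1\mid X,D)]$ together ensure that this term is square-integrable, so $\psi_\mathcal{C}\in L_2$.

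Finally, I would check that $\psi_\mathcal{C}$ has mean zero at the truth: the first term has mean zero by Lemma \ref{lem:ident2}, and the second by iterated expectations. The model is nonparametric for the observed data, apart from MAR, which places no restriction on the observed-data law. Its tangent space is therefore all of $L_2^0$, so the unique gradient is the efficient influence function. Dividing by $-\partial_\eta\Phi$ recovers the EIF exactly; up to proportionality it equals $\psi_\mathcal{C}$.

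The main obstacle I expect is step (b). It requires differentiating under the indicator and handling the fact that $V_\mathcal{C}$ depends on first-stage nuisance estimates. I would sidestep the latter by conditioning on the first-stage fold, as in the split construction, so that $V_\mathcal{C}$ is a fixed measurable map and the argument mirrors the proof of Theorem \ref{thm:eifcf}, with $(X,D)$ in place of $X$ and $R$ in place of $D$.
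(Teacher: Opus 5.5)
Your proposal is correct and follows essentially the same route as the paper. It differentiates the identifying moment condition $\mathbb{E}[(1-R)\{m_\mathcal{C}(\eta,X,D)-(1-\gamma)\}]=0$ along a submodel and splits the derivative into three parts (the outer $(X,D,R)$ perturbation, the perturbation of $m_\mathcal{C}$ rewritten via $R/\mathbb{P}(R=1\mid X,D)$ to give the $R/\pi_R$ term, and the $\partial_\eta\Phi\cdot\dot\eta$ term absorbed into the proportionality constant), then reads off $\psi_\mathcal{C}$ by Riesz representation; your factorized scores and explicit checks of mean zero, square integrability and a saturated tangent space are a tidier version of the paper's $(1+ts)P$/Bayes'-rule computation of its terms I--III, and your $\partial_\eta\Phi=\mathbb{P}(R=0)f_{V_\mathcal{C}\mid R=0}(\eta_{\gamma,\mathcal{C}})$ correctly carries the factor $1-R$ where the paper's term III writes $R$.
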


    \begin{proof}
        See Section \ref{sec:pfeifint} for the proof of Theorem \ref{thm:eifint}.
    \end{proof}

    Lemma \ref{lem:ident2} and Theorem \ref{thm:eifint} differ from Theorem 4 in \citet{gao2025role} in one key point. Under the MAR assumption, we do not require the conditional independence $\mathbb{P}\left( D = d \mid X, R = 1 \right) = \mathbb{P}\left( D = d \mid X, R = 0 \right)$ of the treatment and the missingness pattern.\footnote{By contrast, \citet{gao2025role} impose the conditional independence when deriving their identification formula. If $\mathbb{P}\left( D = d \mid X, R = 1 \right) = \mathbb{P}\left( D = d \mid X, R = 0 \right)$, Lemma \ref{lem:ident2} reduces to $1 - \gamma = \mathbb{E}_X\left[ \mathbb{P}\left( V_\mathcal{C} < \eta_{\gamma, \mathcal{C}} \mid X, R = 1 \right) \mid R = 0 \right]$, which is the result in \citet{gao2025role}. However, this requires the stronger assumption that attrition is conditionally independent of treatment status.}

    In practice, we can identify $\eta_{\gamma, \mathcal{C}}$ by finding the smallest value that satisfies the sample moment conditions for the EIF
    \begin{align*}
        \frac{1}{N} \sum_{i = 1}^{N} \psi_\mathcal{C}\left( \hat{\eta}_{\gamma, \mathcal{C}}, X_i; \hat{m}_\mathcal{C}, \hat{\pi}_R \right) \geq 0.
    \end{align*}

    \subsection{Estimation Algorithm}

    \tikzstyle{block} = [rectangle, minimum width=2cm, minimum height=1cm, text centered, draw=black, line width=1pt]
    \tikzstyle{param} = [rectangle, minimum width=2cm, minimum height=1cm, text centered, line width=1pt]
    \tikzstyle{arrow} = [thick,->,>=stealth]
    \tikzstyle{arrow1} = [thick]
    \tikzstyle{edge from parent}=[->,>=stealth, thick,draw]

        \begin{figure}
            \centering
            \caption{Workflow of Algorithm}
            \label{fig:algflow}
            \begin{threeparttable}
                \makebox[\textwidth][c]{
                \begin{tikzpicture}[scale=0.85, transform shape, node distance=2cm, auto, edge from parent fork down]
                    \tikzstyle{level 1}=[sibling distance=60mm,level distance=15ex] 
                    \tikzstyle{level 2}=[sibling distance=30mm,level distance=15ex] 
    
                    %\draw [thick, black,
                    %        decorate, 
                    %        decoration = {calligraphic brace,
                    %            raise=5pt,
                    %            amplitude=5pt}] (-9,-8.5) --  (-9, 0)
                    %node[pos=0.5,left=10pt,black]{Step I};
    %
                    %\draw [thick, black,
                    %        decorate, 
                    %        decoration = {calligraphic brace,
                    %            raise=5pt,
                    %            amplitude=5pt}] (-9,-19.5) --  (-9,-11)
                    %node[pos=0.5,left=10pt,black]{Step II};

                    \node (input) [block] {Data $(X, Y, D, R)$}
                    child{node (pretrain) [block] {Pretraining}
                        child{node (nuis) [param] {$\hat{q}_{Y(d)}, \hat{\pi}_D(X), \hat{e}_R(X, d)$}}
                        %child{node (ps) [param] {$\hat{e}_D(X), \hat{e}_R(D, X)$}}
                    }
                    child{node (train) [block] {Training}
                        child{node (fold1) [block] {Fold 1}}
                        child{node (fold2) [block] {Fold 2}}
                    }
                    child{node (calibration) [block] {Calibration}
                        child{node (Q) [param] {$V_d$}}
                        child{node (q) [param] {$\hat{\eta}_{\alpha, d}$}}
                    };
                    \node (int) [block, below of = train, yshift = -4cm] {Prediction Intervals for Counterfactuals and ITE with $R = 1$};

                    \node (R1) [block, below of = int, xshift = -4cm] {Observed Group ($R = 1$)}
                    child {node (R1tr) [block] {Training}
                        [sibling distance=10mm]
                        child {node (nuisance) [param] {$\hat{h}_{\mathcal{C}}^\text{L}, \hat{h}_{\mathcal{C}}^\text{R}, \hat{\pi}_R(X, D), \hat{m}_{\mathcal{C}}$}}
                        %child {node (mint) [param] {$\hat{m}_{\mathcal{C}}$}}
                    }
                    child {node (R1cal) [block] {Calibration}};
                    \node (R0) [block, right of = R1cal, xshift = 3cm] {Attrition Group ($R = 0$)};
                    \node (QC) [param, below of = R1cal, xshift = 1cm, yshift = -1.5cm] {$V_{\mathcal{C}}$};
                    \node (qC) [param, below of = R0, xshift = -1cm, yshift = -1.5cm] {$\hat{\eta}_{\gamma, \mathcal{C}}$};
                    
                    \node (ITEint) [block, below of = int, yshift = -9cm] {Prediction Intervals of ITE with $R = 0$};
    
                    \draw [arrow, dashed] (nuis.south) |- ++(0, -0.4) -| (fold1.south);
                    \draw [arrow, dashed] (fold1) -- node[anchor = north]{$\hat{\eta}_{\alpha, d}^{\text{init}}$} (fold2);
                    \coordinate (aux1) at ($(calibration.west) - (1.5, 0)$);
                    %\coordinate (vertmid) at ($(calibration.west |- aux1)$);
                    \draw [arrow, dashed] (fold2.east) -| node[anchor = east, pos = 0.75] {$\hat{m}_d$} (aux1)   -- (calibration.west);
                    \coordinate (aux2) at ($(calibration.west) - (0.9, 0)$);
                    \draw [arrow1, dashed] (nuis.south) |- ++(0, -1) -| (aux2);
                    %\coordinate (aux3) at ($(calibration.west) - (1.2, 0)$);
                    %\draw [arrow1, dashed] (nuis.south) |- ++(0, -0.7) -| (aux3);
                    \draw [arrow, dashed] (Q) -- (q);
                    \path (calibration) -- coordinate [midway] (eif) (q);
                    \node [above] at (eif) {$\psi_d$};
                    \coordinate (aux4) at ($(calibration.south) - (0, 3)$);
                    \draw [arrow1] (Q.south) |- (aux4);
                    \draw [arrow1] (q.south) |- (aux4);
                    \coordinate (aux5) at ($(int.north) - (0, -0.8)$);
                    \draw [arrow] (aux4) |- (aux5) -- (int);
                    \draw [arrow] (int.south) |- ++(0, -0.5) -| (R1.north);
                    \draw [arrow] (input.east) -| ++(7, 0) |- (R0.east);
                    \draw [arrow, dashed] (nuisance.east) -| ++(1.4, 0) |- (R1cal.west);
                    \path (R1cal.south) -- coordinate [midway] (aux6) (R0.south);
                    \coordinate (aux7) at ($(aux6) - (0, 1)$);
                    \draw [arrow1] (R1cal.south) |- (aux7);
                    \draw [arrow1] (R0.south) |- (aux7);
                    \draw [arrow] (aux7) |- ++(0, -0.7) -| (QC.north);
                    \draw [arrow] (aux7) |- ++(0, -0.7) -| (qC.north);
                    \path (aux7) -- coordinate [midway] (psiC) (qC.north);
                    \node [above] at (psiC) {$\psi_{\mathcal{C}}$};
                    \draw [arrow, dashed] (QC.east) -- (qC.west);
                    \coordinate (aux8) at ($(aux7) - (0, 3)$);
                    \draw [arrow1] (QC.south) |- (aux8);
                    \draw [arrow1] (qC.south) |- (aux8);
                    \draw [arrow] (aux8) |- ++(0, -0.6) -| (ITEint.north);
    
                    \node (leftup) [left of = input, xshift = -7cm, yshift = 0.8cm] {};
                    \node (rightup) [right of = input, xshift = 7cm, yshift = 0.8cm] {};
                    \node (leftdown1) [left of = input, xshift = -7cm, yshift = -9.4cm] {};
                    \node (rightdown1) [right of = input, xshift = 7cm, yshift = -9.4cm] {};
    
                    \node (leftdown2) [left of = input, xshift = -7cm, yshift = -10cm] {};
                    \node (rightdown2) [right of = input, xshift = 7cm, yshift = -10cm] {};
                    \node (leftbottom) [left of = input, xshift = -7cm, yshift = -20.5cm] {};
                    \node (rightbottom) [right of = input, xshift = 7cm, yshift = -20.5cm] {};
    
                    \begin{scope}[on background layer]
                    \node[
                    fill=gray!10,
                    rounded corners,
                    inner xsep=16pt,
                    inner ysep=0pt,
                    fit=(leftup) (rightup) (leftdown1) (rightdown1),
                    label={[anchor=north west,font=\bfseries]north west:Step I}
                    ] {};
                    \end{scope}
    
                    \begin{scope}[on background layer]
                    \node[
                    fill=gray!20,
                    rounded corners,
                    inner xsep=16pt,
                    inner ysep=0pt,
                    fit=(leftdown2) (rightdown2) (leftbottom) (rightbottom),
                    label={[anchor=north west,font=\bfseries]north west:Step II}
                    ] {};
                    \end{scope}

                \end{tikzpicture}
                }
            \begin{tablenotes}
                \footnotesize
                \setlength\labelsep{0pt}
                \item \textit{Note}: This figure illustrates the overall workflow of the proposed algorithm for constructing prediction intervals for ITE with attrition. The light shaded part represents Step I, which constructs prediction intervals for counterfactuals and ITE within the observed group using semiparametric efficient estimators. The dark shaded part represents Step II, which extrapolates the prediction intervals of ITE to the attrition group using the interval expansion approach with semiparametric efficient estimators.
            \end{tablenotes}
            \end{threeparttable}

        \end{figure}
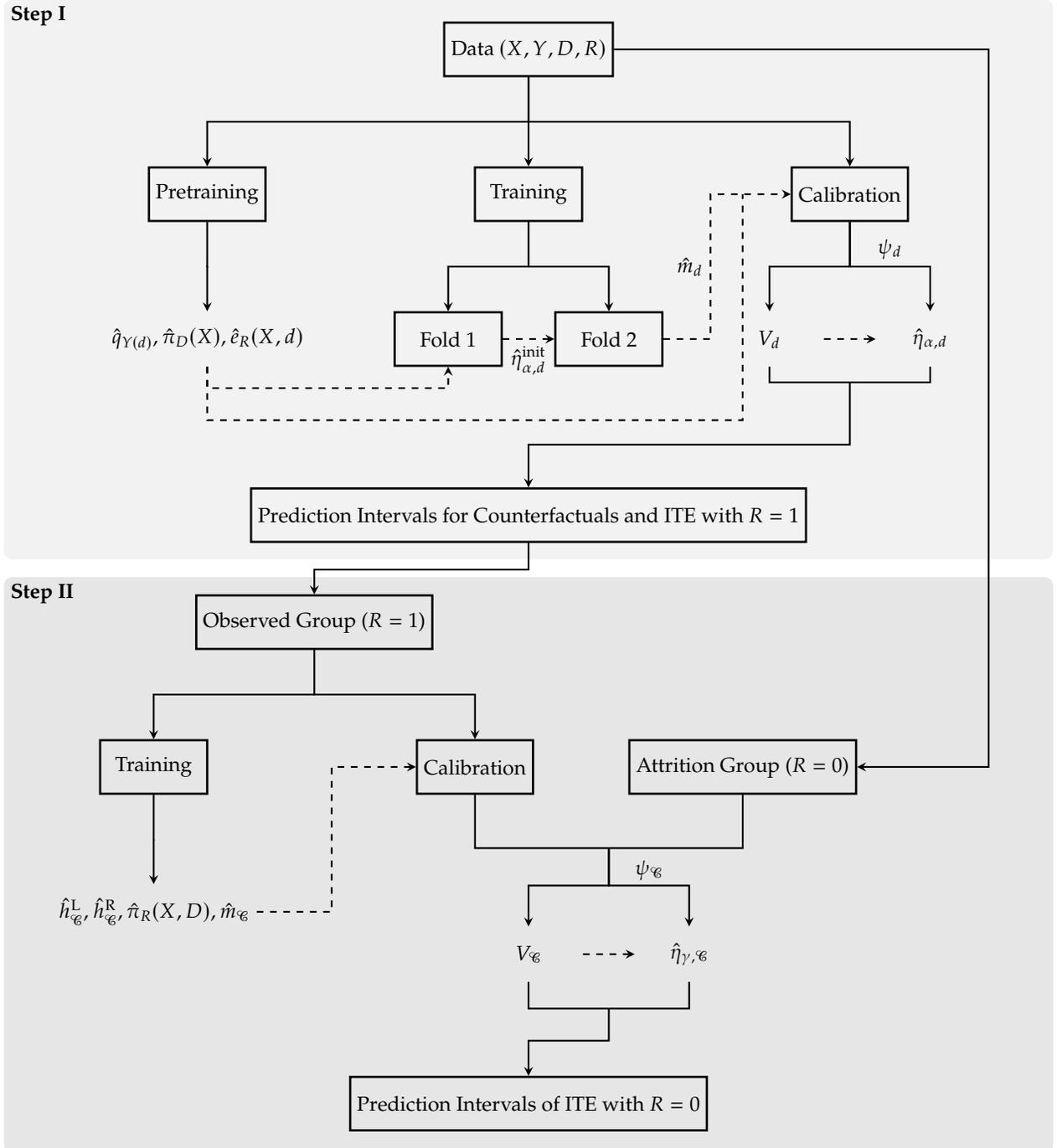
        
    Now that we have discussed how to obtain the EIFs, we describe the algorithm used. Figure \ref{fig:algflow} illustrates the overall workflow, and Algorithm \ref{alg:iteattreif} sketches the implementation details. To construct prediction intervals for ITE in the attrition group, we adopt the split conformal inference \citep{lei2014distributionfree} in combination with the exact nested approach for interval estimation of ITE \citep{lei2021conformala}. The procedure consists of two main steps. First, following Theorem \ref{thm:eifcf}, we use data from the observed group to construct prediction intervals for counterfactuals and ITE. Second, following Theorem \ref{thm:eifint}, we take the interval estimates of the ITE from the first step and use them to construct prediction intervals for the ITE in the attrition group. %of the split conformal inference and exact nested framework for constructing ITE prediction intervals under attrition, using semiparametric efficient estimators.  

    We describe the full algorithm as follows. As shown by the light shaded part in Figure \ref{fig:algflow}, Step 1 begins by splitting the data into three folds: pretraining ($\mathcal{Z}_{\text{pr}}$), training ($\mathcal{Z}_{\text{tr}}$), and calibration ($\mathcal{Z}_{\text{ca}}$). This differs from the original split conformal inference procedure of \citet{lei2014distributionfree} and the method proposed by \citet{gao2025role}, which uses a two-fold random split and trains the learner on the first fold. In our setting, identifying the semiparametric efficient estimator $\eta_{\alpha, d}$ using the EIF requires fitting the model on a hold-out fold for nuisance parameter estimation. A two-fold split risks overfitting the nuisance parameter estimators, which undermines both predictive performance and EIF identification. 
    
    On $\mathcal{Z}_{\text{pr}}$, we train the conditional quantile models $\hat{q}_\beta(\cdot)$ and learners for the treatment propensity score $\hat{\pi}_D(X)$ and the attrition propensity score $\hat{e}_R(d, X)$. We then evaluate the nonconformity score $V_i$ for each observation in $\mathcal{Z}_{\text{tr}} \cup \mathcal{Z}_{\text{ca}}$ following CQR: $V_i = \max\left\{ \hat{q}_{\alpha_{\text{lo}}}(X_i; \mathcal{Z}_{\text{pr}}) - Y_i, Y_i - \hat{q}_{\alpha_{\text{hi}}}(X_i; \mathcal{Z}_{\text{pr}}) \right\}$ to identify $\eta_{\alpha, d}$. Following \citet{gao2025role}, we further split the training fold into two subfolds $\mathcal{Z}_{\text{tr, 1}}$ and $\mathcal{Z}_{\text{tr, 2}}$ to reduce the computational burden of estimating the full conditional distribution $m_d\left( \eta_{\alpha, d}, X \right)$ for root finding. We also adopt the localized debiased machine learning approach of \citet{kallus2024localized}: $\mathcal{Z}_{\text{tr, 1}}$ is used to construct an initial estimator $\hat{\eta}_{\alpha, d}^{\text{init}}$ of $\eta_{\alpha, d}$, and $\mathcal{Z}_{\text{tr, 2}}$, is then used to train the learner for $\hat{m}_{d}\left( \hat{\eta}_{\alpha, d}^{\text{init}}, X \right)$.

    Then, on $\mathcal{Z}_{\text{ca}}$, we identify the semi-parametric efficient estimator $\hat{\eta}_{\alpha, d}$ by finding the smallest values that satisfy 
    \begin{align*}
        \sum_{i \in \mathcal{I}_{\text{ca}}} \psi_d\left( \hat{\eta}_{\alpha, d}, X_i; \hat{m}_d, \hat{e}_R, \hat{\pi}_D \right) \geq 0,
    \end{align*}
    where $\hat{m}_d$ is trained on $\mathcal{Z}_{\text{tr, 2}}$ and $\hat{e}_R$ and $\hat{\pi}_D$ are trained on $\mathcal{Z}_{\text{pr}}$. We then can construct the prediction interval for counterfactuals as 
    \begin{align*}
        \mathcal{C}_{d}(X) = [\hat{Y}_i^{\text{L}}(d), \hat{Y}_i^{\text{R}}(d)] = [\hat{q}_{\alpha_{\text{lo}}}(X_i; \mathcal{Z}_{\text{pr}}) - \hat{\eta}_{\alpha, 1 - d}, \hat{q}_{\alpha_{\text{hi}}}(X_i; \mathcal{Z}_{\text{pr}}) + \hat{\eta}_{\alpha, 1 - d}].
    \end{align*}
    Leveraging these prediction intervals for counterfactuals, we can construct the prediction interval for ITE in the source group as in Equation \eqref{eq:iteintR1}
    \begin{align*}
        \mathcal{C}_{\text{ITE}}(X) = \begin{cases}
            Y^{\text{obs}} - \mathcal{C}_0(X) & \text{if } D = 1\\
            \mathcal{C}_1(X) - Y^{\text{obs}} & \text{if } D = 0
        \end{cases}
    \end{align*}

    Step 2 shown in the dark shaded area aims to construct the prediction intervals for ITE in the attrition group. We take all observations with observed outcomes and prediction intervals for ITE from $\mathcal{Z}_{\text{ca}} \equiv \left( X_i, Y_i, D_i, R_i, \mathcal{C}_i \right)$ in Step 1 and treat them as the observed data $\mathcal{Z}_{\text{obs}}$. This set is randomly split into a training fold $\mathcal{Z}_{\text{obstr}}$ and a calibration fold $\mathcal{Z}_{\text{obsca}}$. On $\mathcal{Z}_{\text{obstr}}$, we train $\hat{h}^{\text{L}}(\cdot)$ and $\hat{h}^{\text{R}}(\cdot)$ to model the conditional mean of the lower and upper endpoints of $\mathcal{C}_i$, respectively. We also train the attrition propensity score $\hat{\pi}_R(X, D)$ and the conditional distribution $\hat{m}_{\mathcal{C}}\left( \eta_{\gamma, \mathcal{C}}, X, D \right)$. For each observation in $\mathcal{C}_{\text{obs}}$, we compute the nonconformity score $V_\mathcal{C} = \max\left\{ \hat{h}^{\text{L}}(X_i; \mathcal{Z}_{\text{obstr}}) - \mathcal{C}_i^{\text{L}}, \mathcal{C}_i^{\text{R}} - \hat{h}^{\text{R}}(X_i, \mathcal{Z}_{\text{obstr}}) \right\}$. Next, we combine the calibration fold $\mathcal{Z}_{\text{obsca}}$ with all observations from the attrition group $\mathcal{Z}_{\text{att}}$. This combination of $R = 1$ and $R = 0$ explicitly addresses the covariate shift between the observed and attrition groups. Finally, for observations from $\mathcal{Z}_{\text{obsca}} \cup \mathcal{Z}_{\text{att}}$, we identify the semi-parametric efficient estimator $\hat{\eta}_{\gamma, \mathcal{C}}$ by finding the smallest value that satisfies
    \begin{align*}
        \sum_{i \in \mathcal{I}_{\text{obsca}} \cup \mathcal{I}_{\text{att}}} \psi_\mathcal{C}\left( \hat{\eta}_{\gamma, \mathcal{C}}, X_i; \hat{m}_\mathcal{C}, \hat{\pi}_R \right) \geq 0,
    \end{align*}
    where $\hat{m}_{\mathcal{C}}$ and $\hat{\pi}_R$ are trained on $\mathcal{Z}_{\text{obstr}}$. Finally, we can construct the prediction interval for ITE in the target group with attrition as
    \begin{align*}
        \check{\mathcal{C}}_{\text{ITE}}(x) = \left[\hat{h}^{\text{L}}(x; \mathcal{Z}_{\text{obstr}}) - \hat{\eta}_{\gamma, \mathcal{C}}, \hat{h}^{\text{R}}(x; \mathcal{Z}_{\text{obstr}}) + \hat{\eta}_{\gamma, \mathcal{C}}\right].
    \end{align*}

    \subsection{Asymptotic Coverage of the Prediction Interval for ITE with Attrition}

    Our choice to use semiparametric efficient estimators for constructing prediction intervals for the ITE under attrition is motivated by two considerations. First, the EIF approach allows us to address covariate shift in both steps of the procedure. Second, semiparametric efficiency ensures that the resulting estimator achieves the desired asymptotic coverage for the prediction interval, in contrast to the finite-sample, nonasymptotic coverage guarantees obtained from the weighted split-CQR approach.

    To establish the asymptotic coverage of the prediction interval for the ITE under attrition, we must ensure that both semiparametric efficient estimators $\eta_{\alpha, d}$ and $\eta_{\gamma, \mathcal{C}}$ possess the requisite asymptotic properties. These properties guarantee that the constructed prediction intervals achieve the desired asymptotic coverage. Theorem 3 of \citet{gao2025role} formally states the asymptotic properties of $\eta_{\alpha, d}$ and demonstrates the resulting asymptotic coverage of the ITE prediction intervals for the observed group. Following the same reasoning, we establish the asymptotic property of $\eta_{\gamma, \mathcal{C}}$ and the corresponding asymptotic coverage for the ITE prediction intervals in the attrition group. See Appendix \ref{sec:asym} for the regularity conditions, theorem, and technical details.

    Theorem \ref{thm:asyite} establishes a coverage guarantee for the ITE prediction interval in the attrition group. The deviation from the nominal coverage is governed by the sum of two terms: (i) a term of order $O\left( N^{-1 / 2} \right)$ when the data is split into two folds of similar size, and (ii) the product bias from the estimation of nuisance parameters $\pi_R(X, D)$ and $m_\mathcal{C}\left( \eta_{\gamma, \mathcal{C}}, X, D \right)$, which is negligible if either $\norm{\hat{\pi}_R(X, D) - \pi_R(X, D)}_2 = o_p(1)$ or $\norm{\hat{m}_\mathcal{C}\left( \hat{\eta}_{\gamma, \mathcal{C}}, X, D \right) - m_\mathcal{C}\left( \hat{\eta}_{\gamma, \mathcal{C}}, X, D \right)}_2 = o_p(1)$. The first term comes from approximating $\mathbb{E}\left[ \psi_\mathcal{C} \right]$ with $\mathbb{P}_{\mathcal{I}_2}\left( \psi_\mathcal{C} \right)$. The second term comes from the double robustness property of the EIF $\psi_\mathcal{C}$. This rate double robustness property ensures that the asymptotic coverage is robust to small perturbations in the estimation of nuisance parameters, with estimation errors affecting coverage only through second-order terms \citep{chernozhukov2018double}.

    As shown by \citet{gao2025role}, prediction intervals of counterfactuals have asymptotic coverage of $(1 - \alpha)$. Also, Theorem \ref{thm:asyite} shows that the prediction intervals of ITE in the attrition group have asymptotic coverage of $(1 - \gamma)$. Then, by Bonferroni correction,
    \begin{equation}\label{eq:itecov}
        \begin{aligned}
            \mathbb{P}&\left( Y_1 - Y_0 \notin \check{\mathcal{C}}_{\text{ITE}}(X) \mid R = 0 \right)\\
            &\leq \mathbb{P}\left( Y_1 - Y_0 \notin \mathcal{C}_{\text{ITE}} \mid R = 0 \right) + \mathbb{P}\left( \mathcal{C}_{\text{ITE}} \not\subset \check{\mathcal{C}}_{\text{ITE}}(X) \mid R = 0 \right) \leq \alpha + \gamma + o_p(1), 
        \end{aligned}
    \end{equation}
    which shows that the probability of the ITE not being contained in the prediction interval $\check{\mathcal{C}}_{\text{ITE}}(X)$ is bounded above by $(\alpha + \gamma)$ up to a negligible term. This interval expansion function $\check{\mathcal{C}}_{\text{ITE}}$ for ITE prediction intervals in the attrition group is designed to jointly calibrate the upper and lower ends of the interval estimates of ITE $\mathcal{C}_{\text{ITE}}$ from the observed group, ensuring that the overall coverage is at least $1 - (\alpha + \gamma)$. Thus, we have 
    \begin{align*}
        \mathbb{P}\left( Y_1 - Y_0 \in \check{\mathcal{C}}_{\text{ITE}}(X) \mid R = 0 \right) \geq 1 - (\alpha + \gamma),
    \end{align*}
    holds asymptotically, which guarantees the desired coverage.

    \section{Simulation Studies}\label{sec:MCsim}
    
    In this section, we evaluate the performance of the proposed method through simulation studies. Following \citet{lei2021conformala}, we adapt a variant of the simulation setting in \citet{wager2018estimation}. The simulation design assesses the performance of ITE prediction intervals under attrition using two primary metrics. First, we examine the empirical marginal coverage and average length of the ITE prediction intervals in the attrition group across different learning algorithms to identify the best-performing learner. Second, we compare the proposed method against two alternative approaches: a parametric method -- multiple imputation using Amelia II -- and a nonparametric method -- weighted CQR with unweighted nested framework.

    The data generating process (DGP) is summarized as follows. The covariate vector $X = \left( X_1, \dots, X_k \right)^\top$ is an equicorrelated multivariate Gaussian vector with mean zero and $\Var(X_i) = 1$ and $\Cov(X_i, X_j) = \rho$ for $i \neq j$. When $\rho = 0$, the covariates are independent. When $\rho > 0$, the covariates are positively correlated. The potential outcomes are generated as follows:
    \begin{align*}
        Y_1 = f(X_1)f(X_2) + \epsilon &, \quad Y_0 = \epsilon\\
        f(x) = \frac{2}{1 + \exp\left( -12(x - 0.5) \right)}&, \quad \epsilon \sim \mathcal{N}(0, 1).
    \end{align*}
    %Following \citet{wager2018estimation}, we consider the homoscedastic noise. 
    The propensity score of treatment $e_D(X)$ is generated as 
    \begin{align*}
        e_D(X) = \frac{1}{4}\left( 1 + \beta_{2, 4}(X_1) \right),
    \end{align*}
    where $\beta_{a, b}$ is the CDF of the beta distribution with parameters $a$ and $b$. According to \citet{lei2021conformala}, this ensures that $e_D(X) \in [0.25, 0.5]$, thereby providing sufficient overlap between the treatment and control group that satisfies Assumption \ref{asm3}. The propensity score of attrition $e_R(X, D)$ is generated as
    \begin{align*}
        e_R(X, D) = \text{logit}^{-1}\left( -0.25 + 0.5D + 0.2X_1 - 0.3X_2 \right),
    \end{align*}
    which ensures the MAR assumption that the missingness is correlated with observed covariates. Throughout the simulation, we set the dimension of the covariates $k = 10$. We also consider another more complicated DGP. See Section \ref{sec:DGP2} for details.

    We are interested in evaluating the properties of the ITE prediction intervals in the attrition group. Corresponding to Table \ref{tab1}, we observe $\left( X, Y, D \right)$ in the observed group $(R = 1)$ while only observe $\left( X, D \right)$ in the attrition group $(R = 0)$. The nonconformity scores $V_{\alpha, d}$ and $V_{\gamma, \mathcal{C}}$ are constructed using conformalized quantile residuals from the quantile regression model. We use \texttt{quantreg} package in \texttt{R} following \citet{lei2021conformala} and set $\alpha_{\text{lo}} = \frac{\alpha}{2}, \alpha_{\text{hi}} = 1 - \frac{\alpha}{2}$ for $V_{\alpha, d}$ and set $\gamma_{\text{lo}} = \frac{\gamma}{2}, \gamma_{\text{hi}} = 1 - \frac{\gamma}{2}$ for $V_{\gamma, \mathcal{C}}$ as the lower and upper quantile for the nonconformity scores. We estimate the propensity scores and other nuisance parameters using Generalized Linear Model (glm), Lasso and Elastic-Net Regularized Generalized Linear Model (glmnet), Random Forest, Bayesian Additive Regression Trees (BART), and Extreme Gradient Boosting (XGBoost)  as base learners via \texttt{SuperLearner} package in R. We first split 20\% of the data as the pretraining fold, and then use 75\% of the remaining data as the training fold, as suggested by \citet{sesia2020comparison}. In the second step, we further split the observed data with prediction interval of ITE from the first step into two folds with the same size for training and calibration. 

    To evaluate the overall performance of conformal inference for ITE estimation under attrition across different learning algorithms, we conduct 100 Monte Carlo simulations for sample sizes $\left\{500, 1000, 5000 \right\}$. We compare five different learning algorithms under two levels of covariate correlation: $\rho = 0$ and $\rho = 0.9$. Following \citet{lei2021conformala}, we set $\alpha = \gamma = 0.025$ to construct ITE prediction intervals for the attrition group with nominal 95\% coverage. The empirical marginal coverage of the ITE is then estimated as
    \begin{align*}
        \frac{1}{|\mathcal{I}_{R = 0}|} \sum_{R = 0} \mathds{1}_{\left\{ Y_i(1) - Y_i(0) \in \check{\mathcal{C}}_{\text{ITE}}(X_i) \right\}}.
    \end{align*}
    As noted by \citet{lei2021conformala}, a reliable method should, at very least, achieve coverage close to or above the nominal level 0.95. For benchmarking purposes, we also compute the average length of the oracle intervals, defined by the true 0.025th and 0.975th conditional quantiles. In our DGP, the errors are normally distributed, so the expected length is given by $5.54 (\approx 2 \times 1.96 \times \sqrt{2})$.

    Figure \ref{fig:MCallcov1} presents the simulation results for the empirical coverage of ITE prediction intervals in the attrition group, constructed using conformal inference with semiparametric efficient estimator across different learners. XGBoost achieves the nominal 95\% coverage when the sample size is $N = 500$, but its coverage deteriorates for $N = 1000$ and $N = 5000$. The other four algorithms maintain better coverage than XGBoost and achieve the nominal level across all sample sizes. Among these, BART attains the highest coverage, reaching nearly 1.0 for $N = 1000$ and $N = 5000$. When the sample size is $N = 5000$, all these four algorithms yield coverage close to 1.0. The empirical coverage remains stable across different covariate correlations.

    \begin{figure}[h]
        \centering
        \caption{MC Simulation Results of Conformal Inference for ITE with Attrition}
        \label{fig:MCallcov1}
        \begin{threeparttable}
            \includegraphics[width=\textwidth]{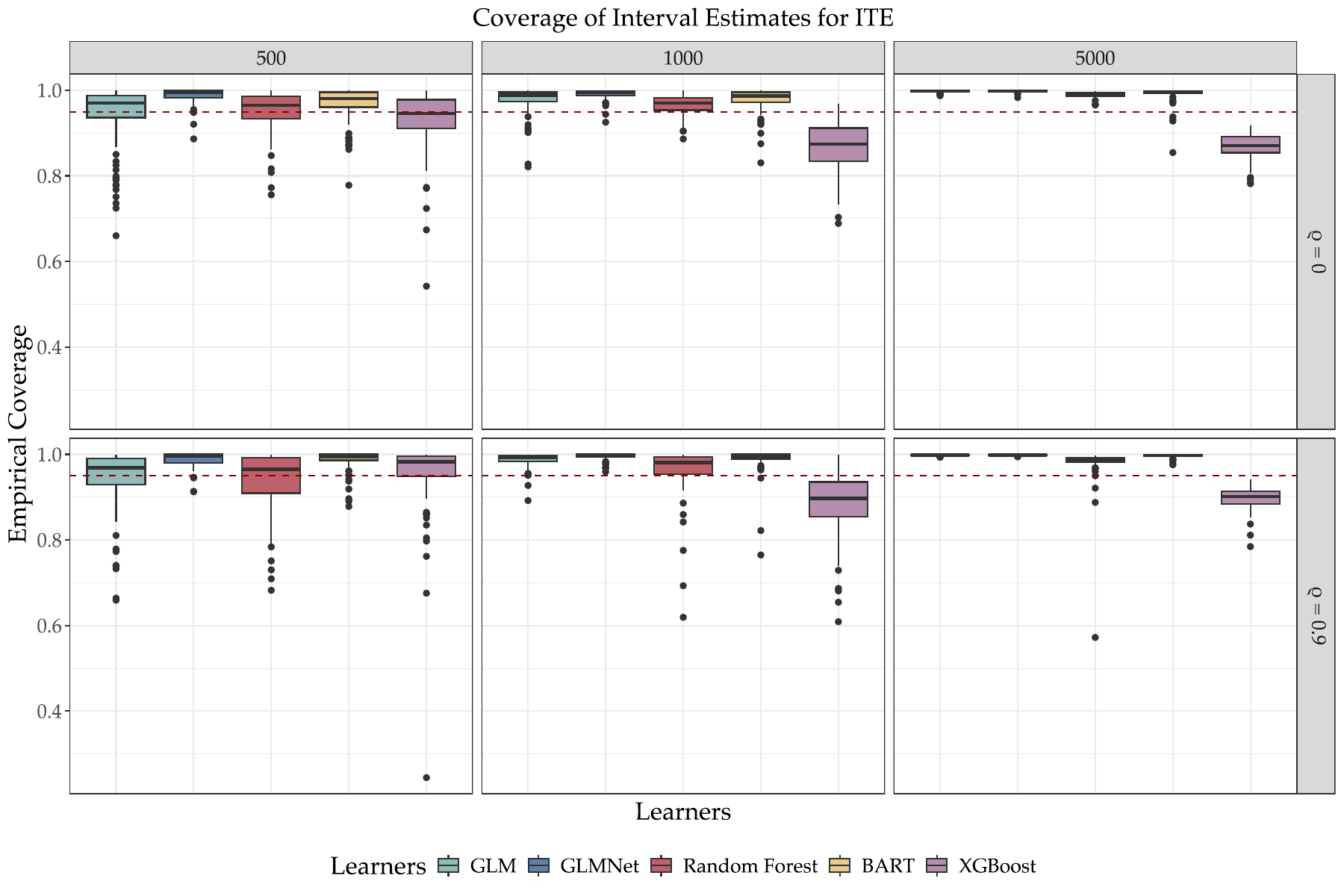}
            \begin{tablenotes}
                \footnotesize
                \setlength\labelsep{0pt}
                \item \textit{Note}: This figure shows the simulation results for the empirical coverage of prediction intervals constructed by semiparametric efficient estimator for ITE of attrition group following DGP1. The red horizontal line corresponds to the target coverage of $95\%$.
            \end{tablenotes}
        \end{threeparttable}
    \end{figure}
    
    Figure \ref{fig:MCalllen1} presents the simulation results for the average length of ITE prediction intervals in the attrition group, constructed using conformal inference with semiparametric efficient estimator across different learners. XGBoost produces the shortest intervals with poor empirical coverage. Among the remaining four algorithms, BART yields the longest intervals, consistent with its high empirical coverage. While all four achieve coverage above the nominal level, random forest offers the best balance between coverage and average length: it maintains nominal coverage while producing the shortest average interval length among these methods. Overall, the simulation results indicate that conformal inference with semiparametric efficient estimator yields ITE prediction intervals in the attrition group that achieve the desired empirical coverage and maintain reasonable interval lengths, regardless of whether the covariates are correlated or not.

    \begin{figure}[h]
        \centering
        \caption{MC Simulation Results of Conformal Inference for ITE with Attrition}
        \label{fig:MCalllen1}
        \begin{threeparttable}
            \includegraphics[width=\textwidth]{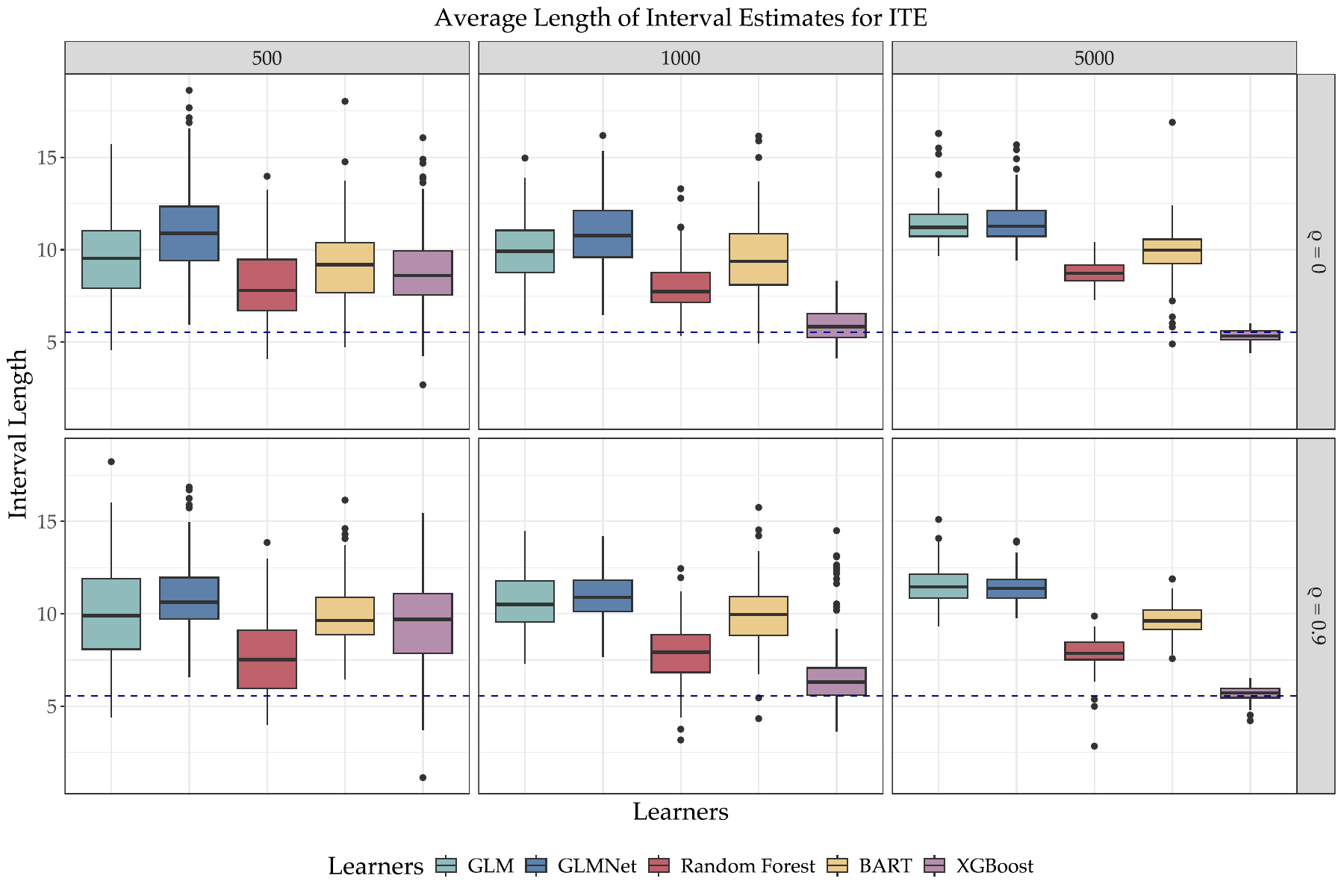}
            \begin{tablenotes}
                \footnotesize
                \setlength\labelsep{0pt}
                \item \textit{Note}: This figure shows the simulation results for the average length of prediction intervals constructed by semiparametric efficient estimator for ITE of attrition group following DGP1. The blue horizontal line corresponds to the length of oracle intervals.
            \end{tablenotes}
        \end{threeparttable}
    \end{figure}

    We further conduct a second simulation study to compare the performance of the proposed conformal inference with semiparametric efficient estimator with multiple imputation, which is widely used in political science studies, and the exact method with nested framework of conformal inference for interval outcomes proposed by \citet{lei2021conformala}. For multiple imputation, we use Amelia II, a fast algorithm using expectation-maximization (EM) with bootstrapping for multiple imputation of missing data \citep{honaker2011amelia,king2001analyzing}. Since the missingness induced by the fundamental problem of causal inference -- only one potential outcome is observed at one time -- and the missingness induced by experiment attrition are different theoretically, we use multiple imputation to impute one missingness type at a time. Specifically, we first impute the potential outcomes for the observed group, i.e., we use outcomes of observations from group $D = 1$ \& $R = 1$ to impute the counterfactuals $Y(1)$ of observations from group $D = 0$ \& $R = 1$. Similarly, we impute the counterfactuals $Y(0)$ of observations from group $D = 1$ \& $R = 1$ using outcomes of observations from group $D = 0$ \& $R = 1$. Then, we use outcomes of the observed group to impute the attrition group and construct the prediction interval for ITE in the attrition group using the corrected standard error. This procedure corresponds to the logic of the conformal inference algorithm described above. We also conduct two other multiple imputation practice for constructing the prediction interval of ITE with attrition. See Section \ref{sec:compDGP2} for the implementation details and simulation results.

    In this second simulation study, we choose random forest as the base learner for conformal inference as it performs comparatively best in the first study, and set $\alpha = \gamma = 0.025$ as well. We conduct 100 MC simulations for sample sizes among $\left\{ 500, 1000, 5000 \right\}$ and consider two scenarios of covariate correlation: uncorrelated with $\rho = 0$ and correlated with $\rho = 0.9$ \citep{lei2021conformala}. The empirical marginal coverage of conformal inference prediction interval is evaluated the same as in the first study.

    Figure \ref{fig:MCcomp1covDGP1} presents the coverage of prediction intervals for ITE with attrition constructed by three different methods. As expected, the weighted CQR with unweighted nested approach for interval estimates by \citet{lei2021conformala} has the highest coverage, almost 1 across all sample sizes and covariate correlations. Here we are using the exact method under the nested framework, which has been shown to have larger coverage in Figure \ref{fig:unwMC}. As mentioned above, this high coverage of \citet{lei2021conformala} is not surprising since this approach does not assume MAR, which cannot deal with the covariate shift problems from observed group to the attrition group. The multiple imputation method with Amelia has poor coverage across all sample sizes and covariate correlations. There could be several reasons underlying this low coverage. First, the imputation model in Amelia assumes that the complete data are multivariate normal. However, under our DGP, outcomes do not follow a normal distribution. Multiple imputation failed to tackle this distribution, which violates the normality assumption. Second, although Amelia also requires MAR assumption, it does not account for the two different covariate shift problems discussed before. Among all three methods, conformal inference with semiparametric efficient estimator performs the best. Although the coverage is around 95\% when the sample size is 500, it goes beyond nominal level when the sample size increases and is stable across different covariate correlations.

    \begin{figure}[h]
        \centering
        \caption{Comparison of Empirical Coverage of Prediction Intervals for ITE with Attrition}
        \label{fig:MCcomp1covDGP1}
        \begin{threeparttable}
            \includegraphics[width=\textwidth]{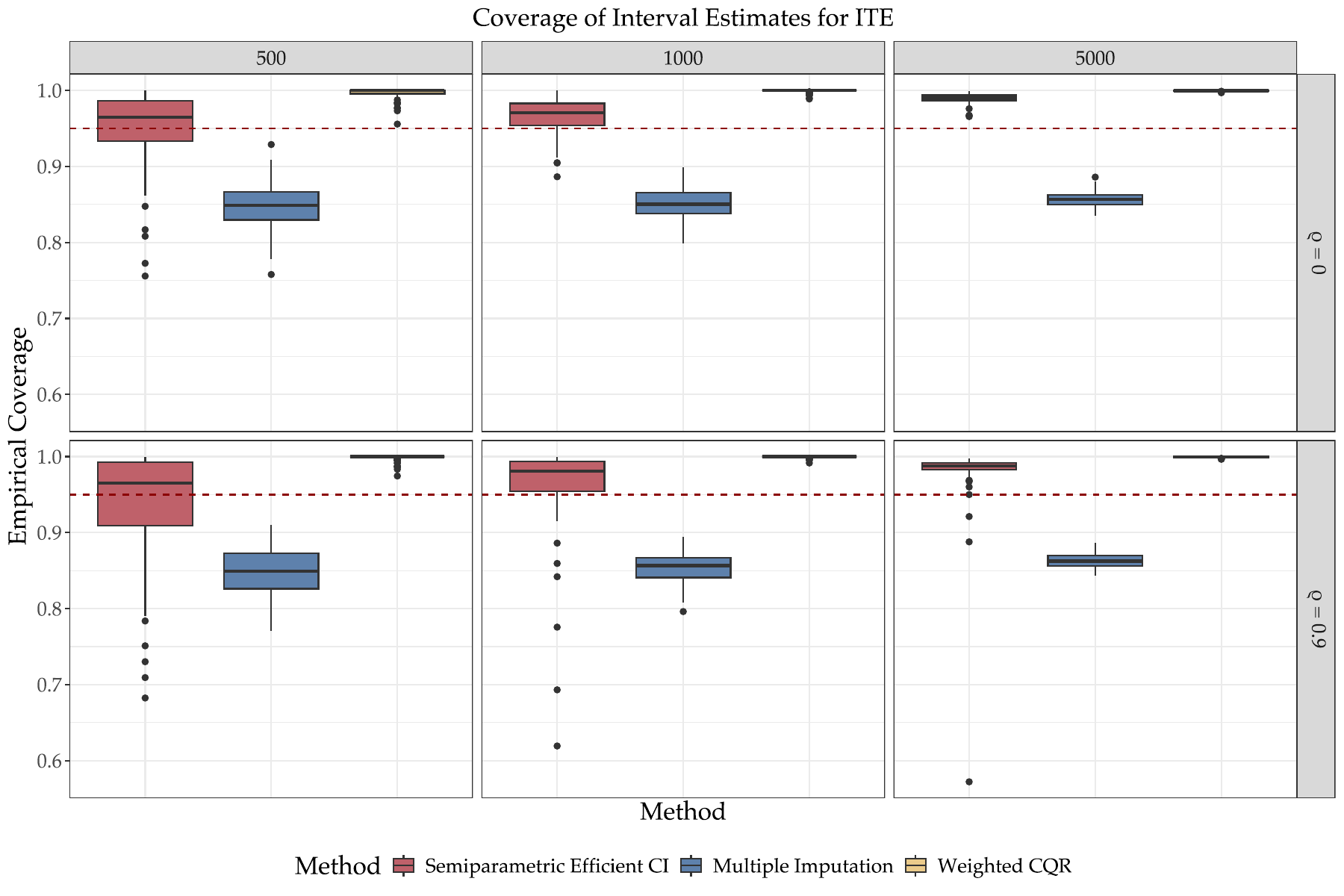}
            \begin{tablenotes}
                \footnotesize
                \setlength\labelsep{0pt}
                \item \textit{Note}: This figure shows the simulation results for the empirical coverage of prediction intervals constructed by conformal inference with semiparametric efficient estimator, multiple imputation with Amelia, and weighted CQR with unweighted nested approach for ITE of attrition group. The red horizontal line corresponds to the target coverage of $95\%$.
            \end{tablenotes}
        \end{threeparttable}
    \end{figure}

    Figure \ref{fig:MCcomp1lenDGP1} presents the average length of prediction intervals for ITE with attrition constructed by three different methods. Among three methods, the weighted CQR with unweighted nested approach has the widest prediction intervals, which is expected since it has the highest empirical coverage. However, such wide intervals are too conservative and not informative enough for inference, as they always cover the true ITE. The average length of intervals constructed by Amelia is the shortest, comparable with the length of oracle confidence intervals. However, this is misleading and not desirable since the empirical coverage is much lower than the nominal level. Notably, intervals constructed by conformal inference with semiparametric efficient estimator have the best performance across different sample sizes and covariate correlations, corresponding to the empirical coverage.

    \begin{figure}[ht]
        \centering
        \caption{Comparison of Average Length of Prediction Intervals for ITE with Attrition}
        \label{fig:MCcomp1lenDGP1}
        \begin{threeparttable}
            \includegraphics[width=\textwidth]{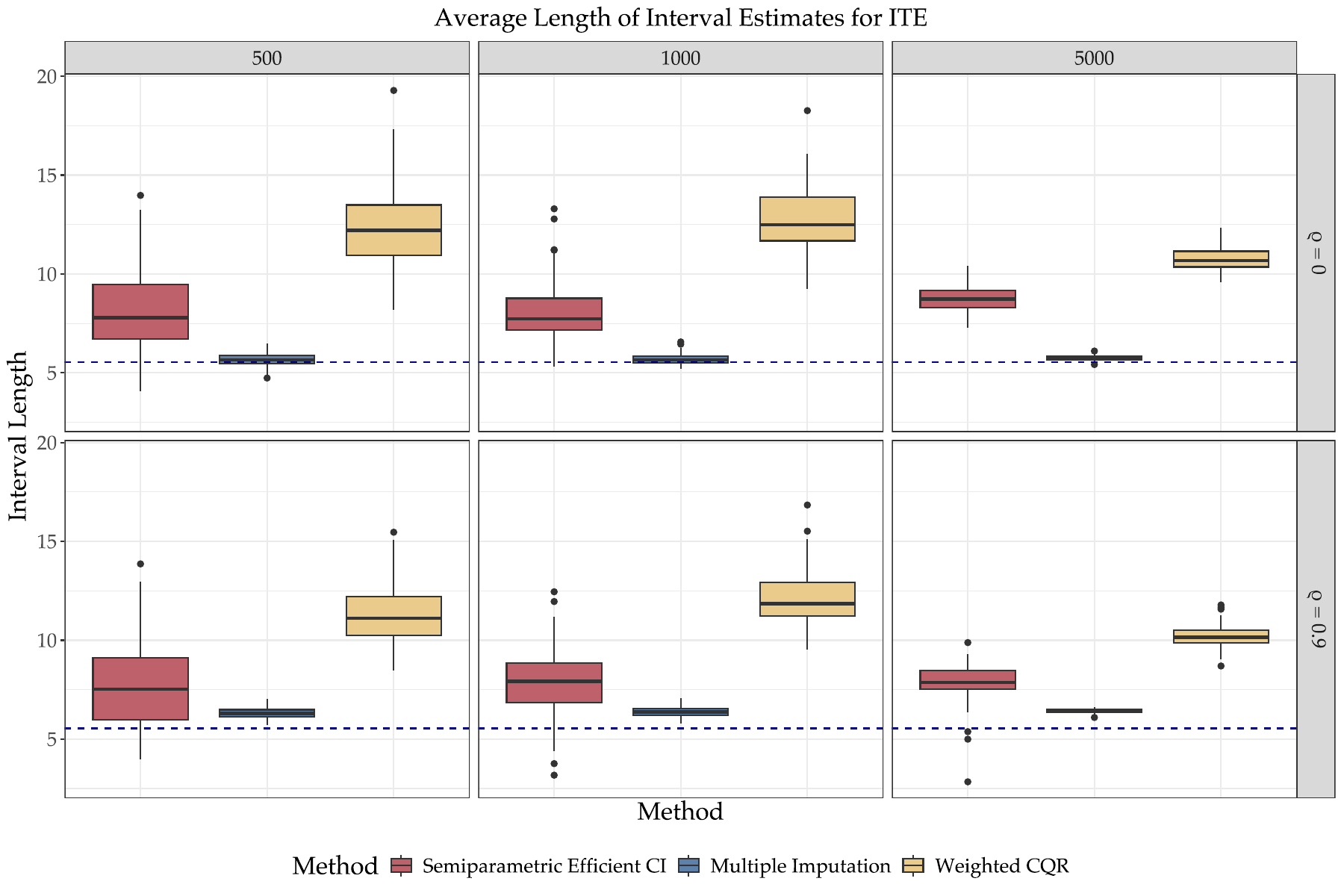}
            \begin{tablenotes}
                \footnotesize
                \setlength\labelsep{0pt}
                \item \textit{Note}: This figure shows the simulation results for the average length of prediction intervals constructed by conformal inference with semiparametric efficient estimator, multiple imputation with Amelia, and weighted CQR with unweighted nested approach for ITE of attrition group. The red horizontal line corresponds to the length of oracle intervals.
            \end{tablenotes}
        \end{threeparttable}
    \end{figure}

    Based on the simulation results from both studies, conformal inference with semiparametric efficient estimator yields ITE prediction intervals in the attrition group that achieve the desired empirical coverage and maintain reasonable interval lengths across different sample sizes and DGP. Moreover, when compared to current methods like multiple imputation and weighted CQR with unweighted nested approach, the proposed method demonstrates superior performance in terms of both empirical coverage and average interval length across different DGP. These findings highlight the effectiveness of conformal inference with semiparametric efficient estimator for constructing reliable ITE prediction intervals under attrition.

    \section{Empirical Application}

    In this section, I revisit the study by \citet{margalit2021how} and \citet{finkel2024can} to illustrate the proposed method. I show that by using conformal inference with semiparametric efficient estimator, we can construct prediction intervals for ITE in the attrition group with desired coverage. Moreover, we can aggregate ITEs to the ATE of interest for the attrition group and compute the point and interval estimates of ATE for all observations.

    \subsection{Reanalysis of \citet{margalit2021how}}

    This paper leverages a field experiment to evaluate the impact of financial markets on socioeconomic values and political preferences. In the experiment, the authors designed one asset treatment and three subtreatments within the treatment group. There were 2,183 participants assigned to the treatment group and 521 participants assigned to the control group. The asset treatment required participants to make investment decisions over a period of 6 consecutive weeks. The outcome variable of interest is participants' Socioeconomic values (SEV) measured by a principal component analysis on four items pertaining to issues of personal responsibility, economic fairness, inequality, and redistribution. As mentioned by \citet{margalit2021how}, the attrition rates are different for different treatments, although they did not find evidence for selective attrition and systematic differences between attrited and non-attrited participants.

    Table 2 of \citet[484]{margalit2021how} shows the overall treatment effect on SEV, without distinguishing between subtreatments. This reanalysis replicates column (4), which includes pretreatment SEV, political controls, and demographic controls. We consider three methods with 500 MC simulations: conformal inference with semiparametric efficient estimator (CISE), multiple imputation (MI), and weighted CQR (WCQR) with unweighted nested approach for interval estimates. After constructing the prediction intervals for ITE in the attrition group, we aggregate these ITEs to the ATE for the attrition group and compute the standard error across 500 MC simulations. Then we use a weighted average of ATE for the observed group and the attrition group to compute the ATE for all observations. The standard error of ATE for all observations is computed as 
    \begin{align*}
        \text{SE}\left( \text{ATE}_{\text{all}} \right) = \sqrt{\left( \frac{|\mathcal{I}_{R = 1}| }{|\mathcal{I}_{R = 1}| + |\mathcal{I}_{R = 0}|} \right)^2 \text{SE}\left( \text{ATE}_{R = 1} \right)^2 + \left( \frac{|\mathcal{I}_{R = 0}| }{|\mathcal{I}_{R = 1}| + |\mathcal{I}_{R = 0}|} \right)^2  \text{SE}\left( \text{ATE}_{R = 0} \right)^2},
    \end{align*}
    where $\text{SE}\left( \text{ATE}_{R = 1} \right)$ is the standard error of the ATE estimate from the regression model, which is the same as reported by \citet{margalit2021how}, $\text{SE}\left( \text{ATE}_{R = 0} \right)$ is the standard error of the ATE estimate for the attrition group across 500 MC simulations. We also use IPW to estimate the ATE for the observed group as a comparison to the result reported by \citet{margalit2021how}.
    
    We also evaluate the average length of the prediction intervals for ATE in the attrition group. Since we do not have control over the DGP with the real data, we know nothing about the ground truth to evaluate the coverage and the average length as in the simulation studies. Therefore, it is only meaningful to evaluate the average interval length here.
    %Instead, we compute the empirical coverage and average length of the prediction intervals based on the observed %data. The empirical coverage is computed as 
    %\begin{align*}
    %    \frac{1}{|\mathcal{I}_{R = 0}|} \sum_{R = 0} \mathds{1}_{\left\{ \text{CI}_{R = 1} \subset \check{\mathcal{C}}_%{i, \text{ITE}_{R = 0}} \right\}},
    %\end{align*}
    %where $\text{CI}_{R = 1}$ is the confidence interval for ATE in the observed group. 
    The average length of the prediction intervals is computed as
    \begin{align*}
        \frac{1}{|\mathcal{I}_{R = 0}|} \sum_{R = 0} \left( \check{\mathcal{C}}^{\text{R}}_{i, \text{ITE}_{R = 0}} - \check{\mathcal{C}}^{\text{L}}_{i, \text{ITE}_{R = 0}} \right),
    \end{align*}
    where $\check{\mathcal{C}}^{\text{R}}_{i, \text{ITE}_{R = 0}}$ and $\check{\mathcal{C}}^{\text{L}}_{i, \text{ITE}_{R = 0}}$ are the upper and lower bounds of the prediction interval for ITE in the attrition group. Similarly, we compute the standard error for the empirical coverage and average length of the prediction intervals for ATE in the attrition group across 500 MC simulations.
    
    Table \ref{tab:rep1} presents the reanalysis results for \citet{margalit2021how} Table 2 column (4). Among 2,703 participants, 480 participants are in the attrition group, which is 17.8\% of the total sample size. As column (4) shows, contrary to the common belief that prediction intervals constructed by conformal inference are much wider than any alternative methods, the average length of the prediction intervals by CISE and MI is comparable, while the average length of the prediction intervals by WCQR is much wider. Although in the simulation studies in the previous section, we show that CISE has the best performance in terms of empirical coverage and average length, we are considering the extreme DGP which violates the normality assumption. However, in this empirical application with real data, MI performs better than simulation studies. This may be because the real data is not as extreme as the simulation DGP, and the normality assumption is more likely to hold. Another note is that ATE estimate from IPW is almost the same as the result reported by \citet{margalit2021how}. However, such an estimate is only for the observed group, which may lead to biased estimates of the overall ATE when there is attrition. Figure \ref{fig:repcomplen1} shows the simulation results for the average length of prediction intervals constructed by three methods.

    \begin{table}[t]
        \caption{Reanalysis of \citet{margalit2021how} Table 2 Column (4)}
        \label{tab:rep1}
        \centering
        \begin{threeparttable}
            \begin{tabular}[t]{ccccc}
            \hline\hline
            & (1) & (2) & (3) & (4) \\
            Method & ATER1 & ATER0 & ATEall & Length\\
            \hline
            
            \hline
            CISE & 0.098 & 0.190 & 0.114 & 6.894\\
            & (0.052) & (0.180) & (0.054) & (0.860)\\
            \hline
            MI & 0.098 & 0.129 & 0.103 & 5.668\\
            & (0.052) & (0.040) & (0.044) & (0.079)\\
            \hline
            WCQR & 0.098 & 0.205 & 0.117 & 10.495\\
            & (0.052) & (0.109) & (0.047) & (0.830)\\
            \hline
            IPW & 0.110 &  &  & \\
            & (0.053) &  &  & \\
            \hline
            Observations & 2,223 & 480 & 2,703 &  \\
            \hline\hline
            \end{tabular}
            \begin{tablenotes}
                \footnotesize
                \setlength\labelsep{0pt}
                \item \textit{Note}: This table presents the replication results for \citet{margalit2021how} Table 2 column (4). Column (1) - (3) report point and interval estimates of ATE for observed group (ATER1), attrition group (ATER0), and all observations (ATEall) using conformal inference with semiparametric efficient estimator (CISE), multiple imputation (MI), weighted CQR (WCQR) with unweighted nested approach for interval estimates, and IPW. Column (4) reports the average length of prediction intervals of ATE for the attrition group. The standard errors are reported in parentheses.
            \end{tablenotes}
        \end{threeparttable}
    \end{table}

    \begin{figure}[ht]
        \centering
        \caption{Comparison of Average Length of Prediction Intervals}
        \label{fig:repcomplen1}
        \begin{threeparttable}
            \includegraphics[width=\textwidth]{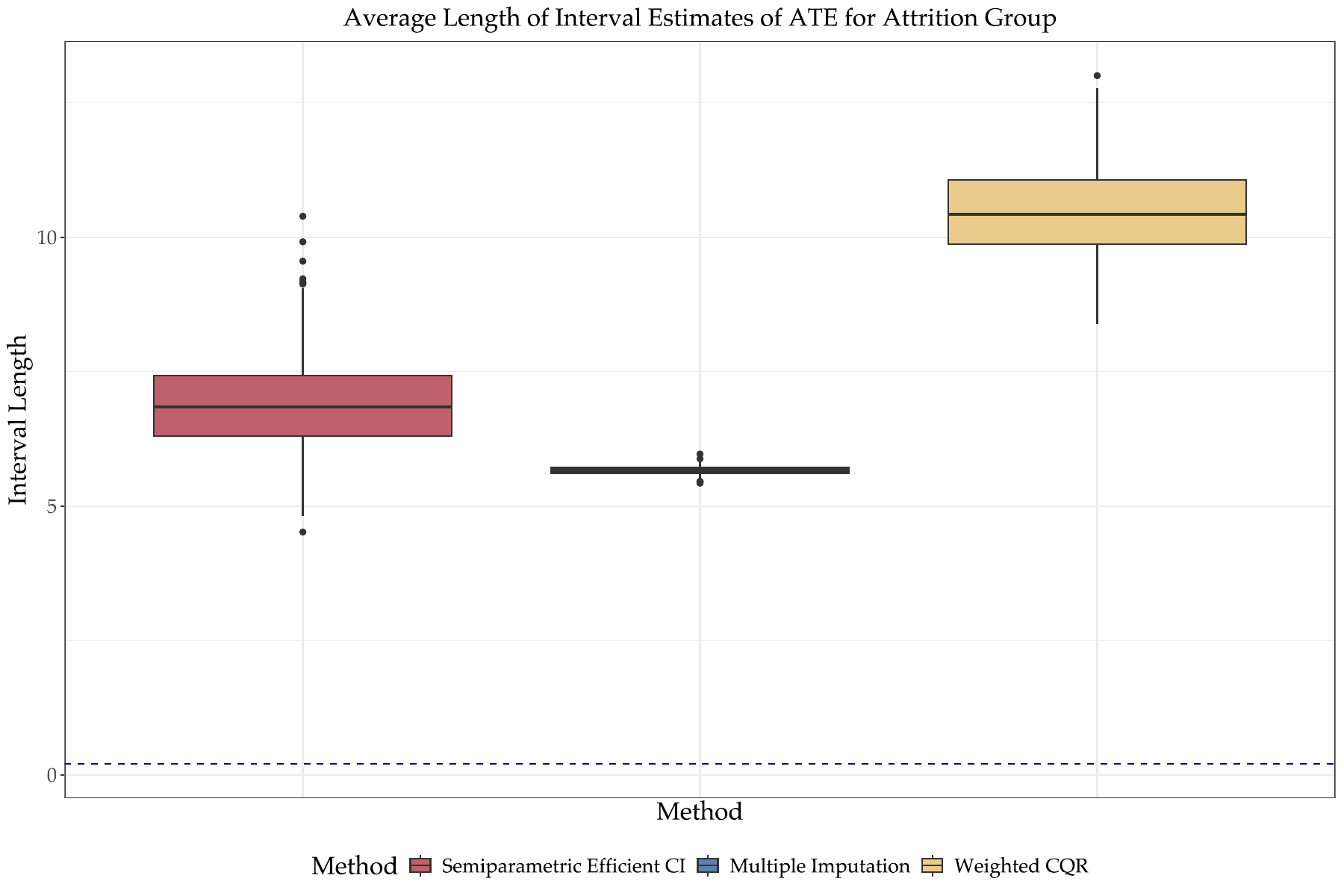}
            \begin{tablenotes}
                \footnotesize
                \setlength\labelsep{0pt}
                \item \textit{Note}: This figure shows the simulation results for the average length of prediction intervals constructed by conformal inference with semiparametric efficient estimator, multiple imputation with Amelia, and weighted CQR with unweighted nested approach for ITE of attrition group. The blue horizontal line corresponds to the length of oracle intervals.
            \end{tablenotes}
        \end{threeparttable}
    \end{figure}

    ATE estimates are different between groups. For the attrition group, as column (2) shows, the point estimate by CISE is 0.19, which is almost twice the ATE of the observed group. However, the point estimate of ATE for all observations slightly increases to 0.114 and remains significant at 5\% level, which does not change the substantive conclusion when only using the observed group. This may because that although the attrition group has a larger ATE, it only accounts for 17.8\% of the total sample size and the point estimate is not significant at the conventional level.

    \subsection{Reanalysis of \citet{finkel2024can}}

    This paper studies how the online civic education induces democratic values and behaviors in new democracies by designing and testing three civic education interventions and using Tunisia as a case study. The authors recruited young Tunisians between 18 and 35 through paid advertisements via the online platforms Facebook and Instagram. In total, 5,069 participants started a baseline survey and 2,073 finished the first and last question of the survey, which leads to a completion rate of 45.5\%. The authors designed three different treatments through exposing participants to different videos emphasizing democracy frames: democracy gain frame, democracy loss frame, and democratic self-efficacy frame, as well as a control group. The outcome variables include evaluations of political regimes and political engagement, broadly. In our empirical application, we focus on the results from Table 2 of \citet[623]{finkel2024can}, which compares all treatment groups versus the control group on regime evaluations. The authors considered four measures related to evaluations: democratic regime rating (M1), Ben Ali regime rating (M2), non-democratic regime alternatives (M3), and regime democratic performance (M4).

    Table 2 of \citet{finkel2024can} presents estimates of all treatment groups pooled together versus the control group. This reanalysis replicates column (2) - (5), which considers four different measures of regime evaluations with pre-treatment controls: gender, age, education, employment status, prior registration status, prior support for democracy, interest in political matters, and animal-related matters. Similar to the first application, we consider three methods with 500 MC simulations and construct the prediction intervals for ITE in the attrition group. We also aggregate these ITEs to the ATE for the attrition group and compute the standard errors. Similarly, we compute ATE estimates by IPW as a comparison to the estimates reported in their paper. The evaluation of the average length of the prediction intervals for ATE in the attrition group and their standard errors follows the same logic as in the first application. 

    Table \ref{tab:rep2} presents the reanalysis results for \citet{finkel2024can} Table 2 column (2) - (5). The attrition rates are different for different measures of regime evaluations, which is 36.4\% for M1, 40.8\% for M2, 40.7\% for M3, and 36.8\% for M4. As column (4) shows, the average length of the prediction intervals by CISE is shorter than that by MI for all four outcomes in this practice. The average length of the prediction intervals by WCQR is almost twice the length of that by CISE, which is consistent with the simulation results in the previous section. IPW still reports similar ATE estimates for the observed group, but fails to provide estimates for the attrition group and all observations. Figure \ref{fig:repcomplen2} shows the simulation results for the average length of prediction intervals constructed by three methods. Overall, CISE has the best performance both in terms of empirical coverage and average length.

    \begin{table}[htpb]
        \caption{Reanalysis of \citet{finkel2024can} Table 2 Column (2) - (5)}
        \label{tab:rep2}
        \centering
        \renewcommand{\arraystretch}{0.95}
        \begin{tabular}[t]{ccccc}
        \hline\hline
        \multicolumn{5}{c}{M1 Democratic Regime Rating}\\
        \hline 

        \hline
        & (1) & (2) & (3) & (4) \\
        Method & ATER1 & ATER0 & ATEall & Length\\
        \hline

        \hline
        CISE & 0.014 & 0.012 & 0.013 & 1.030\\
        & (0.015) & (0.021) & (0.012) & (0.333)\\
        \hline
        MI & 0.014 & 0.005 & 0.010 & 1.469\\
        & (0.015) & (0.007) & (0.009) & (0.013)\\
        \hline
        WCQR & 0.014 & 0.018 & 0.016 & 2.051\\
        & (0.015) & (0.006) & (0.009) & (0.040)\\
        \hline
        IPW & 0.009 &  &  & \\
        & (0.015) &  &  & \\
        \hline
        Observations & 2,190 & 1,254 & 3,444 & \\
        \hline\hline
        \multicolumn{5}{c}{M2 Ben Ali Regime Rating}\\
        \hline

        \hline
        & (1) & (2) & (3) & (4) \\
        Method & ATER1 & ATER0 & ATEall & Length\\
        \hline

        \hline
        CISE & -0.041 & -0.059 & -0.048 & 1.241\\
        & (0.016) & (0.021) & (0.013) & (0.281)\\
        \hline
        MI & -0.041 & -0.035 & -0.039 & 1.566\\
        & (0.016) & (0.008) & (0.010) & (0.013)\\
        \hline
        WCQR & -0.041 & -0.069 & -0.053 & 2.143\\
        & (0.016) & (0.011) & (0.011) & (0.039)\\
        \hline
        IPW & -0.037 &  &  & \\
        & (0.017) &  &  & \\
        \hline
        Observations & 2,197 & 1,517 & 3,714 &   \\
        \hline\hline
        \multicolumn{5}{c}{M3 Non-democratic Regime Alternatives}\\
        \hline

        \hline
        & (1) & (2) & (3) & (4) \\
        Method & ATER1 & ATER0 & ATEall & Length\\
        \hline

        \hline
        CISE & -0.004 & -0.078 & -0.034 & 0.963\\
        & (0.011) & (0.033) & (0.015) & (0.206)\\
        \hline
        MI & -0.004 & 0.003 & -0.001 & 1.076\\
        & (0.011) & (0.005) & (0.007) & (0.011)\\
        \hline
        WCQR & -0.004 & -0.064 & -0.028 & 2.249\\
        & (0.011) & (0.016) & (0.009) & (0.200)\\
        \hline
        IPW & -0.004 &  &  & \\
        & (0.011) &  &  & \\
        \hline
        Observations & 2,203 & 1,511 & 3,714 &   \\
        \hline\hline
        \multicolumn{5}{c}{M4 Regime Democratic Performance}\\
        \hline

        \hline
        & (1) & (2) & (3) & (4) \\
        Method & ATER1 & ATER0 & ATEall  & Length\\
        \hline

        \hline
        CISE & 0.040 & 0.048 & 0.043 & 0.819\\
        & (0.012) & (0.017) & (0.010) & (0.219)\\
        \hline
        MI & 0.040 & 0.039 & 0.040 & 1.200\\
        & (0.012) & (0.006) & (0.008) & (0.012)\\
        \hline
        WCQR & 0.040 & 0.024 & 0.034 & 2.208\\
        & (0.012) & (0.018) & (0.010) & (0.161)\\
        \hline
        IPW & 0.041 &  &  & \\
        & (0.013) &  &  & \\
        \hline
        Observations & 2,346 & 1,368 & 3,714 &   \\
        \hline\hline
        \end{tabular}
    \end{table}

    \begin{figure}[ht]
        \centering
        \caption{Comparison of Average Length of Prediction Intervals}
        \label{fig:repcomplen2}
        \begin{threeparttable}
            \includegraphics[width=\textwidth]{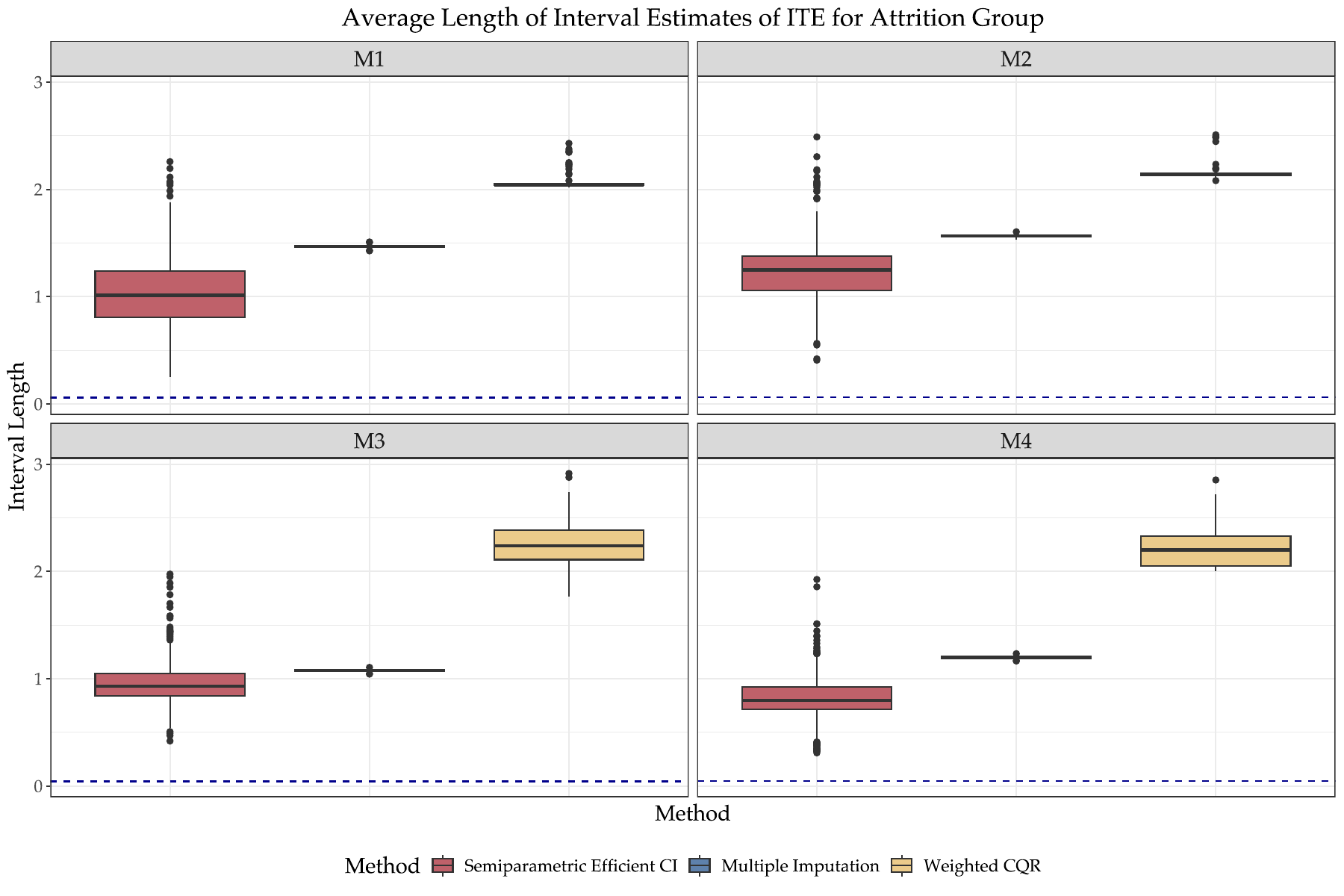}
            \begin{tablenotes}
                \footnotesize
                \setlength\labelsep{0pt}
                \item \textit{Note}: This figure shows the simulation results for the average length of prediction intervals constructed by conformal inference with semiparametric efficient estimator, multiple imputation with Amelia, and weighted CQR with unweighted nested approach for ITE of attrition group. The blue horizontal line corresponds to the length of oracle intervals.
            \end{tablenotes}
        \end{threeparttable}
    \end{figure}

    For ATE estimates, the results for M1 and M4 remain unchanged in both substantive conclusions and significance levels. In contrast, for M2, the ATE estimate for all observations becomes significant at the 1\% level, while the ATE for the observed group is only significant at the 5\% level. For M3, the ATE for all observations reaches significance at the 5\% level, changing the original substantive conclusion reported by the authors. This shift likely occurs because the ATE for the attrition group is not only statistically significant at conventional levels but also substantially different from that of the observed group.

    To understand the heterogeneous ATE estimates for the observed group and the attrition group for M3, we take a step back to analyze the ITEs instead of the ATE. Figure \ref{fig:obsite_attrite} shows the density of ITE estimates for the observed and attrition group across 500 MC simulations. As Panel (B) shows, the ITE estimates for the attrition groups indicate the presence of heterogeneous effects, resulting in a different aggregated ATE. To discover the potential source, we interact the pretreatment covariates with the treatment indicator. Table \ref{tab:balance_finkel} shows that those who have expressed a lot of interest in politics experience a larger treatment effect. However, as the regression results in Table \ref{tab:hte} show, we fail to discover a negative interaction term as expected. This may suggest that there are other unobserved confounders leading to the heterogeneous ATE between the observed and the attrition group.
  
    \begin{figure}[htbp]
        \centering
        \caption{ITE Estimates for Observed and Attrition Groups}
        \label{fig:obsite_attrite}
        \begin{threeparttable}
            \begin{subfigure}[t]{0.48\textwidth}
                \centering
                \includegraphics[width=\textwidth]{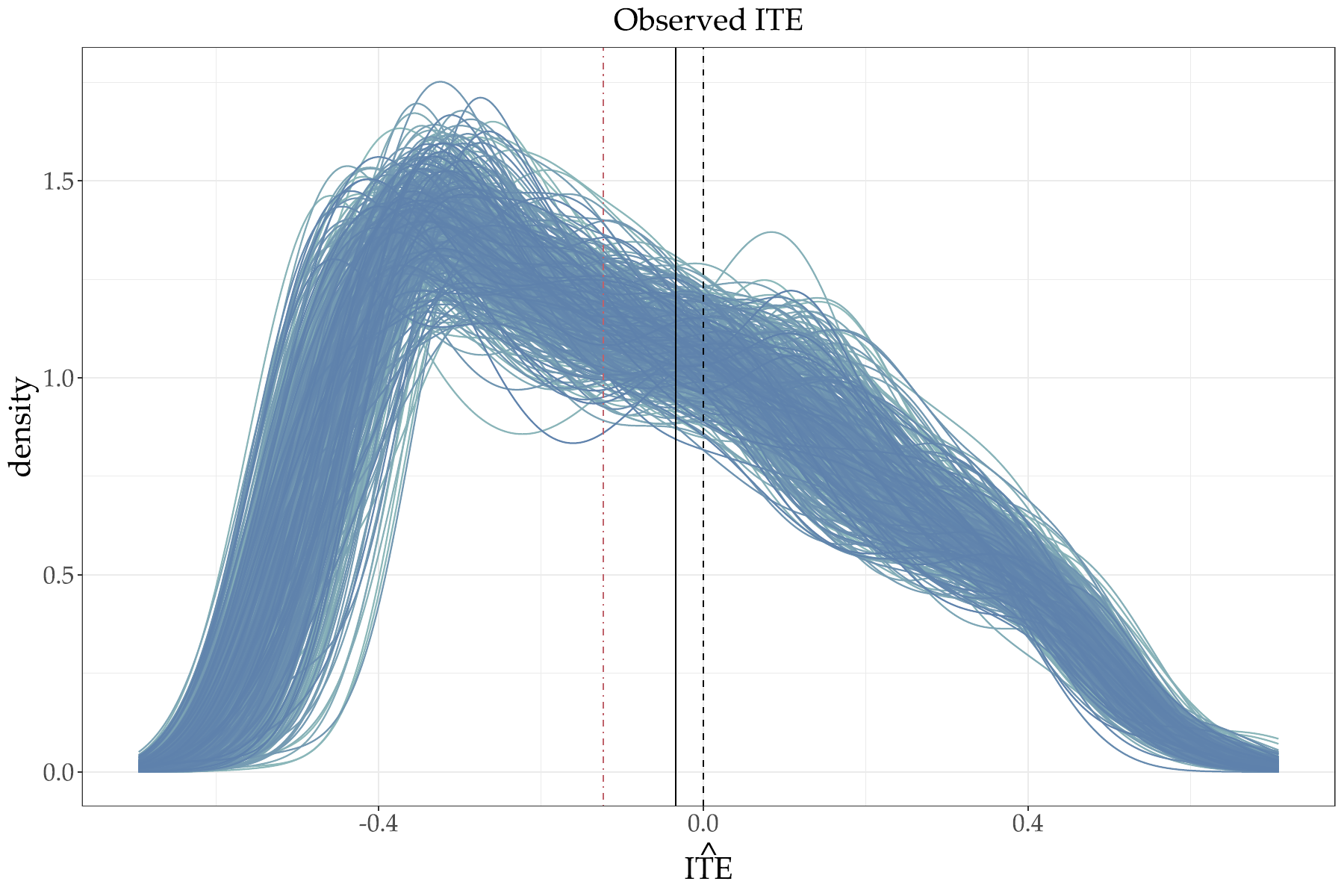}
                \caption{Observed Group}
            \end{subfigure}
            \hfill
            \begin{subfigure}[t]{0.48\textwidth}
                \centering
                \includegraphics[width=\textwidth]{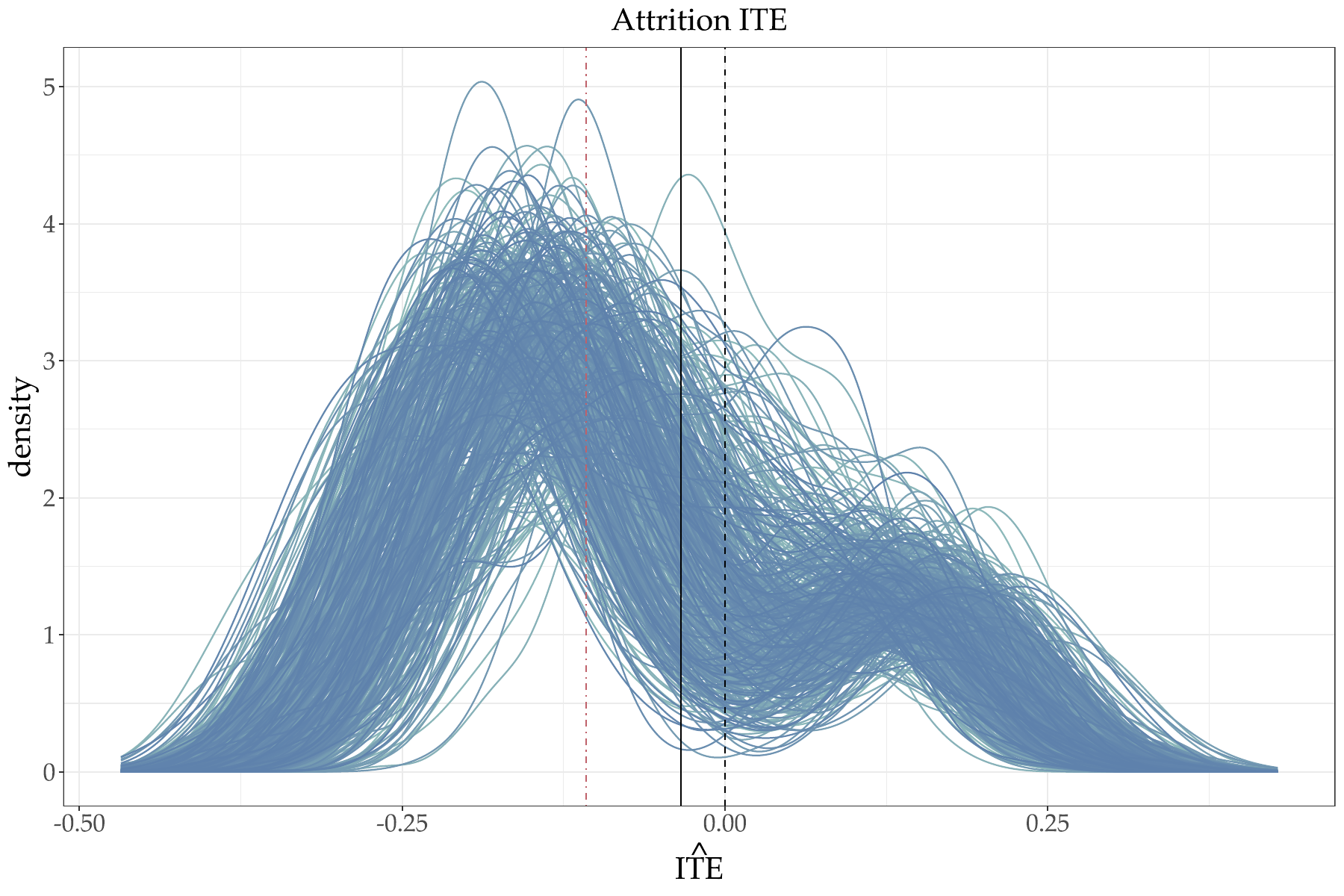}
                \caption{Attrition Group}
            \end{subfigure}
            \begin{tablenotes}
                \footnotesize
                \setlength\labelsep{0pt}
                \item \textit{Note}: These two plots exhibit the density of ITE estimates for the observed and attrition group across 500 MC simulations. The red dotted line denotes the median of the ITE estimates, the black solid line denotes the ATE estimate for the attrition group (-0.034), and the black dotted line denotes the 0.
            \end{tablenotes}
        \end{threeparttable}
    \end{figure}

    Overall, these two empirical applications illustrate that the proposed conformal inference with semiparametric efficient estimator can construct prediction intervals for ITE in the attrition group with desired coverage and narrower width, and can easily be aggregated to the ATE of interest for the attrition group. The point and interval estimates of ATE for all observations can be computed as well. The application results also serve as a cautionary note that simply ignoring the attrition group and only using observed group may lead to biased estimates and different substantive conclusions, especially when the attrition rate is high. However, there are also limitations of the proposed method. First, given that the DGP of the real data may satisfy the normality assumption, conformal inference with semiparametric efficient estimator may not always outperform the multiple imputation in interval length. Second, the attrition rate can largely influence both the point and interval estimates of ATE of the attrition group and all observations. When the attrition rate is really high, everything may become noisier for inference. Third, the proposed method can only provide the prediction intervals of ITE for part of the observed group because some data needs to be excluded for the model training and evaluation purpose. Therefore, it is hard to evaluate the coverage and average length of the prediction intervals of ITE or ATE for all observations.

    \section{Concluding Remarks}

    Attrition in survey and field experiments is a common problem in political science research. Simply ignoring the missing data induced by attrition and performing complete case analysis may lead to biased and inconsistent estimates, potentially resulting in misleading substantive conclusions. While existing methods such as multiple imputation, inverse probability weighting, and partial identification offer tools to address attrition, they often rely on additional assumptions and typically do not account for the covariate shift between observed and missing data. Employing the conformal inference framework, I identify three primary challenges when the outcomes are missing at random (MAR): (1) the covariate shift problems between treatment groups in observed data, as well as between the observed and the attrition group; (2) prediction intervals that are both valid and sufficiently narrow to be practically useful; and (3) the extension of conformal inference to ensure asymptotic properties. This paper addresses these challenges by employing a nonparametric conformal inference approach augmented with a semiparametric efficient estimator to construct valid prediction intervals for causal estimands in the presence of attrition.

    Through extensive Monte Carlo simulation studies and empirical applications, I demonstrate that the proposed method can construct prediction intervals for the estimands of interest in the attrition group with both valid coverage and narrower widths. It also outperforms the existing methods like multiple imputation with Amelia II and weighted CQR in terms of empirical coverage and average length. In addition, researchers can easily aggregate the individual-level estimates to the group level and compare the treatment effects among observed, attrited, and all observations. Nevertheless, the approach has several limitations. First, while the method does not require the assumption of ignorability of missingness, it still requires the MAR assumption. Second, when the attrition rate is high, the point and interval estimates could become noisy. Third, due to the data-splitting procedure required for model training and evaluation, it is challenging to construct prediction intervals for all the units in the observed group.

    This paper also suggests several avenues for future research. First, extending the conformal inference framework to the missing not at random (MNAR) setting would enhance its robustness and applicability, particularly in cases where attrition is correlated with unobserved confounders \citep{huber2012identification,jin2023sensitivity}. Second, integrating conformal inference with double machine learning could yield a more flexible and robust framework for estimating treatment effects, especially in the presence of high-dimensional nuisance parameters and the need for cross-fitting procedures \citep{chernozhukov2018double,kennedy2023semiparametric}. Finally, adapting conformal inference methods to panel data structures would expand their utility for analyzing dynamic processes and handling attrition in longitudinal studies \citep{chernozhukov2021exact}.

    \clearpage
    \bibliographystyle{apsr}
    %\begin{spacing}{1.4}
        \bibliography{ref}

@article{angelopoulos2022gentle,
  title   = {A gentle introduction to conformal prediction and distribution-free uncertainty quantification},
  author  = {Angelopoulos, Anastasios N and Bates, Stephen},
  journal = {arXiv preprint arXiv:2107.07511},
  year    = {2022}
}

@article{angelopoulos2024theoretical,
  title   = {Theoretical foundations of conformal prediction},
  author  = {Angelopoulos, Anastasios N and Barber, Rina Foygel and Bates, Stephen},
  journal = {arXiv preprint arXiv:2411.11824},
  year    = {2024}
}

@article{athey2020combining,
  title         = {Combining {{Experimental}} and {{Observational Data}} to {{Estimate Treatment Effects}} on {{Long Term Outcomes}}},
  author        = {Athey, Susan and Chetty, Raj and Imbens, Guido},
  year          = {2020},
  journal       = {arXiv preprint arXiv:2006.09676},
  eprint        = {2006.09676},
  primaryclass  = {stat},
  publisher     = {arXiv},
  doi           = {10.48550/arXiv.2006.09676},
  archiveprefix = {arXiv}
}

@article{chernozhukov2018double,
  title      = {Double/Debiased Machine Learning for Treatment and Structural Parameters},
  author     = {Chernozhukov, Victor and Chetverikov, Denis and Demirer, Mert and Duflo, Esther and Hansen, Christian and Newey, Whitney and Robins, James},
  year       = {2018},
  journal    = {The Econometrics Journal},
  volume     = {21},
  number     = {1},
  pages      = {C1-C68},
  doi        = {10.1111/ectj.12097}
}

@article{chernozhukov2021exact,
  title      = {An {{Exact}} and {{Robust Conformal Inference Method}} for {{Counterfactual}} and {{Synthetic Controls}}},
  author     = {Chernozhukov, Victor and W{\"u}thrich, Kaspar and Zhu, Yinchu},
  year       = {2021},
  journal    = {Journal of the American Statistical Association},
  volume     = {116},
  number     = {536},
  pages      = {1849--1864},
  publisher  = {ASA Website},
  doi        = {10.1080/01621459.2021.1920957}
}

@article{egami2023elements,
  title      = {Elements of {{External Validity}}: {{Framework}}, {{Design}}, and {{Analysis}}},
  shorttitle = {Elements of {{External Validity}}},
  author     = {Egami, Naoki and Hartman, Erin},
  year       = {2023},
  journal    = {American Political Science Review},
  volume     = {117},
  number     = {3},
  pages      = {1070--1088},
  doi        = {10.1017/S0003055422000880}
}

@book{fisher1937the,
  author    = {Fisher,R. A.},
  title     = {The Design of Experiments.},
  year      = {1937},
  pages     = {xi + 260 pp.},
  publisher = {Oliver \& Boyd, Edinburgh \& London},
  language  = {English},
  item-type = {Book},
  issue     = {2nd Ed}
}

@article{gao2025role,
  title         = {On the {{Role}} of {{Surrogates}} in {{Conformal Inference}} of {{Individual Causal Effects}}},
  author        = {Gao, Chenyin and Gilbert, Peter B. and Han, Larry},
  year          = {2025},
  journal       = {arXiv preprint arXiv:2412.12365},
  eprint        = {2412.12365},
  primaryclass  = {stat},
  publisher     = {arXiv},
  doi           = {10.48550/arXiv.2412.12365},
  archiveprefix = {arXiv}
}

@article{gohdes2020repression,
  title      = {Repression {{Technology}}: {{Internet Accessibility}} and {{State Violence}}},
  shorttitle = {Repression {{Technology}}},
  author     = {Gohdes, Anita R.},
  year       = {2020},
  journal    = {American Journal of Political Science},
  volume     = {64},
  number     = {3},
  pages      = {488--503},
  doi        = {10.1111/ajps.12509}
}

@article{huber2012identification,
  title      = {Identification of {{Average Treatment Effects}} in {{Social Experiments Under Alternative Forms}} of {{Attrition}}},
  author     = {Huber, Martin},
  year       = {2012},
  journal    = {Journal of Educational and Behavioral Statistics},
  volume     = {37},
  number     = {3},
  pages      = {443--474},
  publisher  = {American Educational Research Association},
  doi        = {10.3102/1076998611411917}
}

@article{imbens2024causal,
  title     = {Causal {{Inference}} in the {{Social Sciences}}},
  author    = {Imbens, Guido W.},
  year      = {2024},
  journal   = {Annual Review of Statistics and Its Application},
  volume    = {11},
  number    = {Volume 11, 2024},
  pages     = {123--152},
  publisher = {Annual Reviews},
  doi       = {10.1146/annurev-statistics-033121-114601}
}

@article{koenker1978regression,
  title      = {Regression {{Quantiles}}},
  author     = {Koenker, Roger and Bassett, Gilbert},
  year       = {1978},
  journal    = {Econometrica},
  volume     = {46},
  number     = {1},
  eprint     = {1913643},
  eprinttype = {jstor},
  pages      = {33--50},
  publisher  = {[Wiley, Econometric Society]},
  doi        = {10.2307/1913643}
}

@article{lei2013distributionfree,
  title      = {Distribution-{{Free Prediction Sets}}},
  author     = {Lei, Jing and Robins, James and Wasserman, Larry},
  year       = {2013},
  journal    = {Journal of the American Statistical Association},
  volume     = {108},
  number     = {501},
  pages      = {278--287},
  publisher  = {ASA Website},
  doi        = {10.1080/01621459.2012.751873},
  pmid       = {25237208}
}

@article{lei2014distributionfree,
  title      = {Distribution-Free {{Prediction Bands}} for {{Non-parametric Regression}}},
  author     = {Lei, Jing and Wasserman, Larry},
  year       = {2014},
  journal    = {Journal of the Royal Statistical Society Series B: Statistical Methodology},
  volume     = {76},
  number     = {1},
  pages      = {71--96},
  doi        = {10.1111/rssb.12021}
}

@article{lei2018distributionfree,
  title      = {Distribution-{{Free Predictive Inference}} for {{Regression}}},
  author     = {Lei, Jing and G'Sell, Max and Rinaldo, Alessandro and Tibshirani, Ryan J. and Wasserman, Larry},
  year       = {2018},
  journal    = {Journal of the American Statistical Association},
  volume     = {113},
  number     = {523},
  pages      = {1094--1111},
  publisher  = {ASA Website},
  doi        = {10.1080/01621459.2017.1307116}
}

@article{lei2021conformala,
  title      = {Conformal {{Inference}} of {{Counterfactuals}} and {{Individual Treatment Effects}}},
  author     = {Lei, Lihua and Cand{\`e}s, Emmanuel J.},
  year       = {2021},
  journal    = {Journal of the Royal Statistical Society Series B: Statistical Methodology},
  volume     = {83},
  number     = {5},
  pages      = {911--938},
  doi        = {10.1111/rssb.12445}
}

@article{mueller2024crowd,
  title   = {Crowd {{Cohesion}} and {{Protest Outcomes}}},
  author  = {Mueller, Lisa},
  year    = {2024},
  journal = {American Journal of Political Science},
  volume  = {68},
  number  = {1},
  pages   = {42--57},
  doi     = {10.1111/ajps.12725}
}

@inproceedings{romano2019conformalized,
  title     = {Conformalized {{Quantile Regression}}},
  booktitle = {Advances in {{Neural Information Processing Systems}}},
  author    = {Romano, Yaniv and Patterson, Evan and Candes, Emmanuel},
  year      = {2019},
  volume    = {32},
  publisher = {Curran Associates, Inc.}
}

@inproceedings{romano2020classification,
  title     = {Classification with Valid and Adaptive Coverage},
  booktitle = {Proceedings of the 34th {{International Conference}} on {{Neural Information Processing Systems}}},
  author    = {Romano, Yaniv and Sesia, Matteo and Cand{\`e}s, Emmanuel J.},
  year      = {2020},
  series    = {{{NIPS}} '20},
  pages     = {3581--3591},
  publisher = {Curran Associates Inc.},
  address   = {Red Hook, NY, USA},
  isbn      = {978-1-71382-954-6}
}

@article{rubin1974estimating,
  title     = {Estimating Causal Effects of Treatments in Randomized and Nonrandomized Studies},
  author    = {Rubin, Donald B.},
  year      = {1974},
  journal   = {Journal of Educational Psychology},
  volume    = {66},
  number    = {5},
  pages     = {688--701},
  publisher = {American Psychological Association},
  address   = {US},
  doi       = {10.1037/h0037350}
}

@article{sadinle2019least,
  title      = {Least {{Ambiguous Set-Valued Classifiers With Bounded Error Levels}}},
  author     = {Sadinle, Mauricio and Lei, Jing and Wasserman, Larry},
  year       = {2019},
  journal    = {Journal of the American Statistical Association},
  volume     = {114},
  number     = {525},
  pages      = {223--234},
  publisher  = {ASA Website},
  doi        = {10.1080/01621459.2017.1395341}
}

@article{splawa-neyman1990application,
  title      = {On the {{Application}} of {{Probability Theory}} to {{Agricultural Experiments}}. {{Essay}} on {{Principles}}. {{Section}} 9, transl. by D. M. Dabrowska and T. P. Speed},
  author     = {{Splawa-Neyman}, Jerzy},
  year       = {1990(1923)},
  journal    = {Statistical Science},
  volume     = {5},
  number     = {4},
  eprint     = {2245382},
  eprinttype = {jstor},
  pages      = {465--472},
  publisher  = {Institute of Mathematical Statistics}
}

@inproceedings{tibshirani2019conformal,
  title     = {Conformal {{Prediction Under Covariate Shift}}},
  booktitle = {Advances in {{Neural Information Processing Systems}}},
  author    = {Tibshirani, Ryan J and Foygel Barber, Rina and Candes, Emmanuel and Ramdas, Aaditya},
  year      = {2019},
  volume    = {32},
  publisher = {Curran Associates, Inc.}
}

@inproceedings{vovk1999machinelearning,
  title     = {Machine-{{Learning Applications}} of {{Algorithmic Randomness}}},
  booktitle = {Proceedings of the {{Sixteenth International Conference}} on {{Machine Learning}}},
  author    = {Vovk, Volodya and Gammerman, Alexander and Saunders, Craig},
  year      = {1999},
  series    = {{{ICML}} '99},
  pages     = {444--453},
  publisher = {Morgan Kaufmann Publishers Inc.},
  address   = {San Francisco, CA, USA},
  isbn      = {978-1-55860-612-8}
}

@article{vovk2009online,
  title      = {On-Line Predictive Linear Regression},
  author     = {Vovk, Vladimir and Nouretdinov, Ilia and Gammerman, Alex},
  year       = {2009},
  journal    = {The Annals of Statistics},
  volume     = {37},
  number     = {3},
  pages      = {1566--1590},
  publisher  = {Institute of Mathematical Statistics},
  doi        = {10.1214/08-AOS622}
}

@article{vovk2019nonparametric,
  title      = {Nonparametric Predictive Distributions Based on Conformal Prediction},
  author     = {Vovk, Vladimir and Shen, Jieli and Manokhin, Valery and Xie, Min-ge},
  year       = {2019},
  journal    = {Machine Learning},
  volume     = {108},
  number     = {3},
  pages      = {445--474},
  doi        = {10.1007/s10994-018-5755-8}
}

@article{lalonde1986evaluating,
  title = {Evaluating the {{Econometric Evaluations}} of {{Training Programs}} with {{Experimental Data}}},
  author = {LaLonde, Robert J.},
  year = {1986},
  journal = {The American Economic Review},
  volume = {76},
  number = {4},
  eprint = {1806062},
  eprinttype = {jstor},
  pages = {604--620},
  publisher = {American Economic Association}
}

@article{kalla2018minimal,
  title = {The {{Minimal Persuasive Effects}} of {{Campaign Contact}} in {{General Elections}}: {{Evidence}} from 49 {{Field Experiments}}},
  shorttitle = {The {{Minimal Persuasive Effects}} of {{Campaign Contact}} in {{General Elections}}},
  author = {Kalla, Joshua L. and Broockman, David E.},
  year = {2018},
  journal = {American Political Science Review},
  volume = {112},
  number = {1},
  pages = {148--166},
  doi = {10.1017/S0003055417000363}
}

@article{imai2011estimation,
  title = {Estimation of {{Heterogeneous Treatment Effects}} from {{Randomized Experiments}}, with {{Application}} to the {{Optimal Planning}} of the {{Get-Out-the-Vote Campaign}}},
  author = {Imai, Kosuke and Strauss, Aaron},
  year = {2011},
  journal = {Political Analysis},
  volume = {19},
  number = {1},
  pages = {1--19},
  doi = {10.1093/pan/mpq035}
}

@article{hausman1979attrition,
  title = {Attrition {{Bias}} in {{Experimental}} and {{Panel Data}}: {{The Gary Income Maintenance Experiment}}},
  shorttitle = {Attrition {{Bias}} in {{Experimental}} and {{Panel Data}}},
  author = {Hausman, Jerry A. and Wise, David A.},
  year = {1979},
  journal = {Econometrica},
  volume = {47},
  number = {2},
  eprint = {1914193},
  eprinttype = {jstor},
  pages = {455--473},
  publisher = {[Wiley, Econometric Society]},
  doi = {10.2307/1914193}
}

@article{coppock2017combining,
  title = {Combining {{Double Sampling}} and {{Bounds}} to {{Address Nonignorable Missing Outcomes}} in {{Randomized Experiments}}},
  author = {Coppock, Alexander and Gerber, Alan S. and Green, Donald P. and Kern, Holger L.},
  year = {2017},
  journal = {Political Analysis},
  volume = {25},
  number = {2},
  pages = {188--206},
  doi = {10.1017/pan.2016.6}
}

@article{heckman1979sample,
  title = {Sample {{Selection Bias}} as a {{Specification Error}}},
  author = {Heckman, James J.},
  year = {1979},
  journal = {Econometrica},
  volume = {47},
  number = {1},
  eprint = {1912352},
  eprinttype = {jstor},
  pages = {153--161},
  publisher = {[Wiley, Econometric Society]},
  doi = {10.2307/1912352}
}

@article{fukumoto2022nonignorable,
  title = {Nonignorable {{Attrition}} in {{Pairwise Randomized Experiments}}},
  author = {Fukumoto, Kentaro},
  year = {2022},
  journal = {Political Analysis},
  volume = {30},
  number = {1},
  pages = {132--141},
  doi = {10.1017/pan.2020.51}
}

@article{rubin1976inference,
  title = {Inference and Missing Data},
  author = {Rubin, Donald B},
  year = 1976,
  journal = {Biometrika},
  volume = {63},
  number = {3},
  pages = {581--592},
  doi = {10.1093/biomet/63.3.581}
}

@article{king2001analyzing,
  title = {Analyzing {{Incomplete Political Science Data}}: {{An Alternative Algorithm}} for {{Multiple Imputation}}},
  shorttitle = {Analyzing {{Incomplete Political Science Data}}},
  author = {King, Gary and Honaker, James and Joseph, Anne and Scheve, Kenneth},
  year = {2001},
  journal = {American Political Science Review},
  volume = {95},
  number = {1},
  pages = {49--69},
  doi = {10.1017/S0003055401000235}
}

@article{shin2024differenceindifferences,
  title = {Difference-in-Differences {{Design}} with {{Outcomes Missing Not}} at {{Random}}},
  author = {Shin, Sooahn},
  year = {2024},
  journal = {arXiv preprint arXiv:2411.18772},
  eprint = {2411.18772},
  primaryclass = {stat},
  publisher = {arXiv},
  doi = {10.48550/arXiv.2411.18772},
  archiveprefix = {arXiv}
}

@article{horvitz1952generalization,
  title = {A {{Generalization}} of {{Sampling Without Replacement}} from a {{Finite Universe}}},
  author = {Horvitz, D. G. and {Thompson}, D. J.},
  year = {1952},
  journal = {Journal of the American Statistical Association},
  volume = {47},
  number = {260},
  pages = {663--685},
  publisher = {ASA Website},
  doi = {10.1080/01621459.1952.10483446}
}

@article{horowitz1998censoring,
  title = {Censoring of Outcomes and Regressors Due to Survey Nonresponse: {{Identification}} and Estimation Using Weights and Imputations},
  shorttitle = {Censoring of Outcomes and Regressors Due to Survey Nonresponse},
  author = {Horowitz, Joel L. and Manski, Charles F.},
  year = {1998},
  journal = {Journal of Econometrics},
  volume = {84},
  number = {1},
  pages = {37--58},
  doi = {10.1016/S0304-4076(97)00077-8}
}

@article{horowitz2000nonparametric,
  title = {Nonparametric {{Analysis}} of {{Randomized Experiments}} with {{Missing Covariate}} and {{Outcome Data}}},
  author = {Horowitz, Joel L. and Manski, Charles F.},
  year = {2000},
  journal = {Journal of the American Statistical Association},
  volume = {95},
  number = {449},
  eprint = {2669526},
  eprinttype = {jstor},
  pages = {77--84},
  publisher = {[American Statistical Association, Taylor \& Francis, Ltd.]},
  doi = {10.2307/2669526}
}

@article{lee2009training,
  title = {Training, {{Wages}}, and {{Sample Selection}}: {{Estimating Sharp Bounds}} on {{Treatment Effects}}},
  shorttitle = {Training, {{Wages}}, and {{Sample Selection}}},
  author = {Lee, David S.},
  year = {2009},
  journal = {The Review of Economic Studies},
  volume = {76},
  number = {3},
  pages = {1071--1102},
  doi = {10.1111/j.1467-937X.2009.00536.x}
}

@article{holland1986statistics,
  title = {Statistics and {{Causal Inference}}},
  author = {Holland, Paul W.},
  year = {1986},
  journal = {Journal of the American Statistical Association},
  volume = {81},
  number = {396},
  eprint = {2289064},
  eprinttype = {jstor},
  pages = {945--960},
  publisher = {[American Statistical Association, Taylor \& Francis, Ltd.]},
  doi = {10.2307/2289064}
}

@article{jin2023sensitivity,
  title = {Sensitivity Analysis of Individual Treatment Effects: {{A}} Robust Conformal Inference Approach},
  shorttitle = {Sensitivity Analysis of Individual Treatment Effects},
  author = {Jin, Ying and Ren, Zhimei and Cand{\`e}s, Emmanuel J.},
  year = {2023},
  journal = {Proceedings of the National Academy of Sciences},
  volume = {120},
  number = {6},
  pages = {e2214889120},
  publisher = {Proceedings of the National Academy of Sciences},
  doi = {10.1073/pnas.2214889120}
}

@book{gerber2012field,
  title={Field Experiments: Design, Analysis, and Interpretation},
  author={Gerber, A.S. and Green, D.P.},
  isbn={9780393979954},
  lccn={2011052337},
  url={https://books.google.com/books?id=yxEGywAACAAJ},
  year={2012},
  publisher={W. W. Norton}
}

@book{imbens2015causal,
  title={Causal Inference in Statistics, Social, and Biomedical Sciences},
  author={Imbens, Guido W and Rubin, Donald B},
  year={2015},
  publisher={Cambridge university press}
}

@article{shimodaira2000improving,
  title = {Improving Predictive Inference under Covariate Shift by Weighting the Log-Likelihood Function},
  author = {Shimodaira, Hidetoshi},
  year = {2000},
  journal = {Journal of Statistical Planning and Inference},
  volume = {90},
  number = {2},
  pages = {227--244},
  doi = {10.1016/S0378-3758(00)00115-4}
}

@article{koenker2001quantile,
  title = {Quantile {{Regression}}},
  author = {Koenker, Roger and Hallock, Kevin F.},
  year = {2001},
  journal = {Journal of Economic Perspectives},
  volume = {15},
  number = {4},
  pages = {143--156},
  doi = {10.1257/jep.15.4.143}
}

@article{yang2024doubly,
  title = {Doubly Robust Calibration of Prediction Sets under Covariate Shift},
  author = {Yang, Yachong and Kuchibhotla, Arun Kumar and Tchetgen Tchetgen, Eric},
  year = {2024},
  journal = {Journal of the Royal Statistical Society Series B: Statistical Methodology},
  volume = {86},
  number = {4},
  pages = {943--965},
  doi = {10.1093/jrsssb/qkae009}
}

@article{rosenbaum1983central,
  title = {The Central Role of the Propensity Score in Observational Studies for Causal Effects},
  author = {Rosenbaum, Paul R and Rubin, Donald B},
  year = 1983,
  journal = {Biometrika},
  volume = {70},
  number = {1},
  pages = {41--55}
}

@article{kennedy2023semiparametric,
  title = {Semiparametric Doubly Robust Targeted Double Machine Learning: A Review},
  shorttitle = {Semiparametric Doubly Robust Targeted Double Machine Learning},
  author = {Kennedy, Edward H.},
  year = {2023},
  journal = {arXiv preprint arXiv:2203.06469},
  eprint = {2203.06469},
  primaryclass = {stat},
  publisher = {arXiv},
  doi = {10.48550/arXiv.2203.06469},
  archiveprefix = {arXiv}
}

@article{newey1994asymptotic,
  title = {The {{Asymptotic Variance}} of {{Semiparametric Estimators}}},
  author = {Newey, Whitney K.},
  year = {1994},
  journal = {Econometrica},
  volume = {62},
  number = {6},
  eprint = {2951752},
  eprinttype = {jstor},
  pages = {1349--1382},
  publisher = {[Wiley, Econometric Society]},
  doi = {10.2307/2951752}
}

@article{kallus2024localized,
  title = {Localized {{Debiased Machine Learning}}: {{Efficient Inference}} on {{Quantile Treatment Effects}} and {{Beyond}}},
  shorttitle = {Localized {{Debiased Machine Learning}}},
  author = {Kallus, Nathan and Mao, Xiaojie and Uehara, Masatoshi},
  year = {2024},
  journal = {Journal of Machine Learning Research},
  volume = {25},
  number = {16},
  pages = {1--59}
}

@article{wager2018estimation,
  title = {Estimation and {{Inference}} of {{Heterogeneous Treatment Effects}} Using {{Random Forests}}},
  author = {Wager, Stefan and Athey, Susan},
  year = {2018},
  journal = {Journal of the American Statistical Association},
  volume = {113},
  number = {523},
  pages = {1228--1242},
  publisher = {ASA Website},
  doi = {10.1080/01621459.2017.1319839}
}

@article{sesia2020comparison,
  title = {A Comparison of Some Conformal Quantile Regression Methods},
  author = {Sesia, Matteo and Cand{\`e}s, Emmanuel J.},
  year = {2020},
  journal = {Stat},
  volume = {9},
  number = {1},
  pages = {e261},
  doi = {10.1002/sta4.261}
}

@article{honaker2011amelia,
  title = {Amelia {{II}}: {{A Program}} for {{Missing Data}}},
  shorttitle = {Amelia {{II}}},
  author = {Honaker, James and King, Gary and Blackwell, Matthew},
  year = {2011},
  journal = {Journal of Statistical Software},
  volume = {45},
  pages = {1--47},
  doi = {10.18637/jss.v045.i07}
}

@article{margalit2021how,
  title = {How {{Markets Shape Values}} and {{Political Preferences}}: {{A Field Experiment}}},
  shorttitle = {How {{Markets Shape Values}} and {{Political Preferences}}},
  author = {Margalit, Yotam and Shayo, Moses},
  year = {2021},
  journal = {American Journal of Political Science},
  volume = {65},
  number = {2},
  pages = {473--492},
  doi = {10.1111/ajps.12517}
}

@article{finkel2024can,
  title = {Can {{Online Civic Education Induce Democratic Citizenship}}? {{Experimental Evidence}} from a {{New Democracy}}},
  shorttitle = {Can {{Online Civic Education Induce Democratic Citizenship}}?},
  author = {Finkel, Steven E. and Neundorf, Anja and Rasc{\'o}n Ram{\'i}rez, Ericka},
  year = {2024},
  journal = {American Journal of Political Science},
  volume = {68},
  number = {2},
  pages = {613--630},
  doi = {10.1111/ajps.12765}
}
    %\end{spacing}
    
    \clearpage

    \appendix

    \counterwithin*{equation}{section} % reset 'equation' counter whenever '\section' is executed
    \counterwithin{algorithm}{section} % reset 'algorithm' counter whenever '\section' is executed
    \counterwithin{figure}{section} % reset 'figure' counter whenever '\section' is executed
    \counterwithin{table}{section} % reset 'table' counter whenever '\section' is executed
    \renewcommand\theequation{\thesection\arabic{equation}} % how to display the equation "number"
    \counterwithin{theorem}{section}
    \counterwithin{lemma}{section}
    \renewcommand{\appendixpagename}{Appendix}

    %\section*{\LARGE Appendix}
    \appendixpage

    %\tableofcontents

    %\clearpage

    \section{Conformal Inference}

    \subsection{Marginal Coverage}

    \begin{theorem}\label{thm:mcoverage}
        Suppose that $(X_1, Y_1), \dots, (X_{n + 1}, Y_{n + 1})$ are \textit{exchangeable} and $s$ is a symmetric conformal score function. Then, the prediction interval $\mathcal{C}(X_{n + 1})$ satisfies the marginal coverage guarantee,
        \[\mathbb{P}\left( Y_{n + 1} \in \mathcal{C}(X_{n + 1}) \right) \geq 1 - \alpha.\]
    \end{theorem}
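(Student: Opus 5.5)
The plan is to prove Theorem \ref{thm:mcoverage} by the standard rank (quantile) argument for full conformal prediction. First we fix the construction implicit in the statement. For a candidate value $y$, form the augmented data set $(X_1, Y_1), \dots, (X_n, Y_n), (X_{n+1}, y)$. Compute scores $V_i^{y} = s\bigl((X_i, Y_i); \mathcal{D}^{y}\bigr)$ for $i \le n$ and $V_{n+1}^{y} = s\bigl((X_{n+1}, y); \mathcal{D}^{y}\bigr)$, where $\mathcal{D}^{y}$ is the augmented multiset and $s$ is symmetric in it. Then set
\[
\mathcal{C}(X_{n+1}) = \Bigl\{ y : V_{n+1}^{y} \le \hat{q}^{y} \Bigr\}, \qquad \hat{q}^{y} = \text{the } \lceil (1-\alpha)(n+1) \rceil\text{-th smallest of } V_1^{y}, \dots, V_{n+1}^{y}.
\]
The key reduction is the identity of events
\[
\bigl\{ Y_{n+1} \in \mathcal{C}(X_{n+1}) \bigr\} = \bigl\{ V_{n+1} \le \hat{q} \bigr\},
\]
where $V_i = V_i^{Y_{n+1}}$ are the scores computed on the true data. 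This reduces coverage to a statement about the rank of $V_{n+1}$ among $V_1, \dots, V_{n+1}$.

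Second, we transfer exchangeability from the data to the scores. Because $s$ depends on the full data set only through the (unordered) multiset $\mathcal{D}^{Y_{n+1}}$, any permutation $\sigma$ of the indices satisfies $(V_{\sigma(1)}, \dots, V_{\sigma(n+1)}) = F\bigl(Z_{\sigma(1)}, \dots, Z_{\sigma(n+1)}\bigr)$ for one fixed measurable map $F$, with $Z_i = (X_i, Y_i)$. Exchangeability of $(Z_i)$ then gives exchangeability of $(V_1, \dots, V_{n+1})$. This is the step requiring care: the symmetry of $s$ is exactly what ensures the same $F$ works for every permutation.

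Third comes the counting argument. Let $k = \lceil (1-\alpha)(n+1) \rceil$; if $k > n+1$, the quantile is $+\infty$ and coverage is trivial. By exchangeability, $\mathbb{P}(V_{n+1} \le \hat q) = \mathbb{P}(V_j \le \hat q)$ for every $j$, since $\hat q$ is a symmetric function of the scores. Hence
\[
\mathbb{P}(V_{n+1} \le \hat q) = \frac{1}{n+1}\,\mathbb{E}\Bigl[\sum_{j=1}^{n+1} \mathds{1}_{\{V_j \le \hat q\}}\Bigr] \ge \frac{k}{n+1} \ge 1-\alpha,
\]
because deterministically at least $k$ of the scores are at most their own $k$-th order statistic, ties included. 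This handles ties without random tie-breaking.

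The main obstacle I expect is the second step, namely making precise that symmetry of $s$ turns exchangeability of the data into exchangeability of the score vector jointly with the symmetric quantile $\hat q$. The rest is bookkeeping. For the split-conformal variant used in the main text, the same argument applies conditional on the training fold, with $s$ trivially symmetric since it ignores the calibration points.
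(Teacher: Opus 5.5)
Your proposal is correct. There is no proof in the paper to compare it with: Theorem~\ref{thm:mcoverage} is stated in the appendix as background, without a proof and without saying how $\mathcal{C}$ is built. It implicitly defers to the conformal literature the paper cites, e.g.\ \citet{angelopoulos2024theoretical}.

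Your argument is the canonical one from that literature, and each step holds.
\begin{enumerate}
\item You fix the full-conformal construction, so that coverage becomes the event $\{V_{n+1} \le \hat q\}$.
\item You use symmetry of $s$ to write the score vector as one map $F$ that commutes with permutations of the data, so the scores inherit exchangeability.
\item You average $\mathds{1}_{\{V_j \le \hat q\}}$ over $j$, using that $\hat q$ is a symmetric function of the scores.
\end{enumerate}
This correctly handles ties and the case $k > n+1$.

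Two minor remarks. First, you define $\hat q^{y}$ as the $k$-th smallest of $V_1^{y},\dots,V_{n+1}^{y}$, which includes the test score. This gives exactly the same set as the more common convention based on $V_1^{y},\dots,V_n^{y},+\infty$. In both lists, $V_{n+1}^{y}$ is at most the $k$-th order statistic if and only if fewer than $k$ of $V_1^{y},\dots,V_n^{y}$ lie strictly below it. Second, the full-conformal set need not be an interval. To match the paper's wording you can take its convex hull, which can only increase coverage.
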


    \subsection{Exchangeability}

    \begin{definition}[Exchangeability]\label{exch}
        Let $Z_1, \dots, Z_n \in \mathcal{Z}$ be random variables with a joint distribution. We say that the random vector $(Z_1, \dots, Z_n)$ is exchangeable if, for every permutation $\sigma \in \mathcal{S}_n$,
        \[(Z_1, \dots, Z_n) \overset{\mathrm{d}}{=} (Z_{\sigma(1)}, \dots, Z_{\sigma(n)}),\]
        where $\overset{\mathrm{d}}{=}$ denotes equality in distribution, and $\mathcal{S}_n$ is the set of all permutations on $[n] \coloneq \left\{ 1, \dots, n \right\}$. 

        Similarly, let $Z_1, Z_2, \dots \in \mathcal{Z}$ be an infinite sequence of random variables with joint distribution. We say that this infinite sequence is exchangeable if $(Z_1, \dots, Z_n)$ is exchangeable for every $n \geq 1$.
    \end{definition}

    \clearpage

    \section{Asymptotic Coverage of the Prediction Interval for ITE with Attrition}\label{sec:asym}

    In line with \citet{yang2024doubly}, we connect the targert coverage level $(1 - \gamma)$ and the EIF for $\eta_{\gamma, \mathcal{C}}$.

    \begin{lemma}\label{lem:cov}
        Let $\pi_R: \left\{ 0, 1 \right\} \times \mathcal{X} \to \mathbb{R}^{+}$ and $m_\mathcal{C}: \left\{ 0, 1  \right\} \times \mathcal{X} \times \mathbb{R} \to [0, 1]$ be any two functions. Then, under Assumption \ref{asm2} and \ref{asm3}, the following coverage for the EIF $\psi_\mathcal{C}\left( \eta_{\gamma, \mathcal{C}}, X \right)$.
        \begin{align*}
            \mathbb{P}\left( V_\mathcal{C} < \eta_{\gamma, \mathcal{C}} \mid R = 0 \right) = 1 - \gamma + \frac{\mathbb{E}\left[ \psi_\mathcal{C}\left( \eta_{\gamma, \mathcal{C}}, X; m_\mathcal{C}, \pi_R \right) \right]}{\mathbb{P}(R = 0)},
        \end{align*}
        holds true whenever either of the following holds true:
        \begin{enumerate}
            \item $\pi_R(X, D)$ is estimated consistently or
            \item $m_\mathcal{C}\left( \eta_{\gamma, \mathcal{C}}, X, D \right)$ is estimated consistently.
        \end{enumerate}
    \end{lemma}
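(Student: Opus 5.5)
I will prove the identity by computing $\mathbb{E}[\psi_\mathcal{C}(\eta_{\gamma,\mathcal{C}},X;m_\mathcal{C},\pi_R)]$ directly, conditioning on $(X,D)$ and using the tower property together with Assumption \ref{asm2}. Write $m^\ast(X,D)=\mathbb{P}(V_\mathcal{C}<\eta_{\gamma,\mathcal{C}}\mid X,D,R=1)$ and $\pi_R^\ast(X,D)=e_R(X,D)/(1-e_R(X,D))$ for the true nuisance functions, and let $m_\mathcal{C},\pi_R$ be arbitrary. The first term gives
\[
\mathbb{E}\bigl[(1-R)\{m_\mathcal{C}-(1-\gamma)\}\bigr]=\mathbb{E}\bigl[(1-e_R(X,D))\,m_\mathcal{C}(\eta_{\gamma,\mathcal{C}},X,D)\bigr]-(1-\gamma)\mathbb{P}(R=0).
\]
For the second term I condition on $(X,D,R=1)$, which replaces $\mathds{1}_{\{V_\mathcal{C}<\eta_{\gamma,\mathcal{C}}\}}$ by $m^\ast$:
\[
\mathbb{E}\Bigl[\tfrac{R}{\pi_R}\{\mathds{1}_{\{V_\mathcal{C}<\eta_{\gamma,\mathcal{C}}\}}-m_\mathcal{C}\}\Bigr]=\mathbb{E}\Bigl[\tfrac{e_R(X,D)}{\pi_R(X,D)}\{m^\ast-m_\mathcal{C}\}\Bigr].
\]

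Next I will use Lemma \ref{lem:ident2}, which gives $\mathbb{P}(V_\mathcal{C}<\eta_{\gamma,\mathcal{C}}\mid R=0)\,\mathbb{P}(R=0)=\mathbb{E}[(1-e_R(X,D))\,m^\ast(X,D)]$; this is where Assumption \ref{asm2} enters, since the score $V_\mathcal{C}$ depends on the potential outcomes only through quantities whose conditional law given $(X,D)$ does not depend on $R$. Adding and subtracting this quantity and rearranging gives the key decomposition
\[
\frac{\mathbb{E}[\psi_\mathcal{C}]}{\mathbb{P}(R=0)}=\mathbb{P}(V_\mathcal{C}<\eta_{\gamma,\mathcal{C}}\mid R=0)-(1-\gamma)+\frac{1}{\mathbb{P}(R=0)}\mathbb{E}\Bigl[e_R(X,D)\Bigl(\tfrac{1}{\pi_R}-\tfrac{1}{\pi_R^\ast}\Bigr)\{m^\ast-m_\mathcal{C}\}\Bigr],
\]
using $e_R/\pi_R^\ast=1-e_R$. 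The remainder is a product of the two nuisance errors.

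In case 1, $\pi_R=\pi_R^\ast$ and the remainder vanishes identically for any $m_\mathcal{C}$. In case 2, $m_\mathcal{C}=m^\ast$ and it again vanishes for any $\pi_R$. In both cases the stated identity follows. Assumption \ref{asm3} is used only to ensure that $1/\pi_R^\ast$ and $e_R/\pi_R$ are bounded, so all expectations are finite. I read ``estimated consistently'' as equality in the population limit; if a rate version is wanted, Cauchy--Schwarz bounds the remainder by $\norm{1/\pi_R-1/\pi_R^\ast}_2\norm{m^\ast-m_\mathcal{C}}_2$.

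The main obstacle is the bookkeeping in the second step: matching the paper's $\pi_R$, defined as the odds $P(R=1)/P(R=0)$, with the inverse weight so that $R/\pi_R^\ast$ exactly reweights the $R=1$ distribution of $(X,D)$ to the $R=0$ distribution. The factor $1/\mathbb{P}(R=0)$ from conditioning on $R=0$ also has to be tracked carefully. Once the cross term is written as a product of errors, the rest is routine.
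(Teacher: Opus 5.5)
Your proof is correct, but it reaches the identity by a different route from the paper.

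The paper's route:
\begin{enumerate}
\item It works case by case. It shows $\mathbb{E}[\psi_\mathcal{C}] = \mathbb{E}[\mathbb{P}(R=0\mid X,D)(m^\ast - (1-\gamma))]$ once under a correct $\pi_R$ and once under a correct $m_\mathcal{C}$.
\item It rewrites that quantity via Bayes' rule as $\mathbb{P}(R=0)\{\mathbb{E}[\tfrac{f(X,D\mid R=0)}{f(X,D\mid R=1)}\mathds{1}_{\{V_\mathcal{C}<\eta_{\gamma,\mathcal{C}}\}}\mid R=1]-(1-\gamma)\}$.
\item Separately, using MAR and a change of measure, it shows that $\mathbb{P}(V_\mathcal{C}<\eta_{\gamma,\mathcal{C}}\mid R=0)$ equals the same density-ratio-weighted expectation.
\end{enumerate}

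Your route:
\begin{enumerate}
\item You skip the density ratio. The tower property plus MAR gives $\mathbb{P}(V_\mathcal{C}<\eta_{\gamma,\mathcal{C}}\mid R=0)\,\mathbb{P}(R=0)=\mathbb{E}[(1-e_R)m^\ast]$ directly.
\item Instead of checking two cases, you derive an exact identity for arbitrary nuisance functions. It carries the mixed-bias remainder $\mathbb{E}[e_R(1/\pi_R-1/\pi_R^\ast)(m^\ast-m_\mathcal{C})]/\mathbb{P}(R=0)$, and both cases then follow at once.
\end{enumerate}

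What your route buys: this remainder is exactly the product term the paper re-derives separately as term II in the proof of Theorem \ref{thm:asyite}. So your decomposition already shows that a product-rate condition suffices, not exact correctness of one nuisance. What the paper's route buys: it writes both sides through the likelihood ratio $f(X,D\mid R=0)/f(X,D\mid R=1)$, which ties the EIF back to the weighting used in weighted conformal inference.

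One small correction. Assumption \ref{asm3} bounds $1/\pi_R^\ast$, but it does not bound $e_R/\pi_R$ for an arbitrary positive $\pi_R$. You need the second term of $\psi_\mathcal{C}$ to be integrable in order to split the expectation and apply the tower property. That requires a lower bound on $\pi_R$ itself, such as condition (A1), not Assumption \ref{asm3}.
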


    \begin{proof}
        See Section \ref{sec:pflemcov} for the proof of Lemma \ref{lem:cov}. 
    \end{proof}

    Lemma \ref{lem:cov} states that the asymptotic coverage of the ITE prediction interval for the attrition group deviates from the nominal level $1 - \gamma$ by a term that is proportional to the expectation of the EIF. It further establishes the double robustness of both the influence function and the resulting coverage: as long as either $\pi_R$ or $m_\mathcal{C}$ is consistently estimated, the desired coverage can be achieved. We have a set of regularity conditions required to ensure the asymptotic coverage of the estimator $\eta_{\gamma, \mathcal{C}}$.

    \begin{itemize}[noitemsep]
        \item[(A1)]  The function $(\eta, x) \mapsto \hat{m}_{\mathcal{C}}(\eta, X, D)$ is bounded, i.e., there exists $m_0$ such that for all $\eta \in \mathbb{R}$ and $x \in \mathbb{R}^d$, $|\hat{m}(\eta, x, d)| \leq m_0$. The function $x \mapsto \hat{\pi}_R(x, d)$ is bounded from below by a positive constant, i.e., there exists $\pi_0 > 0$ such that for all $x \in \mathbb{R}^d$, $|\hat{\pi}_R(x, d)| \geq \pi_0$.
        \item[(A2)] The estimator $\hat{m}_\mathcal{C}(\eta, x, d)$ is a non-decreasing function of $\eta$.
    \end{itemize}

    Following the two-step conformal inference framework for constructing interval estimates of the ITE under attrition, Theorem \ref{thm:asyite} establishes that the proposed method attains the desired asymptotic coverage for the ITE prediction intervals in the attrition group.

    \begin{theorem}\label{thm:asyite}
        Under Assumption \ref{asm1} to \ref{asm3} and regularity conditions (A1) and (A2), there exists some universal constants $C_0$ and $C_1$ such that for any $\delta > 0$,
        \begin{align*}
            \mathbb{P}\left( V_\mathcal{C} < \hat{\eta}_{\gamma, \mathcal{C}} \mid R = 0 \right) \geq& 1 - \gamma\\
            &- C_0 \frac{\pi_0^{-1} (1 + m_0)}{\mathbb{P}(R = 0)} \sqrt{\frac{\log(1 / \delta) + 1}{|\mathcal{I}_2|}}\\
            &- C_1 \frac{\norm{\hat{\pi}_R(X, D) - \pi_R(X, D)}_2 }{\mathbb{P}(R = 0)} \cdot \sup_{\eta} \norm{\hat{m}_\mathcal{C}\left( \hat{\eta}_{\gamma, \mathcal{C}}, X, D \right) - m_\mathcal{C}\left( \hat{\eta}_{\gamma, \mathcal{C}}, X, D \right)}_2
        \end{align*}
        holds with probability at least $1 - \delta$, where $|\mathcal{I}_2|$ is the Lebesgue measure of the calibration data when conducting conformal inference on the interval estimates of ITE with attrition. 
    \end{theorem}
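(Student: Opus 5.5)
The plan follows the template of Theorem 2 in \citet{yang2024doubly} and Theorem 3 of \citet{gao2025role}, with $\psi_\mathcal{C}$ from Theorem \ref{thm:eifint} in place of their influence functions. Throughout, I condition on the training fold $\mathcal{Z}_{\text{obstr}}$, so that $\hat{h}^{\text{L}}, \hat{h}^{\text{R}}, \hat{\pi}_R, \hat{m}_\mathcal{C}$ are fixed functions. The calibration sample indexed by $\mathcal{I}_2 = \mathcal{I}_{\text{obsca}} \cup \mathcal{I}_{\text{att}}$ is then i.i.d.\ given these nuisances.

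The starting point is the exact identity in Lemma \ref{lem:cov}, applied at the (random but calibration-measurable) threshold $\eta = \hat{\eta}_{\gamma, \mathcal{C}}$ with the fixed nuisances $(\hat{m}_\mathcal{C}, \hat{\pi}_R)$. I would first redo the computation behind Lemma \ref{lem:cov} without assuming either nuisance is correct. Iterating expectations over $(X, D)$ and using Assumption \ref{asm2} together with $\mathbb{P}(R=1\mid X,D)/\pi_R(X,D) = \mathbb{P}(R=0\mid X,D)$ gives, for every fixed $\eta$,
\begin{align*}
\mathbb{P}(R=0)\left[\mathbb{P}\left( V_\mathcal{C} < \eta \mid R = 0 \right) - (1-\gamma)\right] = \mathbb{E}\left[ \psi_\mathcal{C}(\eta, X; \hat{m}_\mathcal{C}, \hat{\pi}_R) \right] - \mathbb{E}\left[ e_R(X,D)\left( \frac{1}{\pi_R} - \frac{1}{\hat{\pi}_R}\right)(m_\mathcal{C} - \hat{m}_\mathcal{C})(\eta, X, D) \right].
\end{align*}
The last term is a product remainder. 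By Cauchy--Schwarz, (A1), and Assumption \ref{asm3} (so that $1/\pi_R$ is bounded), it is at most $C_1 \norm{\hat{\pi}_R - \pi_R}_2 \cdot \sup_\eta \norm{\hat{m}_\mathcal{C}(\eta,\cdot) - m_\mathcal{C}(\eta,\cdot)}_2$. Taking the supremum over $\eta$ is what lets me plug in the data-dependent $\hat{\eta}_{\gamma,\mathcal{C}}$.

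Next I use the defining property of $\hat{\eta}_{\gamma,\mathcal{C}}$: it is the smallest value with $\mathbb{P}_{\mathcal{I}_2}\psi_\mathcal{C}(\hat{\eta}_{\gamma,\mathcal{C}}, X; \hat{m}_\mathcal{C}, \hat{\pi}_R) \ge 0$. Such a root exists and the map is monotone because, by (A2), $\eta \mapsto \psi_\mathcal{C}(\eta,\cdot)$ is a sum of a non-decreasing term in $\hat{m}_\mathcal{C}$ (coefficient $1-R-R/\hat{\pi}_R$ aside, handled below) and an indicator. This yields
\begin{align*}
\mathbb{E}[\psi_\mathcal{C}(\hat{\eta}_{\gamma,\mathcal{C}})] \ge -\sup_{\eta}\left| (\mathbb{P}_{\mathcal{I}_2} - \mathbb{E})\psi_\mathcal{C}(\eta, X; \hat{m}_\mathcal{C}, \hat{\pi}_R) \right|.
\end{align*}
It remains to bound this empirical process supremum by $C_0 \pi_0^{-1}(1+m_0)\sqrt{(\log(1/\delta)+1)/|\mathcal{I}_2|}$ with probability $1-\delta$. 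Combining this with the identity and dividing by $\mathbb{P}(R=0)$ gives the theorem.

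The main obstacle is the uniform concentration step. I would decompose $\psi_\mathcal{C}$ into three pieces, each with an envelope bounded by $\pi_0^{-1}(1+m_0)$ under (A1). The first is $(1-R)\hat{m}_\mathcal{C}(\eta,X,D)$. The second is $(R/\hat{\pi}_R)\mathds{1}\{V_\mathcal{C}<\eta\}$, whose indicators form a VC class of dimension one. The third is $-(R/\hat{\pi}_R)\hat{m}_\mathcal{C}(\eta,X,D)$. The first and third pieces are indexed by $\eta$ through functions that, by (A2), are monotone in $\eta$ for each fixed $(x,d)$. Such classes, multiplied by a fixed bounded weight, have bracketing or VC-subgraph entropy comparable to one-dimensional monotone classes. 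I would then apply Talagrand's (or Bousquet's) inequality together with a Dudley entropy bound on the expected supremum. Since the constant is uniform and no dependence on $\eta$ appears, the first attempt is a direct DKW-type argument on each monotone piece. If that fails for the products, I would fall back to a bracketing-entropy bound of order $\epsilon^{-1}$ for bounded monotone families.
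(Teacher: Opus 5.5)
Your proposal is correct and follows the paper's argument: the doubly robust identity with product remainder (the paper's Lemma~\ref{lem:cov} at the true nuisances plus its term II), the defining inequality $\mathbb{P}_{\mathcal{I}_2}\psi_\mathcal{C}(\hat\eta_{\gamma,\mathcal{C}})\ge 0$, and a uniform empirical-process bound with envelope of order $\pi_0^{-1}(1+m_0)$. Your sign on the remainder is flipped, which is harmless since you only bound its absolute value. You also do not need monotonicity of $\eta\mapsto\mathbb{P}_{\mathcal{I}_2}\psi_\mathcal{C}$, and it fails in general because of the $-R\hat m_\mathcal{C}/\hat\pi_R$ term.

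The differences are in the last step. The paper writes $\hat m_\mathcal{C}(\eta,\cdot)=\int_0^{m_0}\mathds{1}\{\hat m_\mathcal{C}(\eta,\cdot)\ge u\}\,du$, uses (A2) to turn each layer into a weighted threshold indicator covered by Lemma~\ref{lem:unibound}, and concentrates with McDiarmid, rather than using your chain-class entropy and Talagrand bound. The paper also controls the $\eta$-free piece $-(1-\gamma)(1-R)$ of $\psi_\mathcal{C}$ by Hoeffding. Your three-piece decomposition omits this piece, though it is trivially of the same order.
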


    \begin{proof}
        See Section \ref{sec:pfasyite} for the proof of Theorem \ref{thm:asyite}.
    \end{proof}

    \clearpage

    \section{Technical Proofs}

    \subsection{Proof of Lemma \ref{lem:ident1} and \ref{lem:ident2}}\label{sec:pflem12}

    \begin{proof}[Proof of Lemma \ref{lem:ident1}]
       The result is immediate following the law of iterated expectations and the unconfoundedness. 
    \end{proof}

    \begin{proof}[Proof of Lemma \ref{lem:ident2}]
        The result is immediate following the law of iterated expectations and the MAR assumption.
    \end{proof}

    \subsection{Proof of Theorem \ref{thm:eifcf}}\label{sec:pfeifcf}

    \begin{proof}[Proof of Theorem \ref{thm:eifcf}]
        To prove Theorem \ref{thm:eifcf} and derive the EIF, we leverage Gateaux derivatives (directional derivatives) and the Riesz representation theorem.

        Define $\mathcal{O} \coloneq (X, Y, D, R)$. The efficient influence function is defined as the unique mean-zero function $\psi$ such that the pathwise derivative of the target parameter $\eta_{\alpha, 1}$ satisfies:
        \begin{align}
            \frac{\partial}{\partial t} \eta_{\alpha, 1} (P_t) \big|_{t = 0} = \mathbb{E}\left[ \psi(\mathcal{O}) \cdot s(\mathcal{O}) \right], \label{eq:if}
        \end{align}
        where $P_t$ is a perturbation of the true distribution $P$ along a score function $s(\mathcal{O})$, and $s(\mathcal{O})$ is any valid score function. According to \citet{kennedy2023semiparametric}, the influence function $\psi$ has been referred to as pathwise derivative, gradient, and Neyman orthogonal score \citep[e.g.,][]{chernozhukov2018double,newey1994asymptotic}.

        Define the functional 
        \begin{align*}
            \Psi(\mathcal{O}) &\coloneq \mathbb{E}\left[ m_1(\eta_{\alpha, 1}, X) \mid D = 0, R = 1\right] - (1 - \alpha)\\
            &= \frac{\mathbb{E}\left[\left(  m_1(\eta_{\alpha, 1}, X) - (1 - \alpha) \right) \cdot \mathds{1}_{\left\{ D = 0, R = 1 \right\}} \right] }{\mathbb{E}\left[ \mathds{1}_{\left\{ D = 0, R = 1 \right\}} \right]}\\
            &= \mathbb{E}\left[ R(1 - D)\left( m_1(\eta_{\alpha, 1}, X) - (1 - \alpha) \right) \right].
        \end{align*}
        Then, the parameter $\eta_{\alpha, 1}$, which is the $(1 - \alpha)$-quantile of the nonconformity score $V_1 = V(X, Y(1))$ given $D = 0, R = 1$, is identified by the moment condition
        \begin{align*}
            \Psi(\mathcal{O}) = \mathbb{E}\left[ R(1 - D)\left( m_1(\eta_{\alpha, 1}, X) - (1 - \alpha) \right) \right] = 0.
        \end{align*}
        Consider a pathwise perturbation of the true distribution $P$ along a score function $s(\mathcal{O})$, where $s(\mathcal{O})$ satisfies:
        \[\mathbb{E}[s(\mathcal{O})] = 0, \quad \mathbb{E}\left[ s(\mathcal{O})^2 \right] < \infty.\]
        The perturbed distribution is:
        \[P_t(\mathcal{O}) = (1 + t \cdot s(\mathcal{O})) P(\mathcal{O}),\]
        for small $t$, ensuring $P_t$ remains a valid probability distribution.

        The perturbed parameter $\eta_{\alpha, 1}(P_t)$ satisfies the perturbed moment condition:
        \begin{align}
            \mathbb{E}_t\left[ R(1 - D) \left( m_{1, t}(\eta_{\alpha, 1}(P_t), X) - (1 - \alpha)\right) \right] = 0
        \end{align}
        To find how $\eta_{\alpha, 1}(P_t)$ changes with $t$, we take the total derivative of the perturbed moment condition at w.r.t. $t$ at $t = 0$:
        \begin{align}
            \frac{\partial}{\partial t} \mathbb{E}_t\left[ R(1 - D) \left( m_{1, t}(\eta_{\alpha, 1}(P_t), X) - (1 - \alpha)\right) \right] \bigg|_{t = 0} = 0.
        \end{align}

        We have the result that the expectation under $P_t$ of any function $g(\mathcal{O})$ is 
        \begin{align*}
            \mathbb{E}_t[g(\mathcal{O})] &= \int g(\mathcal{O})P_t(\mathcal{O}) d \mathcal{O} = \int g(\mathcal{O}) ( 1 + t \cdot s(\mathcal{O})) P(\mathcal{O}) d\mathcal{O}\\
            &= \mathbb{E}[g(\mathcal{O})] + t \mathbb{E}\left[ g(\mathcal{O}) \cdot s(\mathcal{O}) \right].
        \end{align*}
        Then, for the function $g(\mathcal{O}) = R(1 - D) \left( m_{1, t}(\eta_{\alpha, 1}(P_t), X) - (1 - \alpha)\right)$, the perturbed moment condition is
        \begin{align*}
            \mathbb{E}_t\left[ R(1 - D) \left( m_{1, t}(\eta_{\alpha, 1}(P_t), X) - (1 - \alpha)\right) \right] &= \mathbb{E}\left[ R(1 - D)\left( m_1(\eta_{\alpha, 1}, X) - (1 - \alpha) \right) \right]\\
            &+ t \mathbb{E}\left[ s(\mathcal{O})\cdot R(1 - D) \left( m_1(\eta_{\alpha, 1}, X) - (1 - \alpha) \right) \right].
        \end{align*}

        By the chain rule, the derivative of the perturbed moment condition at $t = 0$ has three components: the perturbation of the expectation $\mathbb{E}$, the perturbation of the conditional probability $m_1$, and the perturbation of the quantile $\eta_{\alpha, 1}$. So, we can write the total derivatives explicitly.
        By the result above, the perturbation of the expectation $\mathbb{E}$ gives
        \begin{align*}
            \text{I} &= \frac{\partial}{\partial t} \mathbb{E}_t\left[ R(1 - D) \left( m_{1, t}(\eta_{\alpha, 1}(P_t), X) - (1 - \alpha)\right) \right] \bigg|_{t = 0} \\
            &= \mathbb{E}\left[ s(\mathcal{O}) \cdot R(1 - D) \left( m_{1}(\eta_{\alpha, 1}, X) - (1 - \alpha)\right)\right].
        \end{align*}
        The perturbation of the conditional probability $m_1$ gives
        \begin{align*}
            \text{II} = \mathbb{E}\left[ R(1 - D) \frac{\partial m_{1, t}(\eta_{\alpha, 1}, X)}{\partial t} \bigg|_{t = 0}\right]
        \end{align*}
        Under the perturbed distribution $P_t$, by Bayes' rule, the conditional probability becomes
        \begin{align*}
            m_{1, t}(\eta_{\alpha, 1}, X) &= P_t(V_1 < \eta_{\alpha, 1} \mid X, D = 1, R = 1)\\
            &= \frac{P_t(V_1 < \eta_{\alpha, 1}, D = 1, R = 1 \mid X)}{P_t(D = 1, R = 1 \mid X)}\\
            &= \frac{\mathbb{E}_t\left[ \mathds{1}_{\left\{ V_1 < \eta_{\alpha, 1} \right\}} DR \mid X \right]}{\mathbb{E}_t\left[ DR \mid X \right]}\\
            &= \frac{\mathbb{E}\left[ \mathds{1}_{\left\{ V_1 < \eta_{\alpha, 1} \right\}} DR \mid X \right] + t \mathbb{E}\left[s(\mathcal{O}) \cdot \mathds{1}_{\left\{ V_1 < \eta_{\alpha, 1} \right\}} DR \mid X \right]}{\mathbb{E}\left[ DR \mid X \right] + t \mathbb{E}\left[ s(\mathcal{O}) \cdot DR \mid X\right]}\\
            &\approx m_1(\eta_{\alpha, 1}, X) + t \left( \frac{\mathbb{E}\left[s(\mathcal{O}) \cdot \mathds{1}_{\left\{ V_1 < \eta_{\alpha, 1} \right\}} DR \mid X \right]}{\mathbb{E}\left[ DR \mid X \right]} - m_1(\eta_{\alpha, 1}, X) \cdot \frac{\mathbb{E}\left[ s(\mathcal{O}) \cdot DR \mid X\right]}{\mathbb{E}\left[ DR \mid X \right]} \right),
        \end{align*}
        where the last approximation comes from the first-order Taylor expansion. Then, we take the derivative of $m_{1, t}(\eta_{\alpha, 1}, X)$ w.r.t. $t$ at $t = 0$:
        \begin{align*}
            \frac{\partial m_{1, t}(\eta_{\alpha, 1}, X)}{\partial t} \bigg|_{t = 0} &= \frac{\mathbb{E}\left[s(\mathcal{O}) \cdot \mathds{1}_{\left\{ V_1 < \eta_{\alpha, 1} \right\}} DR \mid X \right]}{\mathbb{E}\left[ DR \mid X \right]} - m_1(\eta_{\alpha, 1}, X) \cdot \frac{\mathbb{E}\left[ s(\mathcal{O}) \cdot DR \mid X\right]}{\mathbb{E}\left[ DR \mid X \right]}\\
            &= \mathbb{E}\left[s(\mathcal{O}) \cdot \left( \frac{\mathds{1}_{\left\{ V_1 < \eta_{\alpha, 1} \right\}} DR }{\mathbb{E}\left[ DR \mid X \right]} - m_1(\eta_{\alpha, 1}, X) \cdot \frac{DR}{\mathbb{E}\left[ DR \mid X \right]} \right) \mid X \right]\\
            &= \mathbb{E}\left[ s(\mathcal{O}) \cdot \left( \mathds{1}_{\left\{ V_1 < \eta_{\alpha, 1}\right\}} - m_1(\eta_{\alpha, 1}, X) \right) \cdot \frac{DR}{P(D = 1, R = 1 \mid X)} \mid X \right].
        \end{align*}
        Then, plug in to II, we have
        \begin{align*}
            \text{II} &= \mathbb{E} \cdot \left[ R(1 - D) \mathbb{E}\left[ s(\mathcal{O}) \cdot \left( \mathds{1}_{\left\{ V_1 < \eta_{\alpha, 1}\right\}} - m_1(\eta_{\alpha, 1}, X) \right) \cdot \frac{DR}{P(D = 1, R = 1 \mid X)} \mid X \right] \right]\\
            &= \mathbb{E}\left[ s(\mathcal{O}) \cdot DR \cdot \frac{P(D = 0, R = 1 \mid X)}{P(D = 1, R = 1 \mid X)} \cdot \left( \mathds{1}_{\left\{ V_1 < \eta_{\alpha, 1}\right\}} - m_1(\eta_{\alpha, 1}, X) \right)\right]\\
            &= \mathbb{E}\left[ s(\mathcal{O}) \cdot DR \cdot \frac{(1 - e_D(X)) e_R(X, 0)}{e_D(X) e_R(X, 1)} \cdot \left( \mathds{1}_{\left\{ V_1 < \eta_{\alpha, 1}\right\}} - m_1(\eta_{\alpha, 1}, X) \right) \right]\\
            &= \mathbb{E}\left[ s(\mathcal{O}) \cdot DR \cdot \frac{e_R(X, 0)}{\pi_D(X) e_R(X, 1)} \cdot \left( \mathds{1}_{\left\{ V_1 < \eta_{\alpha, 1}\right\}} - m_1(\eta_{\alpha, 1}, X) \right)\right].
        \end{align*}
        The perturbation of the quantile $\eta_{\alpha, 1}$ gives
        \begin{align*}
            \text{III} &= \mathbb{E}\left[ R(1 - D) \frac{\partial m_1(\eta_{\alpha, 1}, X)}{\partial \eta} \cdot \frac{\partial \eta_{\alpha, 1}(P_t)}{\partial t} \bigg|_{t = 0} \right]\\
            &\propto \frac{\partial }{\partial t}\eta_{\alpha, 1}(P_t).
        \end{align*}
        Therefore, the total derivative of the perturbed moment condition at $t = 0$ is
        \begin{align*}
            \frac{\partial}{\partial t} \mathbb{E}_t\left[ R(1 - D) \left( m_{1, t}(\eta_{\alpha, 1}(P_t), X) - (1 - \alpha)\right) \right] \bigg|_{t = 0} = 0 = \text{I} + \text{II} + \text{III}.
        \end{align*}
        Then, 
        \begin{equation}
            \begin{aligned}
            \frac{\partial }{\partial t}&\eta_{\alpha, 1}(P_t) \propto\\
             &-\mathbb{E}\left[ s(\mathcal{O}) \cdot \left\{ R(1 - D)\left( m_1(\eta_{\alpha, 1}, X) - (1 - \alpha) \right) + \frac{DRe_R(X, 0)}{\pi_D(X) e_R(X, 1)} \cdot \left( \mathds{1}_{\left\{ V_1 < \eta_{\alpha, 1}\right\}} - m_1(\eta_{\alpha, 1}, X) \right) \right\} \right]\nonumber
        \end{aligned}
        \end{equation}
        
        By the Riesz representation theorem and Equation \ref{eq:if}, the efficient influence function $\psi$ is given by
        \begin{equation}
           \begin{aligned}
            \psi \left( \eta_{\alpha, 1}, X; m, e_R, \pi_D \right)& = \\
            &R(1 - D)\left( m_1(\eta_{\alpha, 1}, X) - (1 - \alpha) \right) + \frac{DRe_R(X, 0)}{\pi_D(X) e_R(X, 1)} \cdot \left( \mathds{1}_{\left\{ V_1 < \eta_{\alpha, 1}\right\}} - m_1(\eta_{\alpha, 1}, X) \right).\nonumber
        \end{aligned} 
        \end{equation}
        
        The EIF under $D = 0$ follows the same derivation. There are other ways to derive the EIF. See \citet{gao2025role} for a proof using parametric submodels and tangent spaces from a Hilbert space perspective.
    \end{proof}

    \subsection{Proof of Theorem \ref{thm:eifint}}\label{sec:pfeifint}

    \begin{proof}[Proof of Theorem \ref{thm:eifint}]
        The proof follows the same steps as in Theorem \ref{thm:eifcf}. 

        The $(1 - \gamma)$-quantile of the nonconformity score $V_\mathcal{C} = V(X, \mathcal{C}_i)$ for the attrition group $R = 0$ is identified by the moment condition
        \begin{align*}
            &\mathbb{E}\left[ m_{\mathcal{C}}(\eta_{\gamma, \mathcal{C}}, X, D) \mid R = 0 \right] - (1 - \gamma) = 0 \\
            \Rightarrow &\mathbb{E}\left[ (1 - R) \left( m_{\mathcal{C}}(\eta_{\gamma, \mathcal{C}}, X, D) - (1 - \gamma) \right) \right] = 0.
        \end{align*}

        Consider a pathwise perturbation of the true distribution $P$ along a score function $s(\mathcal{O})$, where $s(\mathcal{O})$ satisfies:
        \[\mathbb{E}[s(\mathcal{O})] = 0, \quad \mathbb{E}\left[ s(\mathcal{O})^2 \right] < \infty.\]
        The perturbed distribution is:
        \[P_t(\mathcal{O}) = (1 + t \cdot s(\mathcal{O})) P(\mathcal{O}),\]
        for small $t$, ensuring $P_t$ remains a valid probability distribution.

        Under $P_t$, the moment condition becomes
        \begin{align*}
            \mathbb{E}_t\left[ (1 - R) \left( m_{\mathcal{C}, t}\left( \eta_{\gamma, \mathcal{C}}(P_t), X, D \right) - (1 - \gamma)\right) \right] = 0.
        \end{align*}

        To find how $\eta_{\gamma, \mathcal{C}}(P_t)$ changes with $t$, we take the total derivative of the perturbed moment condition at w.r.t. $t$ at $t = 0$:
        \begin{align*}
            \frac{\partial}{\partial t} \mathbb{E}_t\left[ (1 - R) \left( m_{\mathcal{C}, t}\left( \eta_{\gamma, \mathcal{C}}(P_t), X, D \right) - (1 - \gamma)\right) \right] \bigg|_{t = 0} = 0.
        \end{align*}
        By the results shown in Theorem \ref{thm:eifcf}, we have 
        \begin{align*}
            \mathbb{E}_t\left[ (1 - R) \left( m_{\mathcal{C}, t}\left( \eta_{\gamma, \mathcal{C}}(P_t), X, D \right) - (1 - \gamma)\right) \right] &= \mathbb{E}\left[ (1 - R) \left( m_{\mathcal{C}}\left( \eta_{\gamma, \mathcal{C}}, X, D \right) - (1 - \gamma)\right) \right]\\
            &+ t \mathbb{E}\left[ s(\mathcal{O}) \cdot (1 - R) \left( m_{\mathcal{C}}\left( \eta_{\gamma, \mathcal{C}}, X, D \right) - (1 - \gamma)\right) \right].
        \end{align*}
        Then, by the chain rule, the derivative of the perturbed moment condition at $t = 0$ has three components: the perturbation of the expectation $\mathbb{E}$, the perturbation of the conditional probability $m_{\mathcal{C}}$, and the perturbation of the quantile $\eta_{\gamma, \mathcal{C}}$. So, we can write the total derivatives explicitly.

        By the result above, the perturbation of the expectation $\mathbb{E}$ gives
        \begin{align*}
            \text{I} &= \frac{\partial}{\partial t} \mathbb{E}_t\left[ (1 - R) \left( m_{\mathcal{C}}\left( \eta_{\gamma, \mathcal{C}}, X, D \right) - (1 - \gamma)\right) \right] \bigg|_{t = 0} \\
            &= \mathbb{E}\left[ s(\mathcal{O}) \cdot (1 - R) \left( m_{\mathcal{C}}\left( \eta_{\gamma, \mathcal{C}}, X, D \right) - (1 - \gamma)\right) \right].
        \end{align*}
        Then perturbation of the conditional probability $m_{\mathcal{C}}$ gives
        \begin{align*}
            \text{II} &= \mathbb{E}\left[ (1 - R) \frac{\partial m_{\mathcal{C}, t}\left( \eta_{\gamma, \mathcal{C}}, X, D \right)}{\partial t} \bigg|_{t = 0}\right].
        \end{align*}
        Under the perturbed distribution $P_t$, by Bayes' rule, the conditional probability becomes
        \begin{align*}
            m_{\mathcal{C}, t}\left( \eta_{\gamma, \mathcal{C}}, X, D \right) &= P_t\left( V_\mathcal{C} < \eta_{\gamma, \mathcal{C}} \mid X, D, R = 1 \right) \\
            &= \frac{P_t\left( V_\mathcal{C} < \eta_{\gamma, \mathcal{C}}, R = 1 \right)}{P_t(R = 1 \mid X, D)}\\
            &= \frac{\mathbb{E}_t\left[ \mathds{1}_{\left\{ V_\mathcal{C} < \eta_{\gamma, \mathcal{C}} \right\}} R \mid X, D \right]}{\mathbb{E}_t\left[ R \mid X, D \right]}\\
            &= \frac{\mathbb{E}\left[ \mathds{1}_{\left\{ V_\mathcal{C} < \eta_{\gamma, \mathcal{C}} \right\}} R \mid X, D \right] + t \mathbb{E}\left[s(\mathcal{O}) \cdot \mathds{1}_{\left\{ V_\mathcal{C} < \eta_{\gamma, \mathcal{C}} \right\}} R \mid X, D\right]}{\mathbb{E}[R \mid X, D] + t \mathbb{E}[s(\mathcal{O}) \cdot R \mid X, D]}\\
            & \approx m_{\mathcal{C}}\left( \eta_{\gamma, \mathcal{C}}, X, D \right) + t \left( \frac{\mathbb{E}\left[s(\mathcal{O}) \cdot \mathds{1}_{\left\{ V_\mathcal{C} < \eta_{\gamma, \mathcal{C}} \right\}} R \mid X, D \right]}{\mathbb{E}[R \mid X, D]} - m_\mathcal{C}\left( \eta_{\gamma, \mathcal{C}}, X, D \right) \cdot \frac{\mathbb{E}\left[ s(\mathcal{O}) \cdot R \mid X, D \right]}{\mathbb{E}\left[ R \mid X, D \right]} \right),
        \end{align*}
        where the last approximation comes from the first-order Taylor expansion. Then, taking the derivative of $m_{\mathcal{C}, t}\left( \eta_{\gamma, \mathcal{C}}, X, D \right)$ w.r.t. $t$ at $t = 0$:
        \begin{align*}
            \frac{\partial m_{\mathcal{C}, t}\left( \eta_{\gamma, \mathcal{C}}, X, D \right)}{\partial t} \bigg|_{t = 0} &= \frac{\mathbb{E}\left[ s(\mathcal{O}) \cdot R\left( \mathds{1}_{\left\{ V_\mathcal{C} < \eta_{\gamma, \mathcal{C}} \right\}} - m_\mathcal{C}(\eta_{\gamma, \mathcal{C}}, X, D) \right) \mid X, D \right]}{\mathbb{E}[R \mid X, D]}\\
            &= \mathbb{E}\left[ s(\mathcal{O}) \cdot \frac{R}{e_R(X, D)} \left( \mathds{1}_{\left\{ V_\mathcal{C} < \eta_{\gamma, \mathcal{C}} \right\}} - m_\mathcal{C}(\eta_{\gamma, \mathcal{C}}, X, D) \right) \mid X \right].
        \end{align*}
        Plug into II, we have 
        \begin{align*}
            \text{II} &= \mathbb{E}\left[ (1 - R) \mathbb{E}\left[ s(\mathcal{O}) \cdot \frac{R}{e_R(X, D)} \left( \mathds{1}_{\left\{ V_\mathcal{C} < \eta_{\gamma, \mathcal{C}} \right\}} - m_\mathcal{C}(\eta_{\gamma, \mathcal{C}}, X, D) \right) \mid X, D \right] \right]\\
            &= \mathbb{E}\left[ s(\mathcal{O}) \cdot R \frac{1 - e_R(X, D)}{e_R(X, D)} \left( \mathds{1}_{\left\{ V_\mathcal{C} < \eta_{\gamma, \mathcal{C}} \right\}} - m_\mathcal{C}(\eta_{\gamma, \mathcal{C}}, X, D) \right)\right]\\
            &= \mathbb{E}\left[ s(\mathcal{O}) \cdot \frac{R}{\pi_R(X, D)} \left( \mathds{1}_{\left\{ V_\mathcal{C} < \eta_{\gamma, \mathcal{C}} \right\}} - m_\mathcal{C}(\eta_{\gamma, \mathcal{C}}, X, D) \right)\right]
        \end{align*}
        The perturbation of the quantile $\eta_{\gamma, \mathcal{C}}$ gives
        \begin{align*}
            \text{III} &= \mathbb{E}\left[ R \frac{\partial m_\mathcal{C}\left( \eta_{\gamma, \mathcal{C}}, X, D \right)}{\partial \eta} \frac{\partial \eta_{\gamma, \mathcal{C}}(P_t)}{\partial t} \bigg|_{t = 0} \right]\\
            &\propto \frac{\partial }{\partial t}\eta_{\gamma, \mathcal{C}}(P_t).
        \end{align*}
        Therefore, the total derivative of the perturbed moment condition at $t = 0$ is
        \begin{align*}
            \frac{\partial}{\partial t} \mathbb{E}_t\left[ (1 - R) \left( m_{\mathcal{C}, t}\left( \eta_{\gamma, \mathcal{C}}(P_t), X, D \right) - (1 - \gamma)\right) \right] \bigg|_{t = 0} = 0 = \text{I} + \text{II} + \text{III}.
        \end{align*}
        Then, 
        \begin{align*}
            \frac{\partial }{\partial t}&\eta_{\gamma, \mathcal{C}}(P_t)\\
             &\propto -\mathbb{E}\left[ s(\mathcal{O}) \cdot \left\{ (1 - R)\left( m_\mathcal{C}(\eta_{\gamma, \mathcal{C}}, X, D) - (1 - \gamma) \right) + \frac{R}{\pi_R(X, D)} \cdot \left( \mathds{1}_{\left\{ V_\mathcal{C} < \eta_{\gamma, \mathcal{C}}\right\}} - m_\mathcal{C}(\eta_{\gamma, \mathcal{C}}, X, D) \right) \right\} \right].
        \end{align*}
        By the Riesz representation theorem, the efficient influence function $\psi_\mathcal{C}$ is given by 
        \begin{align*}
            \psi_\mathcal{C} \left( \eta_{\gamma, \mathcal{C}}, X; m_\mathcal{C}, e_R, \pi_R \right) &= (1 - R)\left( m_\mathcal{C}(\eta_{\gamma, \mathcal{C}}, X, D) - (1 - \gamma) \right)\\
            &+ \frac{R}{\pi_R(X, D)} \cdot \left( \mathds{1}_{\left\{ V_\mathcal{C} < \eta_{\gamma, \mathcal{C}}\right\}} - m_\mathcal{C}(\eta_{\gamma, \mathcal{C}}, X, D) \right).
        \end{align*} 
    \end{proof}

    \subsection{Proof of Theorem \ref{thm:asyite}}\label{sec:pfasyite}

    \begin{proof}[Proof of Theorem \ref{thm:asyite}]
        By Lemma \ref{lem:cov}, we can obtain that for the empirically estimated $\hat{\eta}_{\gamma, \mathcal{C}}$ using the calibration data $\mathcal{I}_2$ for conducting conformal inference on the attrition group,
        \begin{align*}
            \mathbb{P}\left( V_\mathcal{C} < \hat{\eta}_{\gamma, \mathcal{C}} \mid R = 0 \right) - (1 - \gamma) &= \frac{\mathbb{P}_{\mathcal{I}_2}\left[ \psi_\mathcal{C}\left( \hat{\eta}_{\gamma, \mathcal{C}}, X; \hat{m}_\mathcal{C}, \hat{\pi}_R \right) \right]}{\mathbb{P}(R = 0)}\\
            &+ \frac{\mathbb{E}\left[ \psi_\mathcal{C}\left( \hat{\eta}_{\gamma, \mathcal{C}}, X; \hat{m}_\mathcal{C}, \hat{\pi}_R \right) \right] - \mathbb{P}_{\mathcal{I}_2}\left[ \psi_\mathcal{C}\left( \hat{\eta}_{\gamma, \mathcal{C}}, X; \hat{m}_\mathcal{C}, \hat{\pi}_R \right) \right]}{\mathbb{P}(R = 0)}\\
            &+ \frac{\mathbb{E}\left[ \psi\left( \hat{\eta}_{\gamma, \mathcal{C}}, X; m_\mathcal{C}, \pi_R \right) \right] - \mathbb{E}\left[ \psi_\mathcal{C}\left( \hat{\eta}_{\gamma, \mathcal{C}}, X; \hat{m}_\mathcal{C}, \hat{\pi}_R \right) \right]}{\mathbb{P}(R = 0)}\\
            &\geq 0 + \text{I} + \text{II},
        \end{align*}
        where $\mathbb{P}_{\mathcal{I}_2}\left[ \cdot \right]$ denotes the sample mean over the calibration data. Here, $\mathbb{P}_{\mathcal{I}_2} \left[ \psi_\mathcal{C}\left( \hat{\eta}_{\gamma, \mathcal{C}}, X; \hat{m}_\mathcal{C}, \hat{\pi}_R \right) \right] \geq 0$ by definition. Term $\text{I}$ is negligible if $\psi_\mathcal{C}\left( \eta_{\gamma, \mathcal{C}}, X; m_\mathcal{C}, \pi_R \right)$ belongs to a Donsker class. If the Donsker condition is not satisfied, the sample splitting procedure proposed by \citet{chernozhukov2018double} can be used to ensure that $\text{I}$ is negligible, i.e., use $\mathcal{I}_1$ for  estimation and $\mathcal{I}_2$ for estimation of $\eta_{\gamma, \mathcal{C}}$. See Lemma \ref{lem:bound1} for more details on the bound of $\text{I}$.

        Term $\text{II}$ is the second-order remainder term, which is negligible if the s $m_\mathcal{C}\left( \eta_{\gamma, \mathcal{C}}, X, D \right)$ and $\pi_R(X, D)$ are estimated consistently. 
        \begin{align*}
            \mathbb{E}&\left[ \psi_\mathcal{C}\left( \hat{\eta}_{\gamma, \mathcal{C}}, X; \hat{m}_\mathcal{C}, \hat{\pi}_R \right) \right] - \mathbb{E}\left[ \psi\left( \hat{\eta}_{\gamma, \mathcal{C}}, X; m_\mathcal{C}, \pi_R \right) \right] \\
            =& \mathbb{E}\left[ \mathbb{P}\left( R = 0 \mid X, D \right) \left[ \hat{m}_\mathcal{C}\left( \hat{\eta}_{\gamma, \mathcal{C}}, X, D \right) - m_\mathcal{C}\left( \hat{\eta}_{\gamma, \mathcal{C}}, X, D \right) \right] \right]\\
            &+ \mathbb{E}\left[ \mathbb{P}\left( R = 1 \mid X, D \right) \cdot \frac{1}{\pi_R(X, D)} \left( m_{\mathcal{C}}\left( \hat{\eta}_{\gamma, \mathcal{C}}, X, D \right) - \hat{m}_{\mathcal{C}}\left( \hat{\eta}_{\gamma, \mathcal{C}}, X, D \right) \right) \right]\\
            &+ \mathbb{E}\left[ \mathbb{P}\left( R = 1 \mid X, D \right) \cdot \left( \frac{1}{\hat{\pi}_R(X, D)} - \frac{1}{\pi_R(X, D)} \right) \left( \mathds{1}_{\left\{ V_\mathcal{C} < \hat{\eta}_{\gamma, \mathcal{C}} \right\}} - \hat{m}_\mathcal{C}\left( \hat{\eta}_{\gamma, \mathcal{C}}, X, D \right) \right) \right]\\
            =& \mathbb{E}\left[ \mathbb{P}\left( R = 1 \mid X, D \right) \cdot \left( \frac{1}{\hat{\pi}_R(X, D)} - \frac{1}{\pi_R(X, D)} \right) \left( m_\mathcal{C}\left( \hat{\eta}_{\gamma, \mathcal{C}}, X, D\right) - \hat{m}_\mathcal{C}\left( \hat{\eta}_{\gamma, \mathcal{C}}, X, D \right) \right) \right].
        \end{align*}
        Therefore, 
        \begin{align*}
            \sup_{\eta_{\gamma, \mathcal{C}} \in \mathbb{R}} & \left| \mathbb{E}\left[ \psi_\mathcal{C}\left( \hat{\eta}_{\gamma, \mathcal{C}}, X; \hat{m}_\mathcal{C}, \hat{\pi}_R \right) \right] - \mathbb{E}\left[ \psi\left( \hat{\eta}_{\gamma, \mathcal{C}}, X; m_\mathcal{C}, \pi_R \right) \right] \right| \\
            =&\sup_{\eta_{\gamma, \mathcal{C}} \in \mathbb{R}} \left| \mathbb{E}\left[ \mathbb{P}\left( R = 1 \mid X, D \right) \cdot \left( \frac{1}{\hat{\pi}_R(X, D)} - \frac{1}{\pi_R(X, D)} \right) \left( m_\mathcal{C}\left( \hat{\eta}_{\gamma, \mathcal{C}}, X, D \right) - \hat{m}_\mathcal{C}\left( \hat{\eta}_{\gamma, \mathcal{C}}, X, D \right) \right) \right] \right|\\
            \lesssim & \norm{\hat{\pi}_R(X, D) - \pi_R(X, D)}_2 \cdot \sup_{\eta_{\gamma, \mathcal{C}}} \norm{\hat{m}_\mathcal{C}\left( \hat{\eta}_{\gamma, \mathcal{C}}, X, D \right) - m_\mathcal{C}\left( \hat{\eta}_{\gamma, \mathcal{C}}, X, D \right)}_2,
        \end{align*}
        where the last inequality follows from the Mean Value Theorem and the H\"{o}lder's inequality.

        Combining the bounds for I and II, we have 
        \begin{align*}
            \mathbb{P}\left( V_\mathcal{C} < \hat{\eta}_{\gamma, \mathcal{C}} \mid R = 0 \right) \geq& 1 - \gamma\\
            &- C_0 \frac{\pi_0^{-1} (1 + m_0)}{\mathbb{P}(R = 0)} \sqrt{\frac{\log(1 / \delta) + 1}{|\mathcal{I}_2|}}\\
            &- C_1 \frac{\norm{\hat{\pi}_R(X, D) - \pi_R(X, D)}_2 }{\mathbb{P}(R = 0)} \cdot \sup_{\eta} \norm{\hat{m}_\mathcal{C}\left( \hat{\eta}_{\gamma, \mathcal{C}}, X, D \right) - m_\mathcal{C}\left( \hat{\eta}_{\gamma, \mathcal{C}}, X, D \right)}_2
        \end{align*}
        with probability at least $1 - \delta$.
        
    \end{proof}

    \subsection{Additional Technical Details}

    \subsubsection{Proof of Lemma \ref{lem:cov}}\label{sec:pflemcov}

    Firstly, we can show that
    \begin{equation}\label{eq:lemcov1}
        \begin{aligned}
            \mathbb{P}\left( V_\mathcal{C} < \eta_{\gamma, \mathcal{C}} \mid R = 0 \right) &= \mathbb{E}\left[ \mathbb{E}\left[ V_\mathcal{C} < \eta_{\gamma, \mathcal{C}} \mid X, D, R = 0 \right] \mid R = 0 \right]\\
            \overset{\text{Assumption 2}}&{=} \mathbb{E}\left[ \mathbb{E}\left[ V_\mathcal{C} < \eta_{\gamma, \mathcal{C}} \mid X, D, R = 1 \right] \mid R = 0 \right]\\
            &= \int \int_{-\infty}^{\eta_{\gamma, \mathcal{C}}} f\left( \eta_\mathcal{C} \mid x, d, R = 1 \right)  d\eta_\mathcal{C} dP_{X, D \mid R = 0}(x, d)\\
            &= \int \int_{-\infty}^{\eta_{\gamma, \mathcal{C}}} \frac{f\left( x, d \mid R = 0 \right)}{f\left( x, d \mid R = 1 \right)} f\left( \eta_\mathcal{C} \mid x, d, R = 1 \right)  d\eta_\mathcal{C} dP_{X, D \mid R = 1}(x, d)\\
            &= \mathbb{E}\left[ \frac{f\left( X, D \mid R = 0 \right)}{f\left( X, D \mid R = 1 \right)}\cdot \mathds{1}_{\left\{ V_\mathcal{C} < \eta_{\gamma, \mathcal{C}} \right\}} \mid R = 1 \right].
        \end{aligned}
    \end{equation}
    On the other hand, we can show that 
    \begin{align}\label{eq:lemcov2}
        \mathbb{E}\left[ \psi_\mathcal{C}\left( \eta_{\gamma, \mathcal{C}}, X; m_\mathcal{C}, \pi_R \right) \right] = \mathbb{P}(R = 0) \cdot \left\{ \mathbb{E}\left[ \frac{f(X, D \mid R = 0)}{f(X, D \mid R = 1)} \cdot \mathds{1}_{\left\{ V_\mathcal{C} < \eta_{\gamma, \mathcal{C}} \right\}} \mid R = 1 \right] -  (1 - \gamma) \right\}.
    \end{align}
    Combining Equation \eqref{eq:lemcov1} and \eqref{eq:lemcov2} completes the proof. We prove Equation \eqref{eq:lemcov2} in two steps.
    \begin{align}\label{eq:lemcov2.1}
        \mathbb{E}\left[ \psi_\mathcal{C}\left( \eta_{\gamma, \mathcal{C}}, X; m_\mathcal{C}, \pi_R \right) \right] = \mathbb{E}\left[ \mathbb{P}\left( R = 0 \mid X, D \right) \left( \mathbb{P}\left( V_\mathcal{C} < \eta_{\gamma, \mathcal{C}} \mid X, D, R = 1 \right) - (1 - \gamma) \right) \right], \tag{\ref*{eq:lemcov2}.1} 
    \end{align}
    and
    \begin{equation}\label{eq:lemcov2.2}
        \begin{aligned}
        \mathbb{E}&\left[ \mathbb{P}\left( R = 0 \mid X, D \right) \left( \mathbb{P}\left( V_\mathcal{C} < \eta_{\gamma, \mathcal{C}} \mid X, D, R = 1 \right) - (1 - \gamma) \right) \right] \\
        &= \mathbb{P}(R = 0) \cdot \left\{ \mathbb{E}\left[ \frac{f(X, D \mid R = 0)}{f(X, D \mid R = 1)} \cdot \mathds{1}_{\left\{ V_\mathcal{C} < \eta_{\gamma, \mathcal{C}} \right\}} \mid R = 1 \right] -  (1 - \gamma) \right\} 
    \end{aligned}\tag{\ref*{eq:lemcov2}.2}
    \end{equation}
    
    For Equation \eqref{eq:lemcov2.1}, if $\pi_R(X, D) = \frac{\mathbb{P}\left( R = 1 \mid X, D \right)}{\mathbb{P}\left( R = 0 \mid X, D \right)}$ is correct, then we have
    \begin{align*}
        \mathbb{E}\left[ \frac{R}{\pi_R(X, D)} \mid X, D \right] = \mathbb{P}\left( R = 1 \mid X, D \right) \cdot \frac{\mathbb{P}\left( R = 0 \mid X, D \right)}{\mathbb{P}\left( R = 1 \mid X, D \right)} = \mathbb{P}\left( R = 0 \mid X, D \right).
    \end{align*}
    This implies that 
    \begin{align*}
        \mathbb{E}&\left[ \frac{R}{\pi_R(X, D)} \left[ \mathds{1}_{\left\{ V_\mathcal{C} < \eta_{\gamma, \mathcal{C}} \right\}} - m_\mathcal{C}(\eta_{\gamma, \mathcal{C}}, X, D) \right] \right] \\
        &= \mathbb{E}\left[ \mathbb{P}\left( R = 0 \mid X, D \right) \left[ \mathbb{P}\left( V_\mathcal{C} < \eta_{\gamma, \mathcal{C}}  \mid X, D, R = 1\right) - m_\mathcal{C}\left( \eta_{\gamma, \mathcal{C}}, X, D \right) \right] \right].
    \end{align*}
    Similarly,
    \begin{align*}
        \mathbb{E}\left[ (1 - R) \left[ m_\mathcal{C}\left( \eta_{\gamma, \mathcal{C}}, X, D \right) - (1 - \gamma) \right]\right] = \mathbb{E}\left[ \mathbb{P}\left( R = 0 \mid X, D \right) \left[ m_\mathcal{C}\left( \eta_{\gamma, \mathcal{C}}, X, D \right) - (1 - \gamma) \right] \right].
    \end{align*}
    Hence, if $\pi_R(\cdot)$ is the true density ratio, we have
    \begin{align*}
        \mathbb{E}\left[ \psi_\mathcal{C}\left( \eta_{\gamma, \mathcal{C}}, X; m_\mathcal{C}, \pi_R \right) \right] = \mathbb{E}\left[ \mathbb{P}\left( R = 0 \mid X, D \right) \left( \mathbb{P}\left( V_\mathcal{C} < \eta_{\gamma, \mathcal{C}} \mid X, D, R = 1 \right) - (1 - \gamma) \right) \right].
    \end{align*}
    This competes the proof of Equation \eqref{eq:lemcov2.1} when $\pi_R(\cdot)$ is estimated correctly.

    If the conditional mean $m_\mathcal{C}\left( \eta_{\gamma, \mathcal{C}}, X, D \right) = \mathbb{E}\left[ \mathds{1}_{\left\{ V_\mathcal{C} < \eta_{\gamma, \mathcal{C}} \right\}} \mid X, D, R = 1 \right]$ is correct, then we have 
    \begin{align*}
        \mathbb{E}\left[ \frac{R}{\pi_R(X, D)} \left[ \mathds{1}_{\left\{ V_\mathcal{C} < \eta_{\gamma, \mathcal{C}} \right\}} - m_\mathcal{C}\left( \eta_{\gamma, \mathcal{C}}, X, D \right) \right] \right] = 0
    \end{align*}
    Hence,
    \begin{align*}
        \mathbb{E}\left[ \psi_\mathcal{C}\left( \eta_{\gamma, \mathcal{C}}, X; m_\mathcal{C}, \pi_R \right) \right] &= \mathbb{E}\left[ (1 - R) \left[ m_\mathcal{C}\left( \eta_{\gamma, \mathcal{C}}, X, D \right) - (1 - \gamma) \right] \right]\\
        &= \mathbb{E}\left[ \mathbb{P}\left( R = 0 \mid X, D \right) \left[ \mathbb{P}\left( V_\mathcal{C} < \eta_{\gamma, \mathcal{C}} \mid X, D, R = 1 \right)  - (1 - \gamma)\right] \right].
    \end{align*}
    This completes the proof of Equation \eqref{eq:lemcov2.1} when $m_\mathcal{C}(\cdot)$ is estimated correctly.

    Then, for Equation \eqref{eq:lemcov2.2}, we can show that
    \begin{equation}\label{eq:lemcov}
        \begin{aligned}
            &\mathbb{E}\left[ \mathbb{P}\left( R = 0 \mid X, D \right) \left[ \mathbb{P}\left( V_\mathcal{C} < \eta_{\gamma, \mathcal{C}} \mid X, D, R = 1 \right)  - (1 - \gamma)\right] \right]\\
            =& \mathbb{E}\left[ R \cdot \frac{\mathbb{P}(R = 0 \mid X, D)}{\mathbb{P}(R = 1 \mid X, D)} \cdot \mathbb{P}\left( V_\mathcal{C} < \eta_{\gamma, \mathcal{C}} \mid X, D, R = 1 \right) \right] - \mathbb{P}(R = 0) \cdot (1 - \gamma)\\
            =& \frac{\mathbb{P}(R = 0)}{\mathbb{P}(R = 1)} \mathbb{E}\left[ R \cdot \frac{f(X, D \mid R = 0)}{f(X, D \mid R = 1)} \cdot \mathds{1}_{\left\{ V_\mathcal{C} < \eta_{\gamma, \mathcal{C}} \right\}} \right] - \mathbb{P}(R = 0) \cdot (1 - \gamma)\\
            %=& \frac{\mathbb{P}(R = 0)}{\mathbb{P}(R = 1)} \mathbb{E}_R \left[ \mathbb{E}\left[ R \cdot \frac{f(X, D \mid R = 0)}{f(X, D \mid R = 1)} \cdot \mathds{1}_{\left\{ V_\mathcal{C} < \eta_{\gamma, \mathcal{C}} \right\}} \mid X, D, R \right] \right] - \mathbb{P}(R = 0) \cdot (1 - \gamma)\\
            =& \frac{\mathbb{P}(R = 0)}{\mathbb{P}(R = 1)} \cdot \mathbb{P}(R = 1) \cdot \mathbb{E}\left[ \frac{f(X, D \mid R = 0)}{f(X, D \mid R = 1)} \cdot \mathds{1}_{\left\{ V_\mathcal{C} < \eta_{\gamma, \mathcal{C}} \right\}} \mid R = 1 \right] - \mathbb{P}(R = 0) \cdot (1 - \gamma)\\
            =& \mathbb{P}(R = 0) \cdot \mathbb{E}\left[ \frac{f(X, D \mid R = 0)}{f(X, D \mid R = 1)} \cdot \mathds{1}_{\left\{ V_\mathcal{C} < \eta_{\gamma, \mathcal{C}} \right\}} \mid R = 1 \right] - \mathbb{P}(R = 0) \cdot (1 - \gamma).
        \end{aligned}
    \end{equation}
    This completes the proof of Equation \eqref{eq:lemcov2.2}. 

    Thus, we have
    \begin{align*}
        \mathbb{P}\left( V_\mathcal{C} <\eta_{\gamma, \mathcal{C}} \mid R = 0 \right) = 1 - \gamma + \frac{\mathbb{E}\left[ \psi_\mathcal{C}\left( \eta_{\gamma, \mathcal{C}}, X; m_\mathcal{C}, \pi_R \right) \right]}{\mathbb{P}(R = 0)}.
    \end{align*}

    \subsubsection{Proof of Lemma \ref{lem:bound1}}

    \begin{lemma}\label{lem:bound1}
        Under regularity conditions (A1) and (A2), there exists a universal constant $C_0$ such that for any $\delta > 0$,
        \begin{align*}
            \mathbb{P}\left( | \mathrm{I} | \leq C_0 \frac{\pi_0^{-1}(1 + m_0)}{\mathbb{P}(R = 0)} \sqrt{\frac{\log(1 / \delta) + 1}{|\mathcal{I}_2|}} \mid \mathcal{I}_1\right) \geq 1 - \delta.
        \end{align*}
    \end{lemma}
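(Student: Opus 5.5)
Term I is the centred empirical-process fluctuation
\[
\mathrm{I}=\frac{\mathbb{E}\left[\psi_\mathcal{C}(\hat\eta_{\gamma,\mathcal{C}},X;\hat m_\mathcal{C},\hat\pi_R)\right]-\mathbb{P}_{\mathcal{I}_2}\left[\psi_\mathcal{C}(\hat\eta_{\gamma,\mathcal{C}},X;\hat m_\mathcal{C},\hat\pi_R)\right]}{\mathbb{P}(R=0)},
\]
where the expectation is conditional on $\mathcal{I}_1$. The plan is to condition on $\mathcal{I}_1$ throughout, so that $\hat m_\mathcal{C}$ and $\hat\pi_R$ become fixed functions and the calibration points in $\mathcal{I}_2$ are i.i.d. The only remaining randomness linking the function to the sample is then $\hat\eta_{\gamma,\mathcal{C}}$, which is chosen on $\mathcal{I}_2$. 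I would therefore bound the supremum over $\eta\in\mathbb{R}$ of the centred process
\[
G(\eta)=\mathbb{P}_{\mathcal{I}_2}[\psi_\mathcal{C}(\eta,\cdot)]-\mathbb{E}[\psi_\mathcal{C}(\eta,\cdot)\mid\mathcal{I}_1],
\]
and then plug in $\eta=\hat\eta_{\gamma,\mathcal{C}}$. Dividing by $\mathbb{P}(R=0)$ produces the stated prefactor.

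First I would establish a uniform envelope. By (A1), $|\hat m_\mathcal{C}|\le m_0$ and $1/\hat\pi_R\le\pi_0^{-1}$, so
\[
|\psi_\mathcal{C}|\le (m_0+1)+\pi_0^{-1}(1+m_0)\lesssim \pi_0^{-1}(1+m_0),
\]
absorbing constants; if $\pi_0$ is large I would use $\max(1,\pi_0^{-1})$ instead. Next I would write $\psi_\mathcal{C}(\eta,\cdot)$ as a sum of three pieces:
\begin{itemize}
\item $(1-R)\,\hat m_\mathcal{C}(\eta,X,D)$;
\item $-(1-R)(1-\gamma)-\dfrac{R}{\hat\pi_R}\,\hat m_\mathcal{C}(\eta,X,D)$;
\item $\dfrac{R}{\hat\pi_R}\,\mathds{1}\{V_\mathcal{C}<\eta\}$.
\end{itemize}
Each piece is a fixed bounded weight times a function that is monotone in $\eta$. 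The monotonicity comes from (A2) for $\hat m_\mathcal{C}$ and is automatic for the indicator. A class of the form $\{w\cdot g_\eta\}$, with $g_\eta$ monotone in a real index and bounded by the envelope, has VC-subgraph (pseudo-)dimension at most about 2. Its uniform entropy integral is therefore bounded by an absolute constant times the envelope.

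Then I would apply a standard maximal inequality, such as Theorem 2.14.1 of van der Vaart–Wellner or a symmetrization/Dudley bound, to get
\[
\mathbb{E}\Bigl[\sup_\eta|G(\eta)|\Bigm|\mathcal{I}_1\Bigr]\lesssim \pi_0^{-1}(1+m_0)\,|\mathcal{I}_2|^{-1/2}.
\]
The function $\sup_\eta|G|$ changes by at most $2\cdot\text{envelope}/|\mathcal{I}_2|$ when one point is replaced. McDiarmid's bounded-differences inequality therefore adds a deviation of order $\pi_0^{-1}(1+m_0)\sqrt{\log(1/\delta)/|\mathcal{I}_2|}$ with conditional probability at least $1-\delta$. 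Adding the expectation term and the deviation term gives the form $C_0\sqrt{(\log(1/\delta)+1)/|\mathcal{I}_2|}$.

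The main obstacle I expect is the complexity step: making sure the class indexed by $\eta$ has bounded entropy uniformly in the estimated nuisances. This is exactly where (A2) is needed, since without monotonicity of $\hat m_\mathcal{C}$ in $\eta$ the class could be arbitrarily rich. I would handle it through the monotone-function, VC-subgraph argument above: a monotone family indexed by a scalar has subgraphs that are nested, which forces VC dimension 1. An alternative that avoids empirical-process machinery is a bracketing/DKW-type argument on a finite grid of $\eta$ values. That route costs a $\log|\mathcal{I}_2|$ factor unless it is chained, so I prefer the VC route, which gives the clean constant.
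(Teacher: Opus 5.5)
Your proposal is correct and follows the same overall structure as the paper. You condition on $\mathcal{I}_1$ so that $\hat m_\mathcal{C}$ and $\hat\pi_R$ are fixed, bound $\sup_\eta$ of the centred process and then plug in $\hat\eta_{\gamma,\mathcal{C}}$, control the expected supremum, and finish with McDiarmid. The difference lies in how you control the expected supremum.

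The paper splits the process into three terms. $\mathcal{R}_1$ is the $R\hat\pi_R^{-1}\mathds{1}\{V_\mathcal{C}<\eta\}$ term, $\mathcal{R}_2$ collects the $\hat m_\mathcal{C}$ terms, and $\mathcal{R}_3$ is the $\eta$-free term, handled by Hoeffding. It bounds each expected supremum with one imported indicator-class inequality, Lemma~\ref{lem:unibound}. To put $\mathcal{R}_2$ into indicator form, it uses (A2) through the layer-cake identity $\hat m_\mathcal{C}(\eta,X,D)=\int_0^{m_0}\mathds{1}\{\eta\ge h(X,u)\}\,du$, where $h$ is a generalized inverse in $\eta$. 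It then moves the supremum inside the $du$-integral.

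You use (A2) differently. Each of your pieces is pointwise monotone in the scalar index, so its subgraphs form a chain and the class is VC-subgraph of bounded index. The uniform-entropy maximal inequality then handles all pieces at once. Your hedge of ``about 2'' is unnecessary: even with mixed-sign weights, each point's membership flips at most once as $\eta$ increases, giving at most $k+1$ patterns on $k$ points, so the VC dimension is at most $1$.

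The paper's route is more elementary given the cited lemma and makes the role of monotonicity explicit. Yours avoids the generalized inverse and the tacit requirement $\hat m_\mathcal{C}\ge 0$ behind $\int_0^{m_0}$, since (A1) only gives $|\hat m_\mathcal{C}|\le m_0$. The cost is heavier empirical-process machinery. Both give the same $\pi_0^{-1}(1+m_0)|\mathcal{I}_2|^{-1/2}$ rate.

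Your $\max(1,\pi_0^{-1})$ caveat is also correct. The $(1-R)$ parts of $\psi_\mathcal{C}$ have an envelope that does not scale with $\pi_0^{-1}$. The paper's proof quietly assumes $\pi_0^{-1}\ge 1$ in two places. It bounds the bounded-difference constants of $[(1-R)-R/\hat\pi_R]\hat m_\mathcal{C}$ by $2m_0\pi_0^{-1}/\sqrt{N}$, and it absorbs the Hoeffding term for $\mathcal{R}_3$ into $\pi_0^{-1}(1+m_0)$. The statement should therefore be read with $\pi_0\le 1$, which is without loss of generality.
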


    \begin{proof}[Proof of Lemma \ref{lem:bound1}]
        The proof is adapted from Theorem 3 of \citet{yang2024doubly} and Lemma A.1 from \citet{gao2025role}. Without loss of generality, assume the indexes in $\mathcal{I}_2$ is $1, \dots, N$, with $N \coloneq |\mathcal{I}_2|$. We can expand $\mathbb{P}_{\mathcal{I}_2} \left[ \psi_\mathcal{C}\left( \hat{\eta}_{\gamma, \mathcal{C}}, X; \hat{m}_\mathcal{C}, \hat{\pi}_R \right) \right] - \mathbb{E}\left[ \psi_\mathcal{C}\left( \hat{\eta}_{\gamma, \mathcal{C}}, X; \hat{m}_\mathcal{C}, \hat{\pi}_R \right) \right]$ into three parts.
        \begin{align*}
            \mathbb{P}_{\mathcal{I}_2} &\left[ \psi_\mathcal{C}\left( \hat{\eta}_{\gamma, \mathcal{C}}, X; \hat{m}_\mathcal{C}, \hat{\pi}_R \right) \right] - \mathbb{E}\left[ \psi_\mathcal{C}\left( \hat{\eta}_{\gamma, \mathcal{C}}, X; \hat{m}_\mathcal{C}, \hat{\pi}_R \right) \right] \\
            &= \frac{1}{N} \sum_{i = 1}^{N} \frac{R}{\hat{\pi}_R(X_i, D_i)} \mathds{1}_{\left\{ V_{\mathcal{C}, i} < \eta_{\gamma, \mathcal{C}} \right\}} - \mathbb{E}\left[ \frac{R}{\hat{\pi}_R(X_i, D_i)}\mathds{1}_{\left\{ V_{\mathcal{C}, i} < \eta_{\gamma, \mathcal{C}} \right\}} \right]\\
            &+ \frac{1}{N} \sum_{i = 1}^{N} \left( (1 - R) - \frac{R}{\hat{\pi}_R(X_i, D_i)} \hat{m}_\mathcal{C}\left(\eta_{\gamma, \mathcal{C}}, X_i, D_i \right) \right) - \mathbb{E}\left[ (1 - R) - \frac{R}{\hat{\pi}_R(X_i, D_i)} \hat{m}_\mathcal{C}\left(\eta_{\gamma, \mathcal{C}}, X_i, D_i \right) \right]\\
            &+ (1 - \gamma) \left[ \frac{1}{N} \sum_{i = 1}^{N} (1 - R) - \mathbb{P}(R = 0) \right]\\
            & \eqcolon \mathcal{R}_1(\eta_{\gamma, \mathcal{C}}) + \mathcal{R}_2(\eta_{\gamma, \mathcal{C}}) + \mathcal{R}_3(\eta_{\gamma, \mathcal{C}}),
        \end{align*}
        where these three terms will be controlled separately. For $\mathcal{O}_i = \left( X_i, Y_i, D_i, R_i \right), i \in \mathcal{I}_{2}$ and any function $f: \mathbb{R}^{d + 3} \to \mathbb{R}$, define
        \begin{align*}
            \mathbb{G}_Nf = \frac{1}{\sqrt{N}} \sum_{i = 1}^{N} \left[ f(\mathcal{O}_i) - \mathbb{E}\left[ f(\mathcal{O}_i) \right] \right]
        \end{align*}

        \noindent\textbf{Bound on} $\sup_{\eta_{\gamma, \mathcal{C}}} |\mathcal{R}_1(\eta_{\gamma, \mathcal{C}})|$: We define a class of functions $\mathcal{F}_1$:
        \begin{align*}
            \mathcal{F} \coloneq \left\{ f: f_\eta = \frac{R}{\pi_R(X, D)} \mathds{1}_{\left\{ V_\mathcal{C} < \eta_{\gamma, \mathcal{C}} \right\}}, \forall \eta \in \mathbb{R} \right\}.
        \end{align*}
        Notice that $\forall f_\mathcal{\eta} \in \mathcal{F}$, we have $|f_\mathcal{\eta}| \leq R\pi_0^{-1} \mathds{1}_{\left\{ V_\mathcal{C} < \eta_{\gamma, \mathcal{C}} \right\}}$. Therefore, $F(w) \coloneq R \pi_0^{-1}$ is an envelope function of $\mathcal{F}$. Let $\norm{\cdot}_{\mathcal{F}}$ denote the supremum norm over the class $\mathcal{F}$, i.e., $\norm{z}_\mathcal{F} = \sup_{f \in \mathcal{F}} |z(f)|$. 

        Applying Lemma \ref{lem:unibound} with $s(r, x) = \frac{R}{\hat{\pi}_R(X, D)}$ and $h(x, y) = V_\mathcal{C}$ gives 
        \begin{align*}
            \mathbb{E}\norm{\mathbb{G}_N}_\mathcal{F} \leq \mathfrak{C}_1\pi_0^{-1},
        \end{align*}
        where $\mathfrak{C}_1$ is a universal constant. Applying McDiarmid's inequality, we have 
        \begin{align}
            \mathbb{P}\left( \norm{\mathbb{G}_N}_\mathcal{F} - \mathbb{E}\norm{\mathbb{G}_N}_\mathcal{F} \geq t \right) \leq 2 \exp\left( -\frac{2t^2}{\sum_{i = 1}^{N} c_i^2} \right) \leq 2 \exp\left( -\frac{2t^2}{\sum_{i = 1}^{N}4\pi_0^{-2} / N} \right) = \exp\left( -\frac{t^2}{2\pi_0^{-2}} \right), \label{eq:bound1}
        \end{align}
        where 
        \begin{align*}
            c_i \leq \sup_{\mathcal{O}, \mathcal{O}'} \sup_{\eta} \sqrt{N}\left|\frac{1}{N}\frac{R_i}{\hat{\pi}_R(X_i, D_i)} \mathds{1}_{\left\{ V_{\mathcal{C}, i} < \eta \right\}} - \frac{1}{N}\frac{R_i'}{\hat{\pi}_R(X_i', D_i')} \mathds{1}_{\left\{ V_{\mathcal{C}, i'} < \eta \right\}}\right| \leq \frac{2\pi_0^{-1}}{\sqrt{N}}
        \end{align*}
        Substituting into Equation \eqref{eq:bound1}, we have 
        \begin{align}
            \mathbb{P}\left( \norm{\mathbb{G}_N}_{\mathcal{F}} \geq C_2 \pi_0^{-1} \sqrt{1 + \log\left( \frac{1}{\delta} \right)} \right) \leq \delta, \label{eq:boundR1}
        \end{align}
        for an absolute constant $C_2$.

        \noindent\textbf{Bound on} $\sup_{\eta_{\gamma, \mathcal{C}}}|\mathcal{R}_2(\eta_{\gamma, \mathcal{C}})|$: similar to the above, we define a class of functions $\mathcal{F}$: $\mathcal{F} \coloneq \left\{ f_\eta:  f_\eta = \left[ (1 - R) - \frac{R}{\hat{\pi}_R(X, D)} \right] \hat{m}_{\mathcal{C}}\left( \eta_{\gamma, \mathcal{C}}, X, D \right), \forall \eta \in \mathbb{R} \right\}$, where
        \begin{align*}
            f_\eta &= \left[ (1 - R) - \frac{R}{\hat{\pi}_R(X, D)} \right] \cdot \hat{m}_{\mathcal{C}}\left( \eta_{\gamma, \mathcal{C}}, X, D \right)\\
            &= \left[ (1 - R) - \frac{R}{\hat{\pi}_R(X, D)} \right] \int_{0}^{m_0} \mathds{1}_{\left\{ \hat{m}_\mathcal{C}\left( \eta_{\gamma, \mathcal{C}}, X, D \right) \geq u \right\}} du\\
            &= \int_{0}^{m_0} \left[ (1 - R) - \frac{R}{\hat{\pi}_R(X, D)} \right] \mathds{1}_{\left\{ \eta_{\gamma, \mathcal{C}} \geq h(X, u) \right\}} du,
        \end{align*}
        where the second equality is from the monotonicity of $\hat{m}_\mathcal{C}\left( \eta_{\gamma, \mathcal{C}}, X, D \right)$ in $\eta_{\gamma, \mathcal{C}}$. Then,
        \begin{align*}
            \sup_{\eta} |\mathbb{G}_N f| &= \sup_{\eta} \left| \mathbb{G}_N \left[ \int_{0}^{m_0} \left[ (1 - R) - \frac{R}{\hat{\pi}_R(X, D)} \right] \mathds{1}_{\left\{ \eta_{\gamma, \mathcal{C}} \geq h(X, u) \right\}} du \right] \right|\\
            &\leq \int_{0}^{m_0} \sup_{\eta} \left| \mathbb{G}_N \left[  \left[ (1 - R) - \frac{R}{\hat{\pi}_R(X, D)} \right] \mathds{1}_{\left\{ \eta_{\gamma, \mathcal{C}} \geq h(X, u) \right\}} \right] \right| du\\
        \end{align*}
        Therefore, taking expectation on both sides and applying Lemma \ref{lem:unibound}, we have
        \begin{align*}
            \mathbb{E}\norm{\mathbb{G}_N}_\mathcal{F} \lesssim \int_{0}^{m_0} \mathfrak{C}_2 \pi_0^{-1} du \lesssim \mathfrak{C}_2 m_0 \pi_0^{-1}.
        \end{align*}
        By McDiarmid's inequality, we have
        \begin{align}
            \mathbb{P}\left( \norm{\mathbb{G}_N}_\mathcal{F} - \mathbb{E}\norm{\mathbb{G}_N}_\mathcal{F} \geq t \right) \leq \exp\left( -\frac{2t^2}{\sum_{i = 1}^{N} c_i^2} \right) \leq \exp\left( -\frac{2t^2}{\sum_{i = 1}^{N} 4\pi_0^{-2} m_0^2 / N} \right) = \exp\left( \frac{-t^2}{2\pi_0^{-2} m_0^2} \right), \label{eq:bound2}
        \end{align}
        where 
        \begin{align*}
            c_i &\coloneq \sup_{\mathcal{O}, \mathcal{O}'} \sup_{\eta} \sqrt{N}\left| \frac{1}{N} \left[ (1 - R_i) - \frac{R_i}{\hat{\pi}_R(X_i, D_i)} \right] \hat{m}_\mathcal{C}\left( \eta, X_i, D_i \right) - \frac{1}{N} \left[ (1 - R_i') - \frac{R_i'}{\hat{\pi}_R(X_i', D_i')} \right] \hat{m}_\mathcal{C}\left( \eta, X_i', D_i' \right) \right|\\
            &\leq \frac{2m_0\pi_0^{-1}}{\sqrt{N}}.
        \end{align*}
        Substituting into Equation \eqref{eq:bound2}, we have
        \begin{align}
            \mathbb{P}\left( \norm{\mathbb{G}_N}_\mathcal{F} \geq C_3 m_0 \pi_0^{-1} \sqrt{ 1 + \log\left( \frac{1}{\delta} \right)} \right) \leq \delta, \label{eq:boundR2}
        \end{align}
        for an absolute constant $C_3$.

        \noindent\textbf{Bound on} $\sup_{\eta_{\gamma, \mathcal{C}}} |\mathcal{R}_3(\eta_{\gamma, \mathcal{C}})|$: Since the random variables $R_i$ is i.i.d, applying Hoeffding's inequality gives
        \begin{align*}
            \mathbb{P}\left( \frac{1}{N}\sum_{i = 1}^{N} (1 - R_i) - \mathbb{P}\left( R_i = 0 \right) \geq t \right) \leq \exp\left( -\frac{2t^2}{N} \right),
        \end{align*}
        which implies that
        \begin{align}
            \mathbb{P}\left( \mathcal{R}_3 \geq (1 - \gamma) \sqrt{\frac{1}{2N} \log\left( \frac{1}{\delta} \right)} \right) \leq \delta. \label{eq:boundR3}
        \end{align}

        Combining the bounds in Equations \eqref{eq:boundR1}, \eqref{eq:boundR2} and \eqref{eq:boundR3} using the union bound, we have that for $\delta$ > 0, there exists a universal constant $C_0$ such that
        \begin{align*}
            \sup_\eta |\mathcal{R}_1\left( \eta_{\gamma, \mathcal{C}} \right) + \mathcal{R}_2\left( \eta_{\gamma, \mathcal{C}} \right) + \mathcal{R}_3\left( \eta_{\gamma, \mathcal{C}} \right)| \lesssim C_0 \pi_0^{-1} (1 + m_0) \sqrt{\frac{\log(1 / \delta) + 1}{|\mathcal{I}_2|}}
        \end{align*}
        with probability at least $1 - \delta$.

    \end{proof}

    \subsubsection{Additional Lemmas}

    \begin{lemma}[Lemma 8 of \citet{yang2024doubly}]\label{lem:unibound}
        There exists a universal constant $\mathfrak{C} < \infty$ such that for any functions $s(t, x) \in [-\kappa_0. \kappa_0]$ and $h(x, y)$.
        \[\mathbb{E}\left[ \sup_{\eta} |\mathbb{G}_N\left[ s(r, x) \mathds{1}_{\left\{ h(x, y) \leq \eta \right\}} \right]| \right] \leq \mathfrak{C}\kappa_0.\] 
    \end{lemma}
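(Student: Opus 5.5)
The statement is a maximal inequality for the empirical process indexed by the one-parameter family
\[
\mathcal{F}_s \coloneq \left\{ f_\eta(r,x,y) = s(r,x)\mathds{1}_{\left\{ h(x,y) \leq \eta \right\}} : \eta \in \mathbb{R} \right\}.
\]
Throughout, $s$ and $h$ are fixed: in the applications above they are built from $\mathcal{I}_1$, and we condition on that fold. The plan is to show that $\mathcal{F}_s$ has a bracketing (or uniform) entropy integral bounded by a universal constant. We then invoke a standard maximal inequality of the form
\[
\mathbb{E}\norm{\mathbb{G}_N}_{\mathcal{F}_s} \lesssim J_{[\,]}\!\left(1,\mathcal{F}_s,L_2(P)\right)\norm{F}_{P,2}
\]
(van der Vaart and Wellner, Theorem 2.14.2). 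With the envelope $F \equiv \kappa_0$, this gives the bound $\mathfrak{C}\kappa_0$ with $\mathfrak{C}$ free of $P$, $s$, $h$ and $N$.

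The key step is to construct brackets explicitly. First, split $s = s^+ - s^-$ with $0 \leq s^\pm \leq \kappa_0$. Since $\sup_\eta |\mathbb{G}_N(f^+_\eta - f^-_\eta)| \leq \sup_\eta|\mathbb{G}_N f^+_\eta| + \sup_\eta|\mathbb{G}_N f^-_\eta|$, it suffices to treat $s \geq 0$.

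For $s \geq 0$, consider the nondecreasing function $G(\eta) \coloneq \mathbb{E}\left[ s(R,X)^2 \mathds{1}_{\left\{ h(X,Y)\leq \eta \right\}} \right] \in [0,\kappa_0^2]$. Given $\varepsilon>0$, choose a grid $-\infty=\eta_0<\eta_1<\dots<\eta_K=+\infty$ with $K \lesssim \varepsilon^{-2}$ such that consecutive points satisfy $G(\eta_{j}-) - G(\eta_{j-1}) \leq \varepsilon^2\kappa_0^2$. To handle atoms of $G$, use left limits, i.e. brackets of the form $[\mathds{1}_{\{h\leq \eta_{j-1}\}}, \mathds{1}_{\{h<\eta_j\}}]$, together with degenerate brackets at the atoms themselves. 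Then every $f_\eta$ lies in some bracket $[s\mathds{1}_{\{h\leq\eta_{j-1}\}},\, s\mathds{1}_{\{h<\eta_j\}}]$, and each bracket has $L_2(P)$-width at most $\varepsilon\kappa_0$. This yields
\[
N_{[\,]}(\varepsilon\kappa_0,\mathcal{F}_s,L_2(P)) \leq C\varepsilon^{-2},
\]
so $J_{[\,]}(1,\mathcal{F}_s) \leq \int_0^1 \sqrt{1+\log(C\varepsilon^{-2})}\,d\varepsilon$, which is a universal constant.

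An alternative route, if one prefers uniform entropy, is to note that $\{\mathds{1}_{\{h\leq\eta\}}\}$ is a VC class of index $2$ (the sets are nested). Multiplying by a single fixed bounded function preserves the uniform covering bound $\sup_Q N(\varepsilon\norm{F}_{Q,2},\mathcal{F}_s,L_2(Q)) \leq (A/\varepsilon)^{V}$ with universal $A,V$. One then applies symmetrization plus Dudley's chaining (Theorem 2.14.1). Either route gives the same universal constant.

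The main obstacle is making sure that nothing in the constant depends on the unknown law of $h(X,Y)$, in particular its atoms, or on $N$. The bracketing construction above via $G$ and its left limits is exactly where this is handled, so I would carry that step out carefully. Measurability of the supremum is harmless, since $\eta$ can be restricted to the rationals plus atoms by right-continuity.
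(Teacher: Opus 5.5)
Your proof is correct, but there is no in-paper proof to compare it against: the paper quotes this lemma from \citet{yang2024doubly} and uses it as a black box in Lemma~\ref{lem:bound1}, so your write-up supplies the missing argument.

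Both of your routes work. The second is the textbook argument for bounds of this kind: the nested sets $\{h\le\eta\}$ form a VC class of index $2$, multiplying by a fixed bounded $s$ preserves the uniform covering bound, and the uniform-entropy maximal inequality finishes. Your primary bracketing route is slightly more self-contained. The bracketing maximal inequality needs no $P$-measurability hypothesis. Your bracket count is just the $s^2$-weighted version of the classical bound $N_{[\,]}(\varepsilon,\cdot,L_2(P))\le 2/\varepsilon^2$ for half-line indicators.

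The grid construction is the one step you asserted rather than showed, so make it explicit. Take $\eta_j=\inf\{\eta: G(\eta)\ge G(\eta_{j-1})+\varepsilon^2\kappa_0^2\}$.
\begin{itemize}
\item Right-continuity of $G$ gives $\eta_j>\eta_{j-1}$ and $G(\eta_j)\ge G(\eta_{j-1})+\varepsilon^2\kappa_0^2$, so there are at most $1+\varepsilon^{-2}$ cells.
\item By construction $G(\eta_j-)-G(\eta_{j-1})\le\varepsilon^2\kappa_0^2$.
\item With half-open cells $[\eta_{j-1},\eta_j)$ you need no separate brackets at atoms.
\item Right-continuity alone also lets you restrict $\eta$ to the rationals.
\end{itemize}

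Your requirement that $s$ and $h$ be fixed, by conditioning on the fold used to fit them, is a genuine hypothesis that the paper's statement omits. Without it the bound fails. Take $s\equiv 1$ and let $h\le 0$ exactly at the observed points, with $(X,Y)$ continuously distributed; then $\mathbb{G}_N\mathds{1}_{\{h\le 0\}}=\sqrt{N}$.

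Finally, using the envelope $|s|$ instead of $\kappa_0$ gives the sharper bound $\mathfrak{C}\,(\mathbb{E}s^2)^{1/2}$.
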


    \clearpage

    \section{Algorithms}

    \subsection{Interval Estimates for ITE with Attrition by Semiparametric Efficient Estimator}

    \begin{algorithm}[!ht]
        \caption{Interval Estimates for ITE with Attrition by Semiparametric Efficient Estimator}
        \label{alg:iteattreif}
        \begin{algorithmic}
            \Require Level $\alpha$, level $\gamma$, data $\mathcal{Z} \equiv \left( X_i, Y_i, D_i, R_i \right)_{i = 1}^n$, function $\hat{q}_{\beta}(x; \mathcal{D})$ to fit $\beta$-th conditional quantile, functions $\hat{\pi}_D(x; \mathcal{D}), \hat{e}_R(x, d; \mathcal{D})$, and $\hat{\pi}_R(x, d; \mathcal{D})$ to fit the propensity score at $x$ (and $d$), function $\hat{m}(t, W; \mathcal{D})$ to fit the conditional CDF at $t$, $\hat{h}^{\text{L}}(x; \mathcal{D}), \hat{h}^{\text{R}}(x; \mathcal{D})$ to fit the conditional mean using $\mathcal{D}$ as data, efficient influence functions $\psi_d$ and $\psi_\mathcal{C}$ to identify the threshold of nonconformity score.
        \end{algorithmic}
        \textbf{Step I}: Data splitting
        \begin{algorithmic}[1]
            \State Split the data into three folds: a pretraining fold $\mathcal{Z}_{\text{pr}}$, a training fold $\mathcal{Z}_{\text{tr}}$, and a calibration fold $\mathcal{Z}_{\text{ca}}$. 
            \State Split the training fold into two subfolds $\mathcal{Z}_{\text{tr}, 1}$ and $\mathcal{Z}_{\text{tr}, 2}$.
        \end{algorithmic}
        \textbf{Step II}: Conformal Inference for Counterfactuals
        \begin{algorithmic}[1]
            \State For each $i \in \mathcal{I}_{\text{tr}} \cup \mathcal{I}_{\text{ca}}$ with $R_i = 1$, compute the nonconformity score following CQR: $V_i = \max\left\{ \hat{q}_{\alpha_{\text{lo}}}(X_i; \mathcal{Z}_{\text{pr}}) - Y_i, Y_i - \hat{q}_{\alpha_{\text{hi}}}(X_i; \mathcal{Z}_{\text{pr}}) \right\}$.
            \State For each $i \in \mathcal{I}_{\text{tr}, 1}$, obtain an initial estimator of the threshold $\hat{\eta}_{\alpha, d}^{\text{init}}$.
            \State For each $i \in \mathcal{I}_{\text{tr}, 2}$, train the conditional CDF $\hat{m}(\hat{\eta}_{\alpha, d}^{\text{init}}, X)$.
            \For{$i \in \mathcal{I}_\text{ca}$ with $D_i = 1$}
                \State Identify $\hat{\eta}_{\alpha, 0}$ by solving the eif $\psi_0$ using $\hat{\pi}_D(X_i; \mathcal{Z}_{\text{pr}}), \hat{e}_R(X_i, D_i = 1; \mathcal{Z}_{\text{pr}}),$ and $\hat{m}(\hat{\eta}_{\alpha, 1}^{\text{init}}, X_i, R_i = 1, D_i = 1; \mathcal{Z}_{\text{tr, 2}})$ as s.
                \State Construct $\mathcal{C}_0(x) = [\hat{Y}_i^{\text{L}}(0), \hat{Y}_i^{\text{R}}(0)] = [\hat{q}_{\alpha_{\text{lo}}}(X_i; \mathcal{Z}_{\text{pr}}) - \hat{\eta}_{\alpha, 0}, \hat{\eta}_{\alpha, 0} - \hat{q}_{\alpha_{\text{hi}}}(X_i; \mathcal{Z}_{\text{pr}})]$.
                \State Compute $\mathcal{C}_{\text{ITE}} = [Y_i(1) - \hat{Y}_i^{\text{R}}(0), Y_i(1) - \hat{Y}_i^{\text{R}}(0)]$.
            \EndFor
            \For{$i \in \mathcal{I}_\text{ca}$ with $D_i = 0$}
                \State Identify $\hat{\eta}_{\alpha, 1}$ by solving the eif $\psi_1$ using $\hat{\pi}_D(X_i; \mathcal{Z}_{\text{pr}}), \hat{e}_R(X_i, D_i = 0; \mathcal{Z}_{\text{pr}}),$ and $\hat{m}(\hat{\eta}_{\alpha, 0}^{\text{init}}, X_i, R_i = 1, D_i = 0; \mathcal{Z}_{\text{tr, 2}})$ as s.
                \State Construct $\mathcal{C}_1(x) = [\hat{Y}_i^{\text{L}}(1), \hat{Y}_i^{\text{R}}(1)] = [\hat{q}_{\alpha_{\text{lo}}}(X_i; \mathcal{Z}_{\text{pr}}) - \hat{\eta}_{\alpha, 1}, \hat{\eta}_{\alpha, 1} - \hat{q}_{\alpha_{\text{hi}}}(X_i; \mathcal{Z}_{\text{pr}})]$.
                \State Compute $\mathcal{C}_{\text{ITE}} = [\hat{Y}_i^{\text{L}}(1) - Y_i(0), \hat{Y}_i^{\text{R}} - Y_i(0)]$.
            \EndFor        
        \end{algorithmic}
        \textbf{Step III}: Interval estimates of ITE on the target data with attrition.
        \begin{algorithmic}[1]
            \State Take $\mathcal{Z}_{\text{ca}} \equiv (X_i, Y_i, D_i, R_i, \mathcal{C}_i)$ with $R_i = 1$ and $\mathcal{C}_i = [\mathcal{C}_i^{\text{L}}, \mathcal{C}_i^{\text{R}}]$ as the observed data. Take $\mathcal{Z}_{\text{att}}$ from $\mathcal{Z}$ with $R_i = 0$ as the attrition data.
            \State Split $\mathcal{Z}_{\text{ca}} \equiv \mathcal{Z}_{\text{obs}}$ into two folds $\mathcal{Z}_{\text{obstr}}$ and $\mathcal{Z}_{\text{obsca}}$.
            \State For each $i \in \mathcal{I}_{\text{obs}}$, compute score $V_i = \max\left\{ \hat{h}^{\text{L}}(X_i; \mathcal{Z}_{\text{obstr}}) - \mathcal{C}_i^{\text{L}}, \mathcal{C}_i^{\text{R}} - \hat{h}^{\text{R}}(X_i, \mathcal{Z}_{\text{obstr}}) \right\}$.
            \State For each $i \in \mathcal{I}_{\text{obsca}} \cup \mathcal{I}_{\text{att}}$, identify $\hat{\eta}_{\gamma, \mathcal{C}}$ by solving the eif $\psi_\mathcal{C}$ using $\hat{\pi}_R(X_i, D_i; \mathcal{Z}_{\text{obstr}})$, and $\hat{m}_{\mathcal{C}}(\hat{\eta}_{\gamma, \mathcal{C}}, X_i, D_i; \mathcal{Z}_{\text{obstr}})$ as s.
        \end{algorithmic}
        \begin{algorithmic}
            \Ensure $\check{\mathcal{C}}_{\text{ITE}}(x) = \left[\hat{h}^{\text{L}}(x; \mathcal{Z}_{\text{obstr}}) - \hat{\eta}_{\gamma, \mathcal{C}}, \hat{h}^{\text{R}}(x; \mathcal{Z}_{\text{obstr}}) + \hat{\eta}_{\gamma, \mathcal{C}}\right]$.
        \end{algorithmic}
    \end{algorithm}
    \clearpage

    \subsection{Weighted Split Conformalized Quantile Regression}

    \begin{algorithm}
        \caption{Weighted Split CQR}
        \label{alg:wcqr}
        \begin{algorithmic}
            \Require Level $\alpha$, data $\mathcal{D} \equiv \left( X_i, Y_i \right)_{i \in \mathcal{I}}$, testing point $x$, function $\hat{q}_\tau(x; \mathcal{D})$ to estimate the conditional quantile $\tau$ and function $w(x; \mathcal{D})$ to estimate the weight at $x$ using $\mathcal{D}$ as the data.
        \end{algorithmic}
        \textbf{Procedure}:
        \begin{algorithmic}[1]
            \State Split the data $\mathcal{D}$ into a training set $\mathcal{D}_{\text{tr}} \equiv \left( X_i, Y_i \right)_{i \in \mathcal{I}_{\text{tr}}}$ and a calibration set $\mathcal{D}_{\text{ca}} \equiv \left( X_i, Y_i \right)_{i \in \mathcal{I}_{\text{ca}}}$.
            \State For each $i \in \mathcal{I}_{\text{ca}}$, compute the score $S_i = \max\left\{ \hat{q}_{\alpha_{\text{lo}}}(X_i; \mathcal{D}_{\text{tr}}) - Y_i, Y_i - \hat{q}_{\alpha_{\text{hi}}}(X_i; \mathcal{D}_{\text{tr}}) \right\}$.
            \State For each $i \in \mathcal{I}_{\text{ca}}$, compute the weight $W_i = \hat{w}\left( X_i, \mathcal{D}_{\text{tr}} \right)$.
            \State Compute the normalized weights $\hat{p}_i(x) = \frac{W_i}{\sum_{i \in \mathcal{I}_{\text{ca}}} W_i + \hat{w}(x; \mathcal{D}_{\text{tr}})}$ and $\hat{p}_\infty(x) = \frac{\hat{w}(x; \mathcal{D}_{\text{tr}})}{\sum_{i \in \mathcal{I}_{\text{ca}}} W_i + \hat{w}(x; \mathcal{D}_{\text{tr}})}$.
            \State Compute $\eta(x)$ as the $\ceil{(1 - \alpha) \frac{|\mathcal{I}_{\text{ca}}| + 1}{|\mathcal{I}_{\text{ca}}|}}$-th quantile of the distribution $\sum_{i \in \mathcal{I}_{\text{ca}}} \hat{p}_i(x) \delta_{S_i} + \hat{p}_\infty(x) \delta_\infty$.
        \end{algorithmic}
        \begin{algorithmic}
            \Ensure Prediction interval $\mathcal{C}(x) = \left[ \hat{q}_{\alpha_{\text{lo}}}(x; \mathcal{D}_{\text{tr}}) - \eta(x), \hat{q}_{\alpha_{\text{hi}}}(x; \mathcal{D}_{\text{tr}}) + \eta(x) \right]$.
        \end{algorithmic}
    \end{algorithm}

    \clearpage

    \subsection{Unweighted Conformal inference for Interval Outcomes}

    \begin{algorithm}
        \caption{Unweighted Conformal Inference for Interval Outcomes}
        \label{alg:uci}
        \begin{algorithmic}[1]
            \Require {Level $\gamma$, data $\mathcal{D} \equiv (X_i, \mathcal{C}_i)_{i \in \mathcal{I}}$ where $\mathcal{C}_i = \left[ \mathcal{C}_i^{\text{L}}, \mathcal{C}_i^{\text{R}} \right]$, testing point $x$, functions $\hat{m}^{\text{L}}(x; \mathcal{D})$ and $\hat{m}^{\text{R}}(x; \mathcal{D})$ to fit the conditional mean/median of $\mathcal{C}_i^{\text{L}}, \mathcal{C}_i^{\text{R}}$ using $\mathcal{D}$ as the data.}
        \end{algorithmic}
        \textbf{Procedure}:
        \begin{algorithmic}[1]
            \State {Split the data $\mathcal{D}$ into a training set $\mathcal{D}_{\text{tr}} \equiv (X_i, \mathcal{C}_i)_{i \in \mathcal{I}_{\text{tr}}}$ and a calibration set $\mathcal{D}_{\text{ca}} \equiv (X_i, \mathcal{C}_i)_{i \in \mathcal{I}_{\text{ca}}}$.}
            \State {For each $i \in \mathcal{I}_\text{ca}$, compute the score $S_i =  \max\left\{ \hat{m}^{\text{L}}(x; \mathcal{D}_{\text{tr}}) - \mathcal{C}_i^{\text{L}}, \mathcal{C}_i^{\text{R}} - \hat{m}^{\text{R}}(x; \mathcal{D}_{\text{tr}}) \right\}$.}
            \State {Compute $\eta$ as the $\ceil{(1 - \gamma) \frac{|\mathcal{I}_{\text{ca}}| + 1}{|\mathcal{I}_{\text{ca}}|}}$-th quantile of the empirical distribution $\left\{ S_i: i \in \mathcal{I}_{\text{ca}} \right\}$.}
        \end{algorithmic}
        \begin{algorithmic}
            \Ensure {$\check{\mathcal{C}}_{\text{ITE}} = \left[ \hat{m}^{\text{L}}(x; \mathcal{D}_{\text{tr}}) - \eta, \hat{m}^{\text{R}}(x; \mathcal{D}_{\text{tr}}) + \eta \right]$.}
        \end{algorithmic}
    \end{algorithm}

    \clearpage

    \subsection{Interval Estimates for ITE with Attrition}

    \begin{algorithm}
        \caption{Interval Estimates for ITE with Attrition}
        \label{alg:iteattr}
        \begin{algorithmic}
            \Require Level $\alpha$, level $\gamma$ (only for the exact version), source data $\mathcal{Z} \equiv (X_i, Y_i, D_i)_{i = 1}^n$ without attrition, target data point $x$ with attrition.
        \end{algorithmic}
        \textbf{Step I}: Data splitting
        \begin{algorithmic}[1]
            \State Split the source data into two folds $\mathcal{Z}_1$ and $\mathcal{Z}_2$.
            \State Estimate the propensity score $\hat{e}_D(x)$ on $\mathcal{Z}_1$.
        \end{algorithmic}
        \textbf{Step II}: Counterfactual Inference on $\mathcal{Z}_2$.
        \begin{algorithmic}[1]
            \For {$i$ in $\mathcal{Z}_2$ with $D_i = 1$}
                \State Compute $\left[ \hat{Y}_i^{\text{L}}(0), \hat{Y}_i^{\text{R}}(0) \right]$ in Algorithm \ref{alg:wcqr} on $\mathcal{Z}_1$ with level $\alpha$ and $w_0(x) = \frac{\hat{e}_D(x)}{1 - \hat{e}_D(x)}$.
                \State Compute $\mathcal{C}_i = \left[ Y_i(1) - \hat{Y}_i^{\text{R}}(0), Y_i(1) - \hat{Y}_i^{\text{L}}(0) \right]$.
            \EndFor
            \For {$i$ in $\mathcal{Z}_2$ with $D_i = 0$}
                \State Compute $\left[ \hat{Y}_i^{\text{L}}(1), \hat{Y}_i^{\text{R}}(1) \right]$ in Algorithm \ref{alg:wcqr} on $\mathcal{Z}_1$ with level $\alpha$ and $w_1(x) = \frac{1 - \hat{e}_D(x)}{\hat{e}_D(x)}$.
                \State Compute $\mathcal{C}_i = \left[ \hat{Y}_i^{\text{L}}(1) - Y_i(0), \hat{Y}_i^{\text{R}}(1) - Y_i(0) \right]$.
            \EndFor
        \end{algorithmic}
        \textbf{Step III}: Interval estimates of ITE on the target data with attrition.
        \begin{algorithmic}[1]
            \State (Exact version) Apply Algorithm \ref{alg:uci} on $(X_i, \mathcal{C}_i)_{i \in \mathcal{Z}_2}$ with level $\gamma$, yielding an interval $\check{\mathcal{C}}_{\text{ITE}}(x)$.
            \State (Inexact version) Fit conditional quantiles of $\mathcal{C}_i^{\text{L}}$ and $\mathcal{C}_i^{\text{R}}$, yielding an interval $\check{\mathcal{C}}_{\text{ITE}}(x)$.
        \end{algorithmic}
        \begin{algorithmic}
            \Ensure $\check{\mathcal{C}}_{\text{ITE}}(x)$.
        \end{algorithmic}
    \end{algorithm}

    \clearpage

    \section{Additional Simulation Results}

    \subsection{Simulation Results for Unweighted Conformal Inference for ITE with Attrition}

    This section provides the simulation results for the unweighted conformal inference for interval outcomes with attrition proposed by \citep{lei2021conformala}. 
    
    We use the exact nested method and consider two scenarios with different missingness patterns induced by the attrition: MCAR and MAR. The data generating process (DGP) is as follows: covariate vector $X_i = \left( X_{i1}, \dots, X_{id} \right)^\top \sim \mathcal{N}(0, \mathbb{I}_d)$ with $d = 5$, independent across $i$. Given a covariate vector $X_i$, I generate the potential outcomes as
    \begin{align*}
        Y_i(1) = X_i^\top \beta + \varepsilon_{i1}, \quad Y_i(0) = \varepsilon_{i0}, \quad \varepsilon_{i1}, \varepsilon_{i0} \overset{\text{i.i.d.}}{\sim} \mathcal{N}(0, 1),
    \end{align*}
    where $\beta = \mathbf{1}_d$. The treatment assignment is generated following nonlinear propensity score,
    \begin{align*}
        e_D(X_i) = \Phi(X_{i1}), \quad D_i = \mathds{1}_{e_D(X_i) < U_i}, \quad U_i \sim \mathrm{UNIF}(0, 1).
    \end{align*}
    Also, I generate the missingness indicator $R$ separately for two missingness patterns. For MCAR, $R_i \overset{\mathrm{i.i.d.}}{\sim} \mathrm{BERN}(p)$ with $p = 0.8$. For MAR, following Assumption \ref{asm2}, $R$ is generated as
    \begin{align*}
        \mathrm{logit}\left( P(R_i = 1 \mid D_i, X_i) \right) = \kappa_0 + \kappa_1 D_i + \kappa_2 X_{i1} + \kappa_3 X_{i2},
    \end{align*}
    with parameters set to $(\kappa_0, \kappa_1, \kappa_2, \kappa_3) = (-0.2, 0.5, 0.2, -0.3)$. Following this DGP, the observations are given by the tuple $(X_i, D_i, R_i, R_i \cdot Y_i)$.

    I conduct simulations with sample size $n \in \left\{ 500, 1000, 2000, 5000 \right\}$ and 100 MC iterations. I compute the average empirical coverage and average interval length of the prediction intervals for ITE in the target group with attrition. The target coverage is set to be $95\%$. To fit the conditional quantiles, I use the quantile random forest.

    \begin{figure}[htp]
        \centering
        \caption{Unweighted Conformal Inference for Interval Estimates of ITE with Attrition}
        \label{fig:unwMC}
        \begin{threeparttable}
            \begin{minipage}{0.49\textwidth}
                \includegraphics[width = \textwidth]{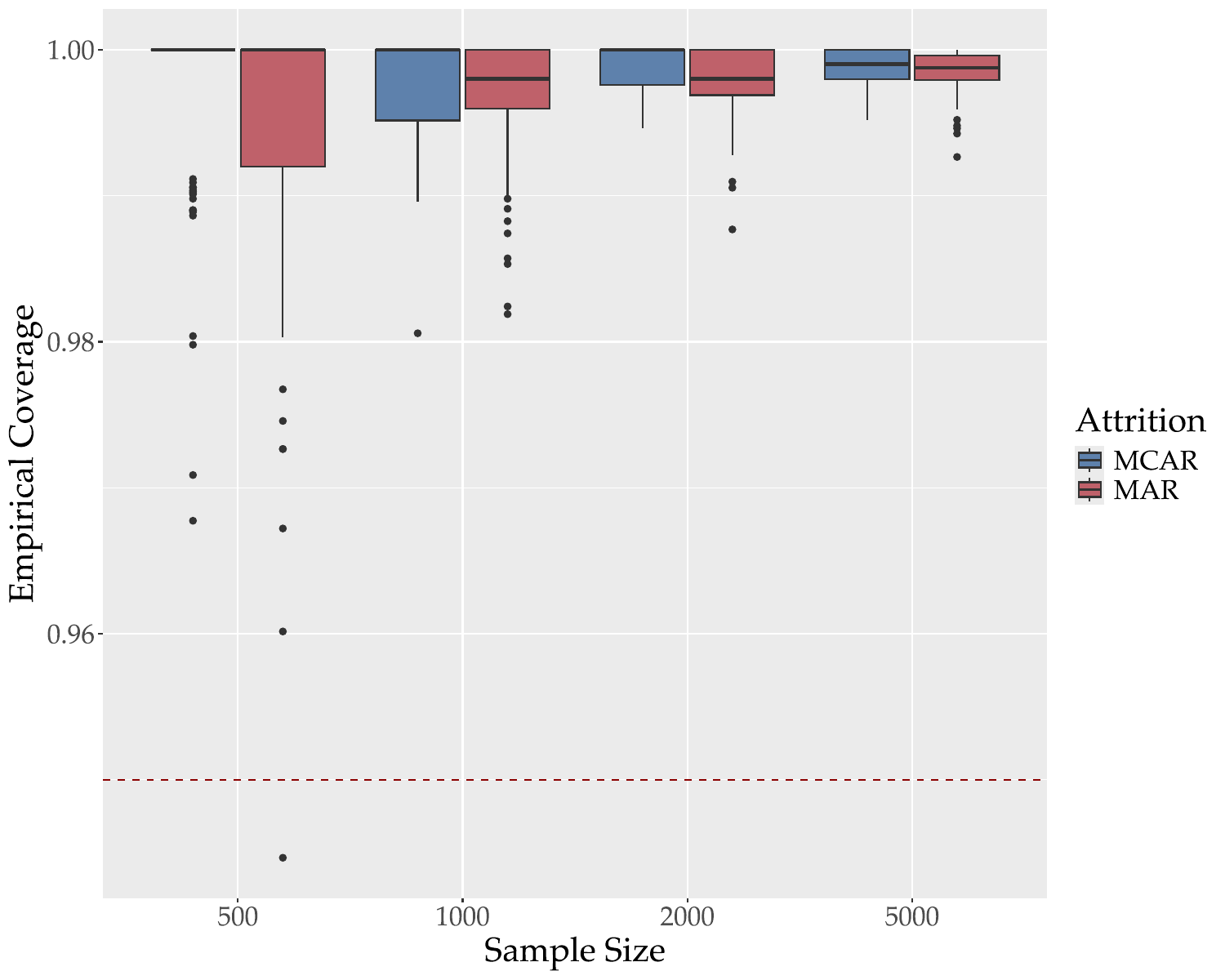}
            \end{minipage}
            \begin{minipage}{0.49\textwidth}
                \includegraphics[width = \textwidth]{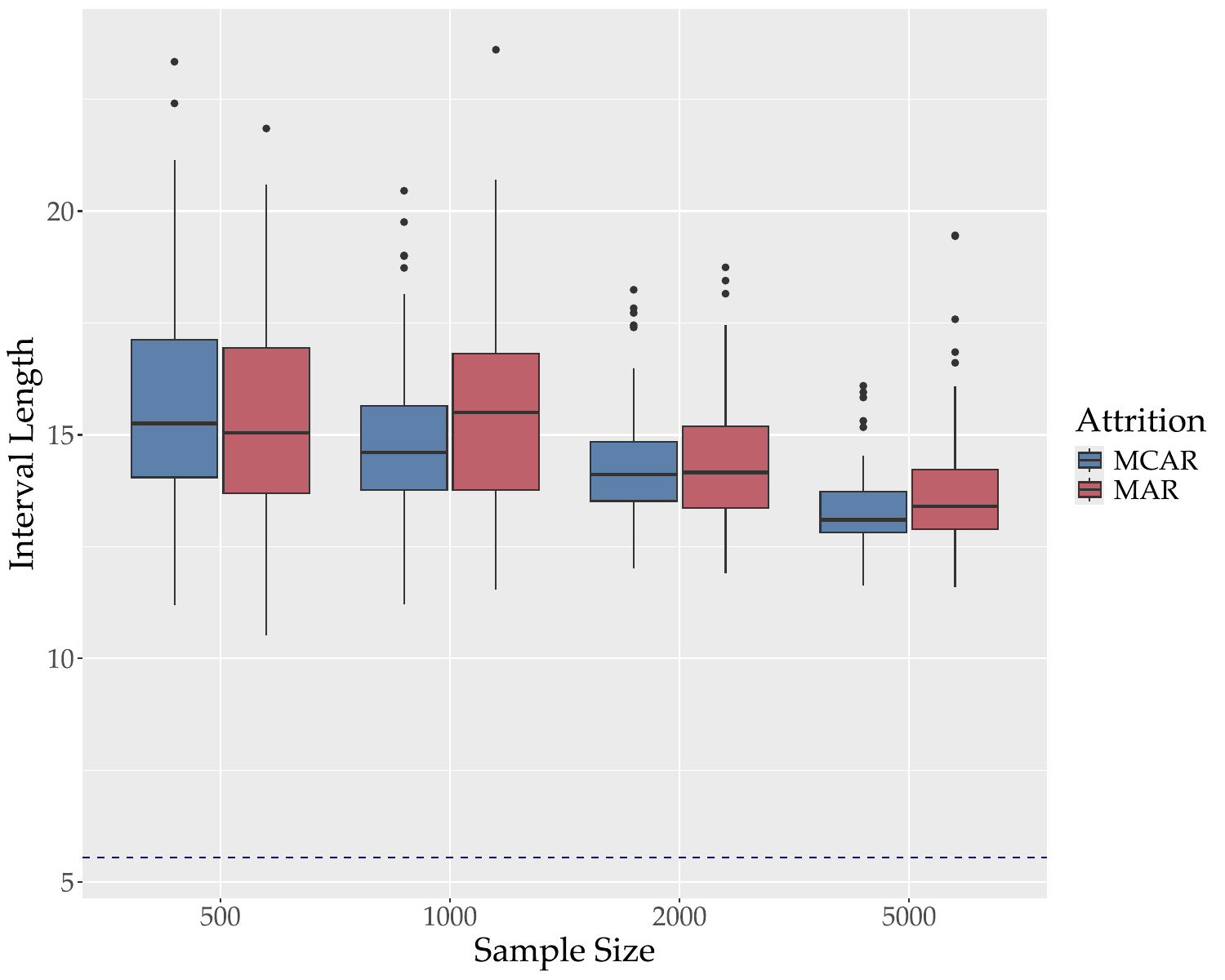}
            \end{minipage}
            \begin{tablenotes}
                \footnotesize
                \setlength\labelsep{0pt}
                \item \textit{Note}: This figure shows the simulation results of Exact method. The left panel shows the average empirical coverage and the right panel shows the average interval length of the prediction intervals for ITE in the target group with attrition. The red horizontal line corresponds to the target coverage of $95\%$ and the blue horizontal line corresponds to the average length of oracle intervals.
            \end{tablenotes}
        \end{threeparttable}
    \end{figure}

    \clearpage

    \subsection{Comparison with Different Multiple Imputation Procedures}

    Here, we consider two other multiple imputation procedures. The first one still follows the working flow of conformal inference. First, we impute the potential outcomes for the observed group, i.e., use $D = 1$ \& $R = 1$ to impute the counterfactuals of $D = 0$ \& $R = 1$, and vice versa. Second, we compute the ITE for the observed group. Third, we impute the ITE for the attrition group. The second one follows a reverse order. First, we impute the attrition group using observed group, i.e., use $D = 1$ \& $R = 1$ to impute $D = 1$ \& $R = 0$, and use $D = 0$ \& $R = 1$ to impute $D = 0$ \& $R = 0$. Then, we use the treatment group to impute the counterfactuals of the control group, and vice versa.

    Figure \ref{fig:MCcomp2covDGP1} and Figure \ref{fig:MCcomp2lenDGP1} show the comparison among three methods with the second procedure of multiple imputation.

    \begin{figure}[h]
        \centering
        \caption{Comparison of Empirical Coverage of Prediction Intervals for ITE with Attrition}
        \label{fig:MCcomp2covDGP1}
        \begin{threeparttable}
            \includegraphics[width=\textwidth]{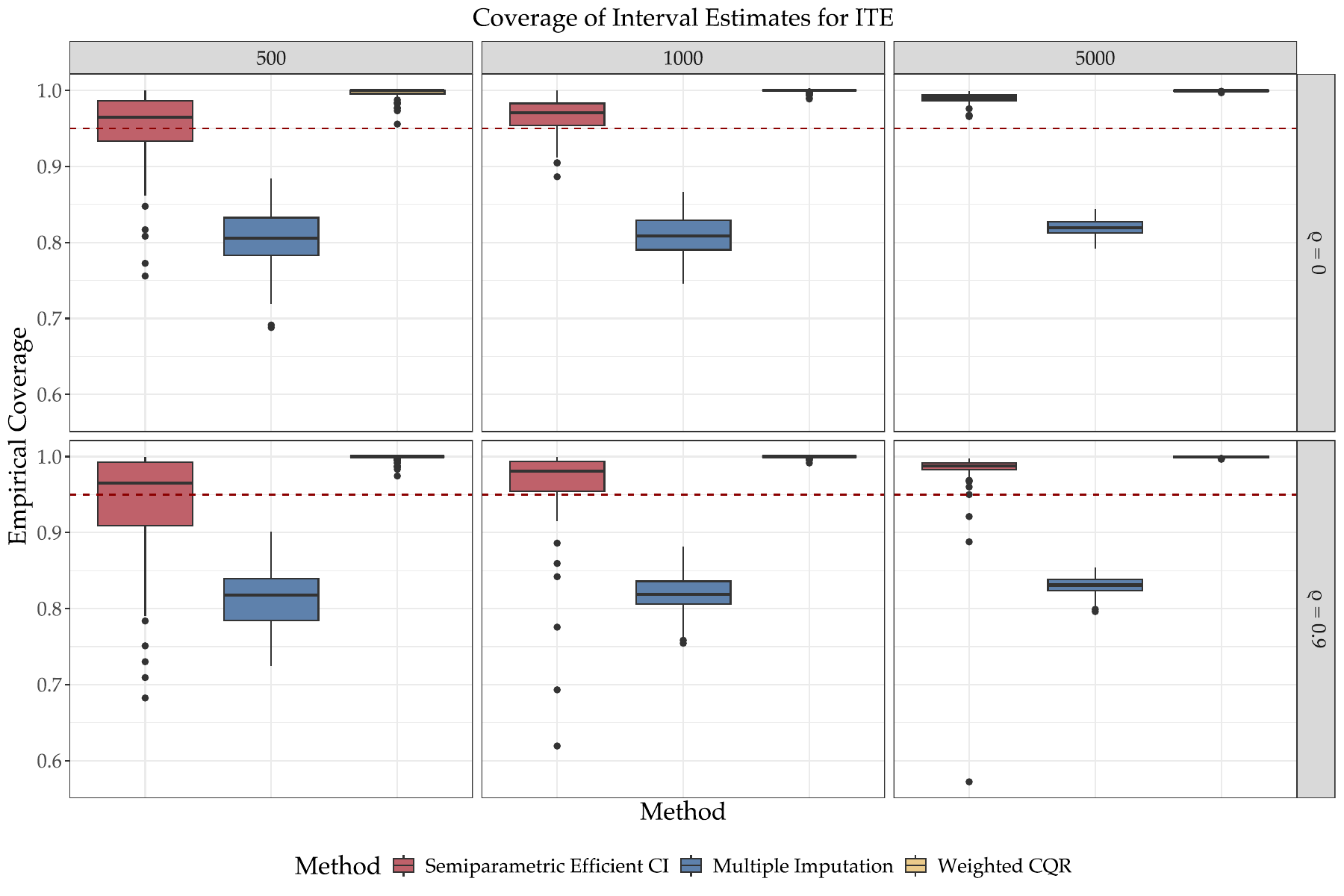}
            \begin{tablenotes}
                \footnotesize
                \setlength\labelsep{0pt}
                \item \textit{Note}: This figure shows the simulation results for the empirical coverage of prediction intervals constructed by conformal inference with semiparametric efficient estimator, multiple imputation with Amelia, and weighted CQR with unweighted nested approach for ITE of attrition group. The red horizontal line corresponds to the target coverage of $95\%$.
            \end{tablenotes}
        \end{threeparttable}
    \end{figure}

    \begin{figure}[ht]
        \centering
        \caption{Comparison of Average Length of Prediction Intervals for ITE with Attrition}
        \label{fig:MCcomp2lenDGP1}
        \begin{threeparttable}
            \includegraphics[width=\textwidth]{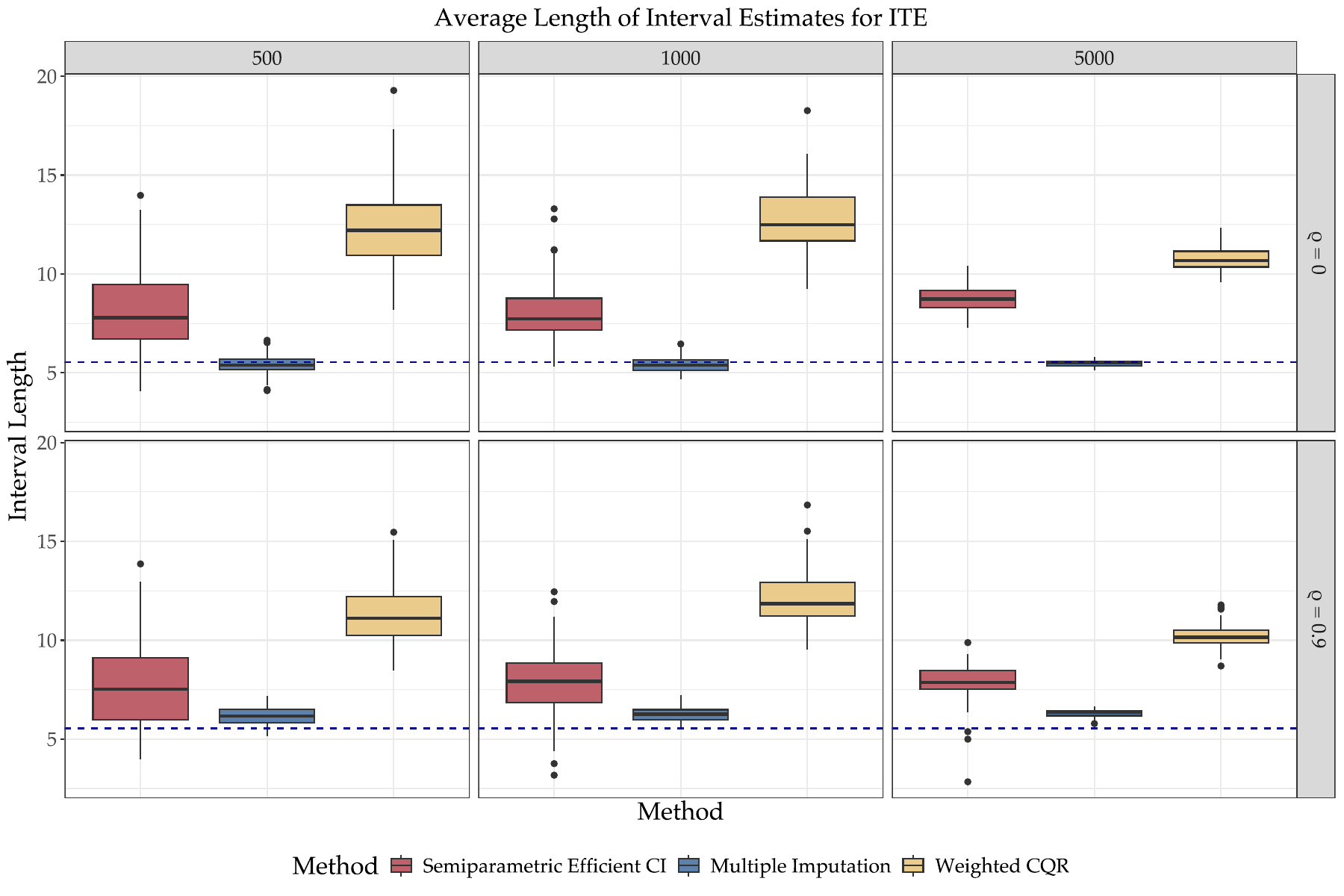}
            \begin{tablenotes}
                \footnotesize
                \setlength\labelsep{0pt}
                \item \textit{Note}: This figure shows the simulation results for the average length of prediction intervals constructed by conformal inference with semiparametric efficient estimator, multiple imputation with Amelia, and weighted CQR with unweighted nested approach for ITE of attrition group. The red horizontal line corresponds to the length of oracle intervals.
            \end{tablenotes}
        \end{threeparttable}
    \end{figure}

    \clearpage

    Figure \ref{fig:MCcomp3covDGP1} and Figure \ref{fig:MCcomp3lenDGP1} show the comparison among three methods with the third procedure of multiple imputation.

    \begin{figure}[h]
        \centering
        \caption{Comparison of Empirical Coverage of Prediction Intervals for ITE with Attrition}
        \label{fig:MCcomp3covDGP1}
        \begin{threeparttable}
            \includegraphics[width=\textwidth]{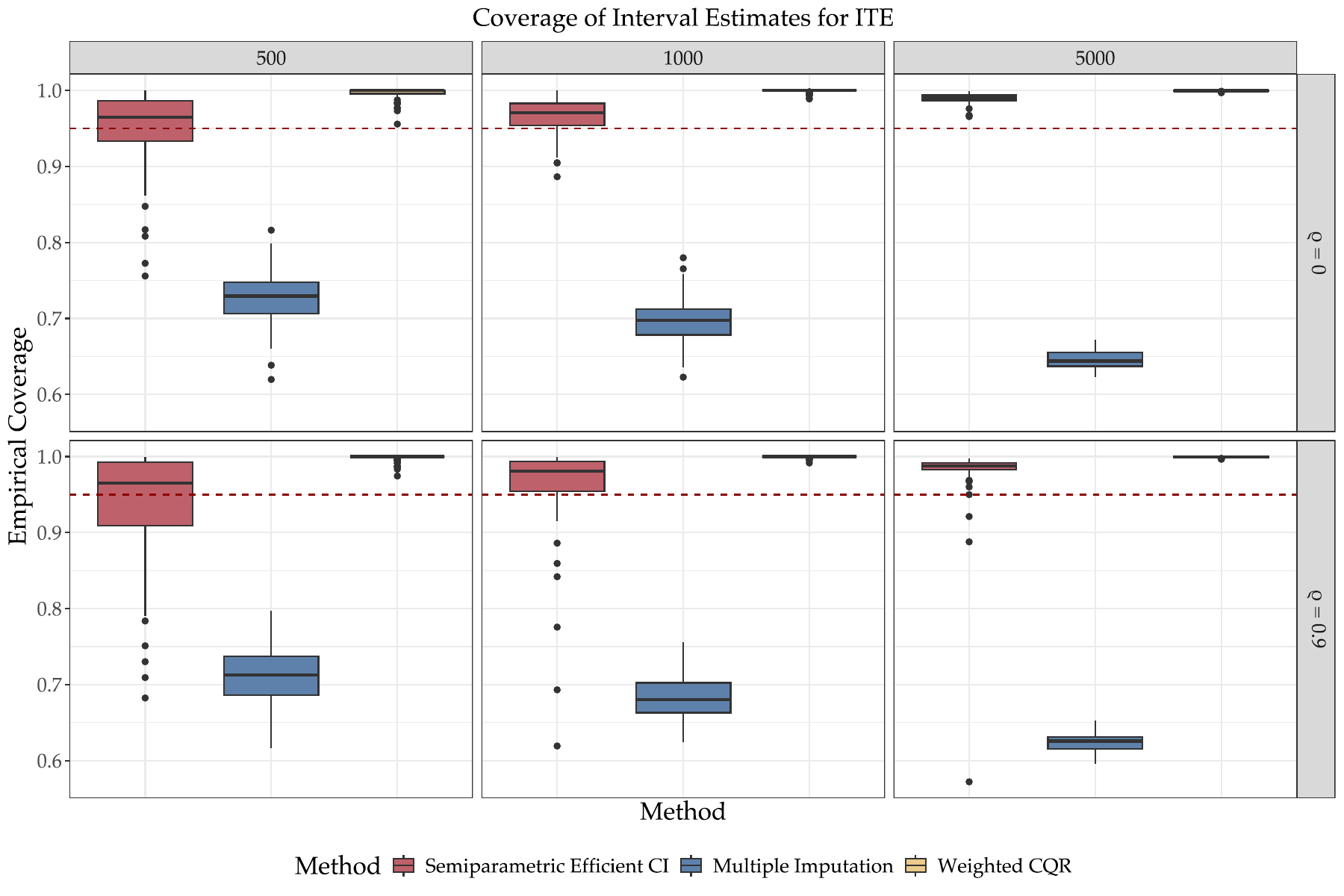}
            \begin{tablenotes}
                \footnotesize
                \setlength\labelsep{0pt}
                \item \textit{Note}: This figure shows the simulation results for the empirical coverage of prediction intervals constructed by conformal inference with semiparametric efficient estimator, multiple imputation with Amelia, and weighted CQR with unweighted nested approach for ITE of attrition group. The red horizontal line corresponds to the target coverage of $95\%$.
            \end{tablenotes}
        \end{threeparttable}
    \end{figure}

    \begin{figure}[ht]
        \centering
        \caption{Comparison of Average Length of Prediction Intervals for ITE with Attrition}
        \label{fig:MCcomp3lenDGP1}
        \begin{threeparttable}
            \includegraphics[width=\textwidth]{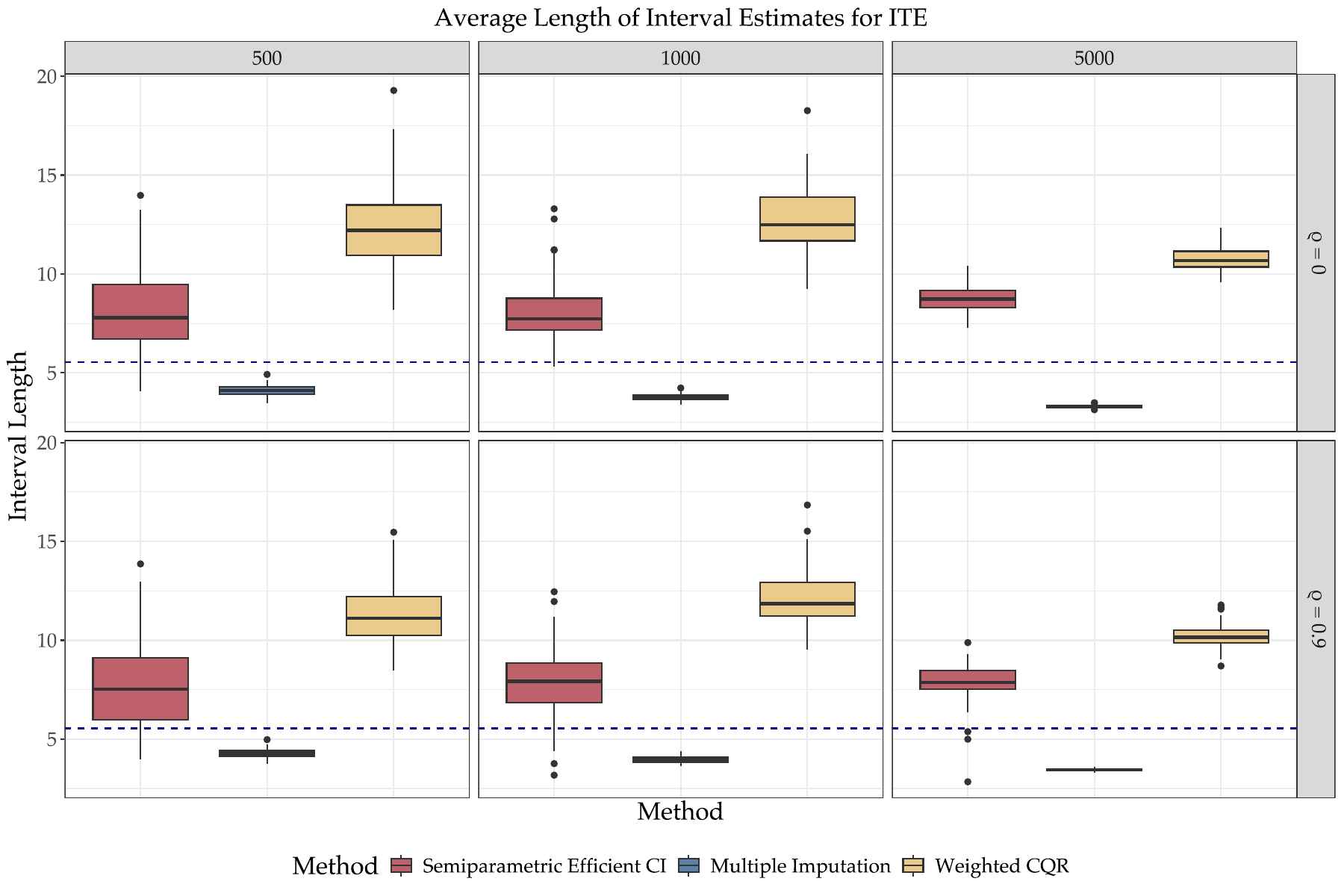}
            \begin{tablenotes}
                \footnotesize
                \setlength\labelsep{0pt}
                \item \textit{Note}: This figure shows the simulation results for the average length of prediction intervals constructed by conformal inference with semiparametric efficient estimator, multiple imputation with Amelia, and weighted CQR with unweighted nested approach for ITE of attrition group. The red horizontal line corresponds to the length of oracle intervals.
            \end{tablenotes}
        \end{threeparttable}
    \end{figure}

    \clearpage

    \subsection{Simulation Results with Different DGP}\label{sec:DGP2}

    In this section, I consider a different data generating process (DGP) to evaluate the performance of the proposed method. The covariate vector $X = \left( X_1, \dots, X_d \right)^\top$ is an equicorrelated multivarate Gaussian vector with mean zero and $\Var\left( X_i \right) = 1$ and $\Cov\left( X_i, X_j \right) = \rho$ for $i \neq j$. When $\rho = 0$, the covariates are independent. When $\rho > 0$, the covariates are positively correlated. The potential outcomes are generated as follows:
    \begin{align*}
        Y_1 &= X_1^2 + 0.2 X_2 + 1 / \log\left( 1 + \exp(X_3) \right) + 0.8 \exp(X_4) + \epsilon\\
        Y_0 &= 1 / \log\left( 1 + \exp\left( X_3 \right) \right) + \epsilon, \quad \epsilon \sim \mathcal{N}(0, 1).
    \end{align*}
    We still consider homescedastic noise. The propensity score of treatment $e_D(X)$ is generated as 
    \begin{align*}
        e_D(X) = \text{logit}^{-1} \left( -0.5X_1 - 0.3 X_2 + 0.2 X_3 \right).
    \end{align*}
    The propensity score of attrition $e_R(X, D)$ is generated as
    \begin{align*}
        e_R(X, D) = \text{logit}^{-1} \left( -1 + 0.3D + 0.5 X_1 - 0.4X_2 \right),
    \end{align*}
    which ensures the MAR assumption. Throughout the simulation, we set the dimension of the covariates $d = 10$.

    \clearpage

    \subsubsection{Simulation Results for Conformal Inference with Semiparametric Efficient Estimator}

    \begin{figure}[h]
        \centering
        \caption{MC Simulation Results of Conformal Inference for ITE with Attrition}
        \label{fig:MCallcov2}
        \begin{threeparttable}
            \includegraphics[width=\textwidth]{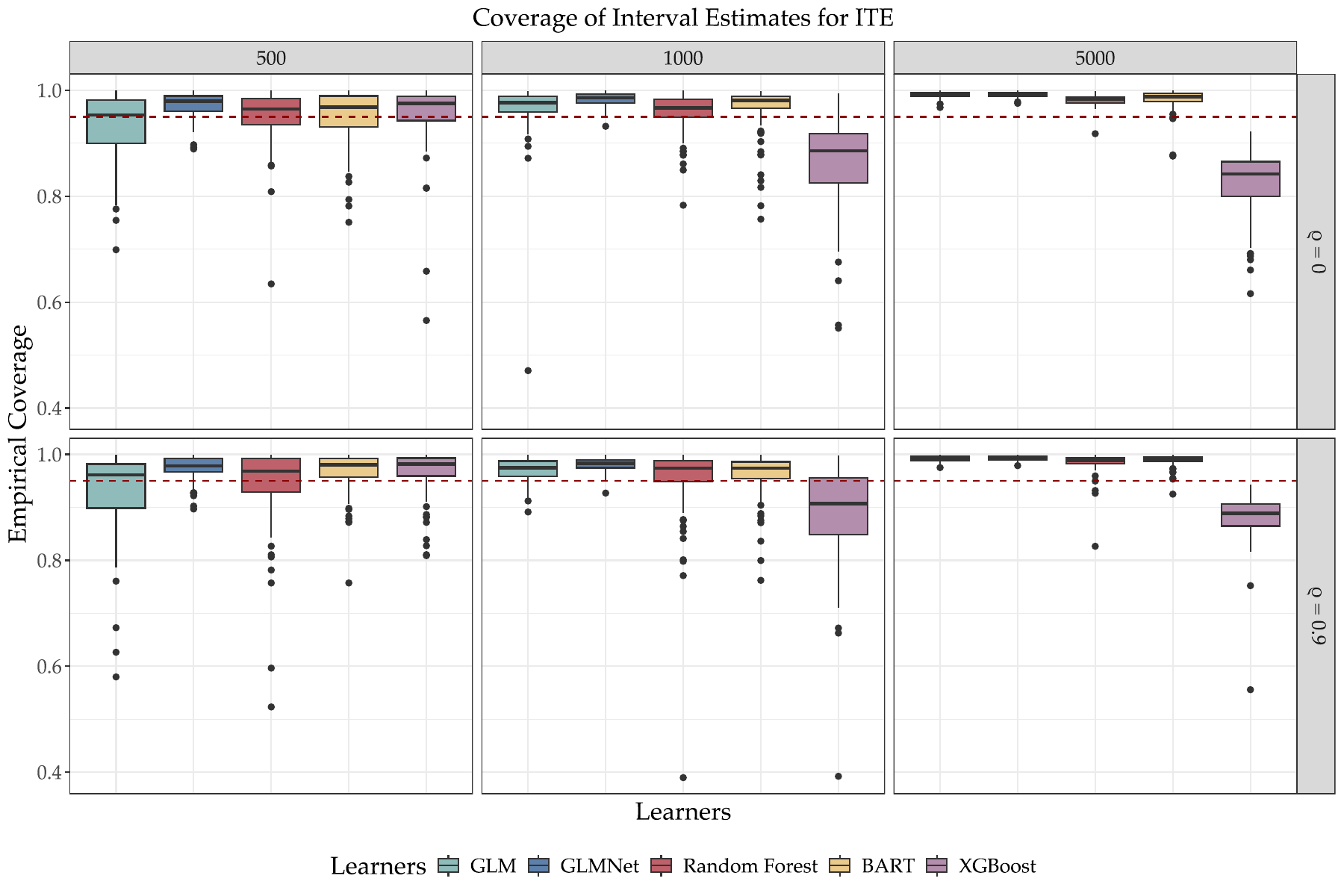}
            \begin{tablenotes}
                \footnotesize
                \setlength\labelsep{0pt}
                \item \textit{Note}: This figure shows the simulation results for the empirical coverage of prediction intervals constructed by semiparametric efficient estimator for ITE of attrition group following DGP2. The red horizontal line corresponds to the target coverage of $95\%$.
            \end{tablenotes}
        \end{threeparttable}
    \end{figure}

    \begin{figure}[h]
        \centering
        \caption{MC Simulation Results of Conformal Inference for ITE with Attrition}
        \label{fig:MCalllen2}
        \begin{threeparttable}
            \includegraphics[width=\textwidth]{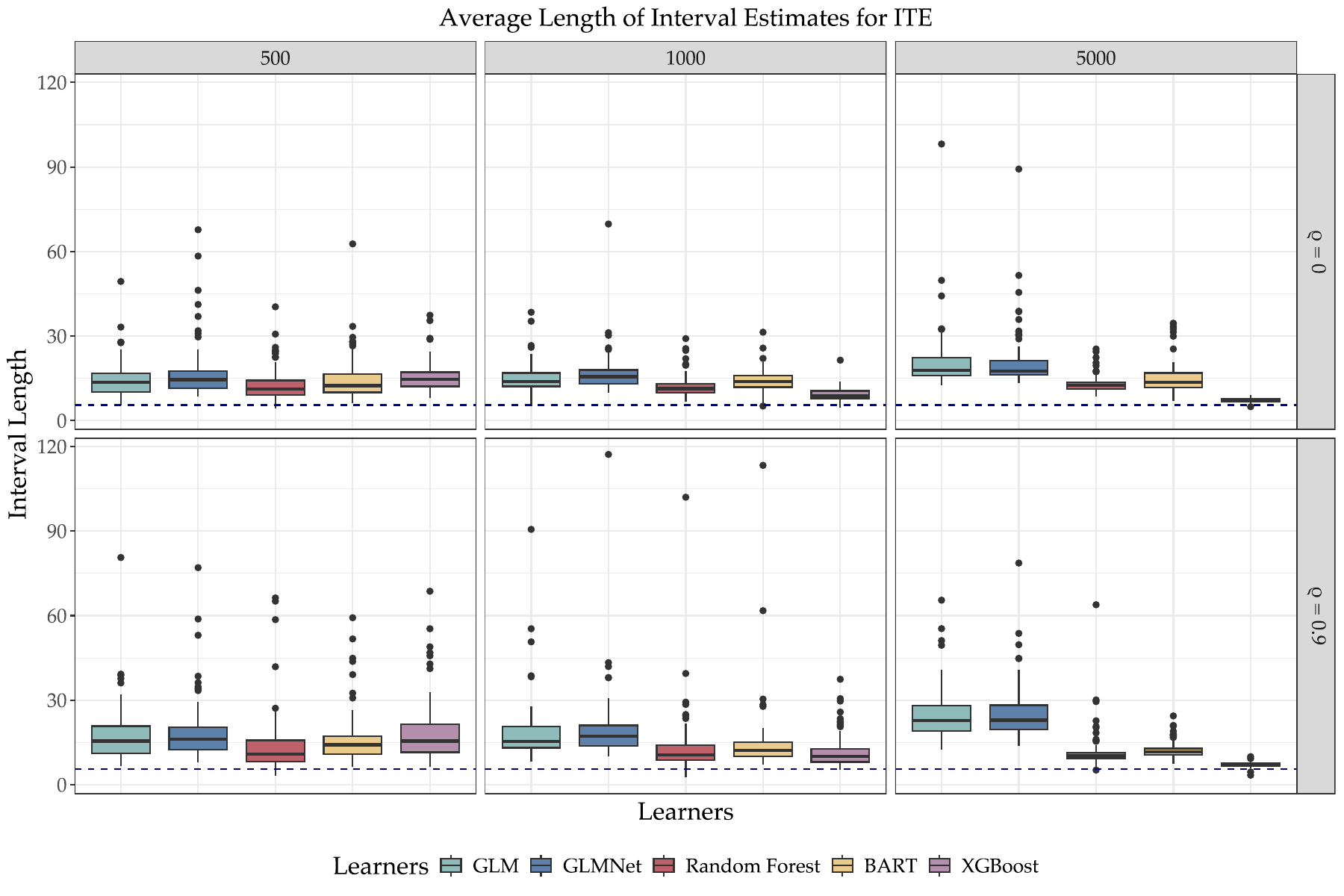}
            \begin{tablenotes}
                \footnotesize
                \setlength\labelsep{0pt}
                \item \textit{Note}: This figure shows the simulation results for the average length of prediction intervals constructed by semiparametric efficient estimator for ITE of attrition group following DGP2. The blue horizontal line corresponds to the length of oracle intervals.
            \end{tablenotes}
        \end{threeparttable}
    \end{figure}

    \clearpage

    \subsubsection{Simulation Results for Comparison with Other Methods}\label{sec:compDGP2}

    Similar to the second study of Section \ref{sec:MCsim}, we compare the empirical coverage and average length of the prediction intervals for ITE with attrition using conformal inference with semiparametric efficient estimator, multiple imputation with Amelia, and weighted CQR with unweighted and nested approach proposed by \citet{lei2021conformala}. 

    Figure \ref{fig:MCcomp2covDGP2} and Figure \ref{fig:MCcomp2lenDGP2} show the comparison among three methods with the first procedure of multiple imputation under the second DGP.

    \begin{figure}[h]
        \centering
        \caption{Comparison of Empirical Coverage of Prediction Intervals for ITE with Attrition}
        \label{fig:MCcomp1covDGP2}
        \begin{threeparttable}
            \includegraphics[width=\textwidth]{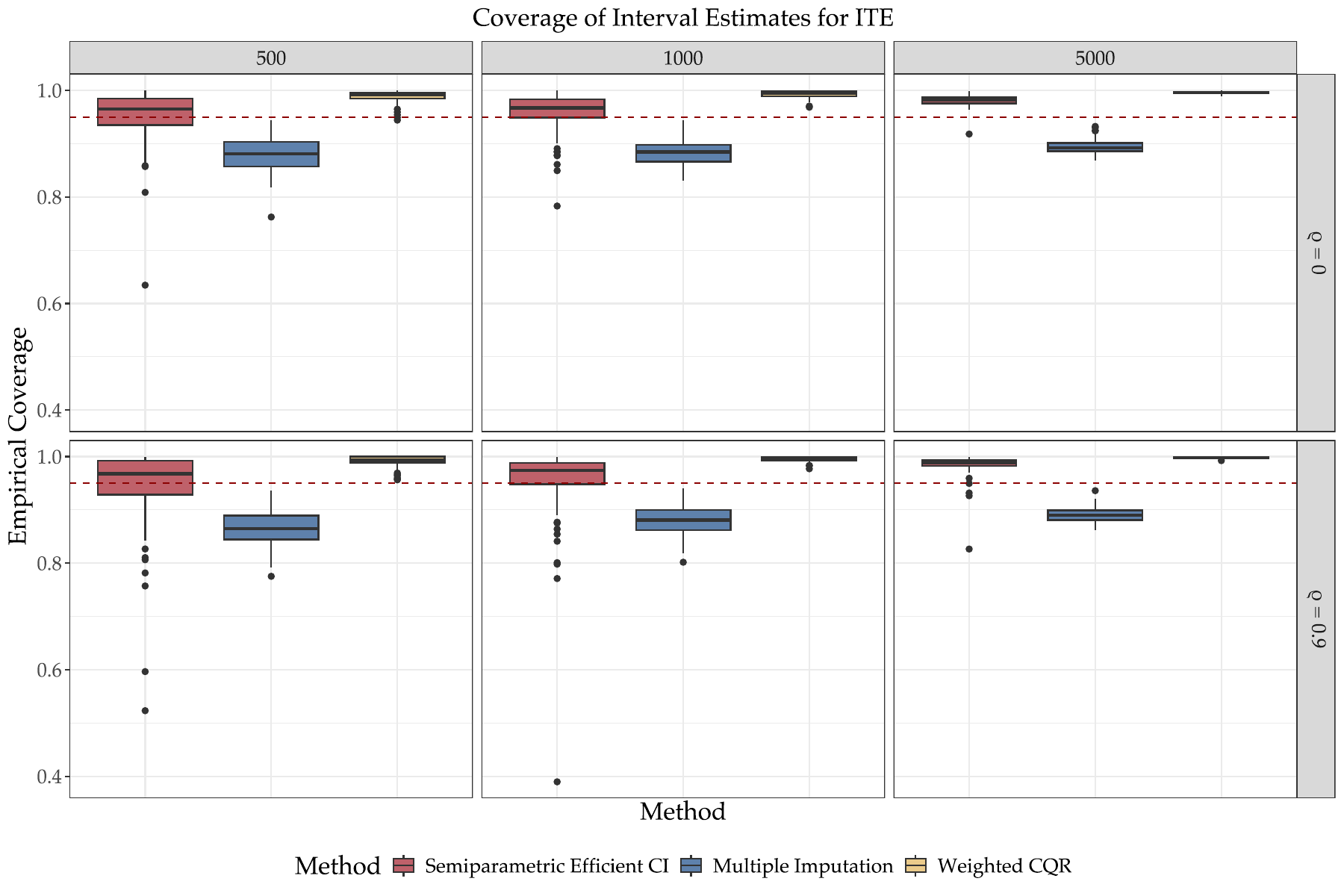}
            \begin{tablenotes}
                \footnotesize
                \setlength\labelsep{0pt}
                \item \textit{Note}: This figure shows the simulation results for the empirical coverage of prediction intervals constructed by conformal inference with semiparametric efficient estimator, multiple imputation with Amelia, and weighted CQR with unweighted nested approach for ITE of attrition group. The red horizontal line corresponds to the target coverage of $95\%$.
            \end{tablenotes}
        \end{threeparttable}
    \end{figure}

    \begin{figure}[h]
        \centering
        \caption{Comparison of Average Length of Prediction Intervals for ITE with Attrition}
        \label{fig:MCcomp1lenDGP2}
        \begin{threeparttable}
            \includegraphics[width=\textwidth]{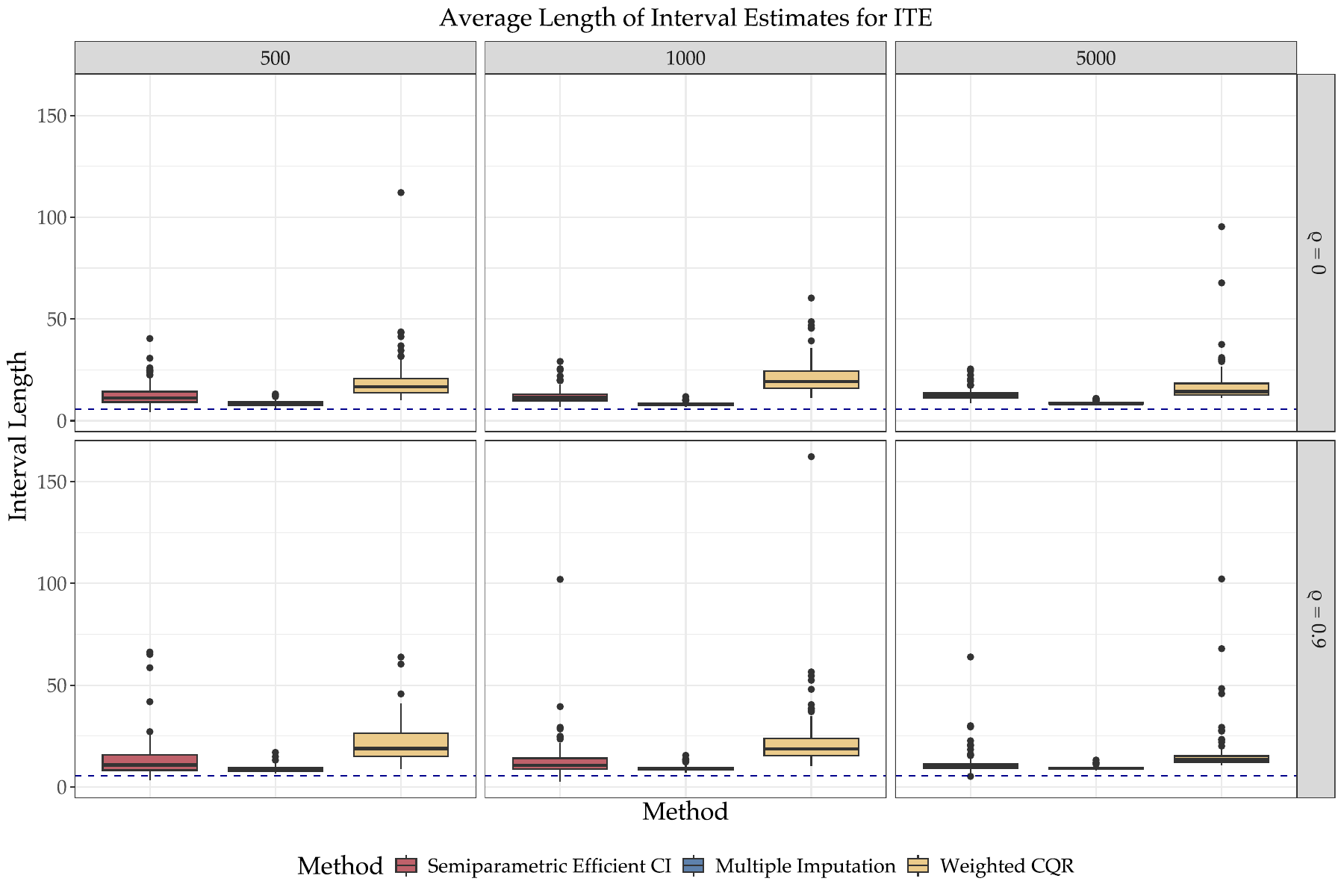}
            \begin{tablenotes}
                \footnotesize
                \setlength\labelsep{0pt}
                \item \textit{Note}: This figure shows the simulation results for the average length of prediction intervals constructed by conformal inference with semiparametric efficient estimator, multiple imputation with Amelia, and weighted CQR with unweighted nested approach for ITE of attrition group. The red horizontal line corresponds to the length of oracle intervals.
            \end{tablenotes}
        \end{threeparttable}
    \end{figure}

    \clearpage

    Figure \ref{fig:MCcomp2covDGP2} and Figure \ref{fig:MCcomp2lenDGP2} show the comparison among three methods with the second procedure of multiple imputation under the second DGP.

    \begin{figure}[h]
        \centering
        \caption{Comparison of Empirical Coverage of Prediction Intervals for ITE with Attrition}
        \label{fig:MCcomp2covDGP2}
        \begin{threeparttable}
            \includegraphics[width=\textwidth]{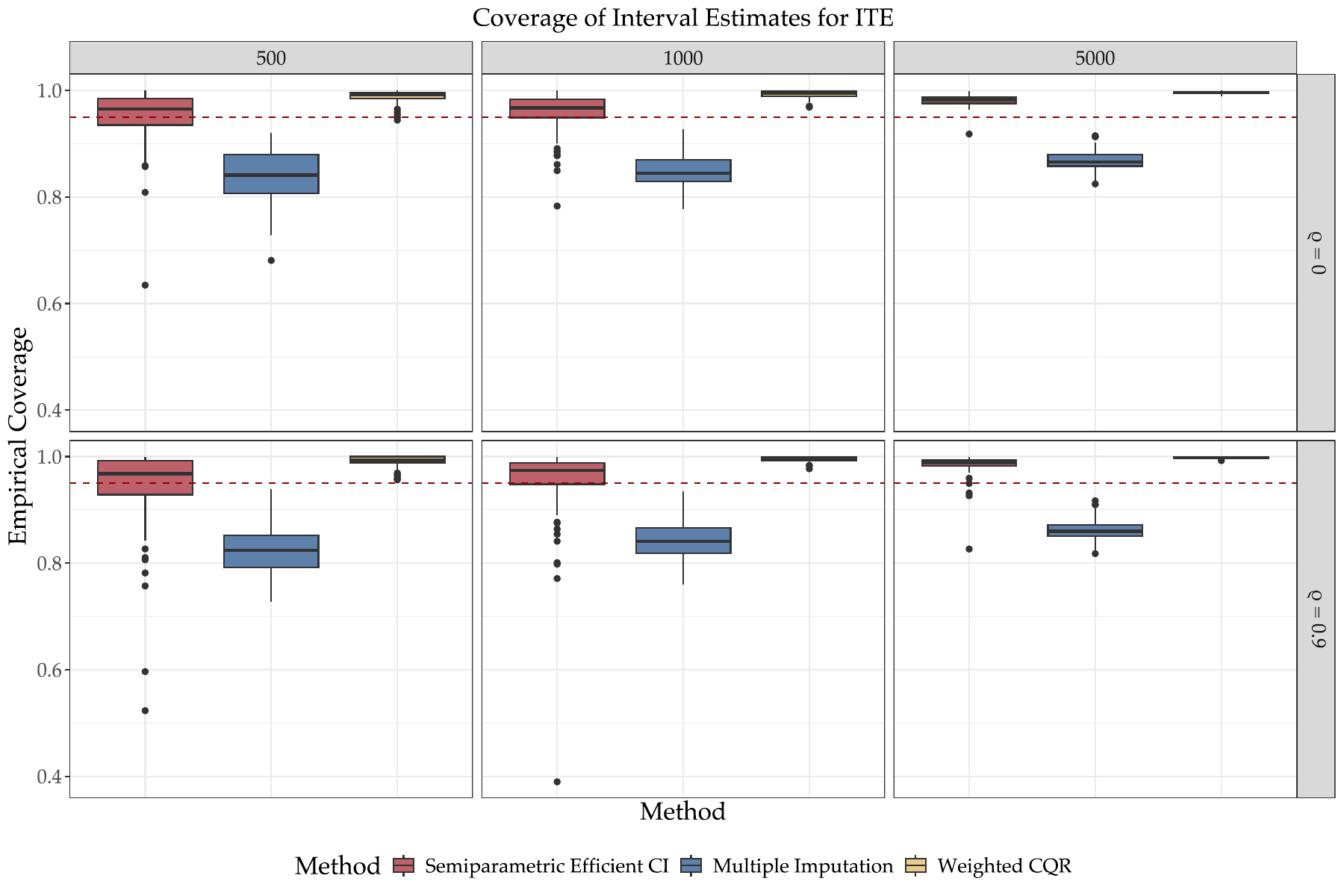}
            \begin{tablenotes}
                \footnotesize
                \setlength\labelsep{0pt}
                \item \textit{Note}: This figure shows the simulation results for the empirical coverage of prediction intervals constructed by conformal inference with semiparametric efficient estimator, multiple imputation with Amelia, and weighted CQR with unweighted nested approach for ITE of attrition group. The red horizontal line corresponds to the target coverage of $95\%$.
            \end{tablenotes}
        \end{threeparttable}
    \end{figure}

    \begin{figure}[h]
        \centering
        \caption{Comparison of Average Length of Prediction Intervals for ITE with Attrition}
        \label{fig:MCcomp2lenDGP2}
        \begin{threeparttable}
            \includegraphics[width=\textwidth]{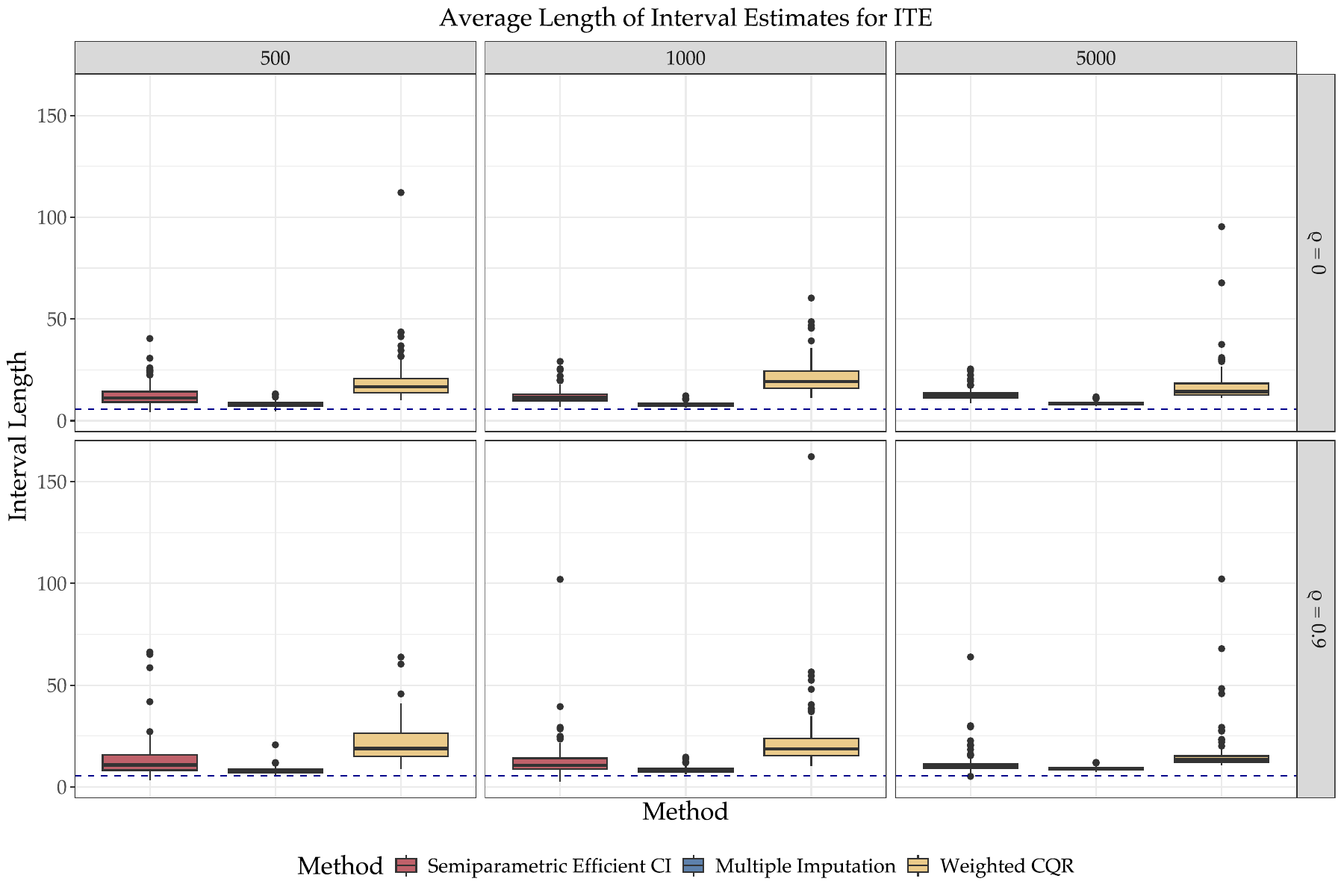}
            \begin{tablenotes}
                \footnotesize
                \setlength\labelsep{0pt}
                \item \textit{Note}: This figure shows the simulation results for the average length of prediction intervals constructed by conformal inference with semiparametric efficient estimator, multiple imputation with Amelia, and weighted CQR with unweighted nested approach for ITE of attrition group. The red horizontal line corresponds to the length of oracle intervals.
            \end{tablenotes}
        \end{threeparttable}
    \end{figure}

    \clearpage

    Figure \ref{fig:MCcomp3covDGP2} and Figure \ref{fig:MCcomp3lenDGP2} show the comparison among three methods with the third procedure of multiple imputation under the second DGP.

    \begin{figure}[h]
        \centering
        \caption{Comparison of Empirical Coverage of Prediction Intervals for ITE with Attrition}
        \label{fig:MCcomp3covDGP2}
        \begin{threeparttable}
            \includegraphics[width=\textwidth]{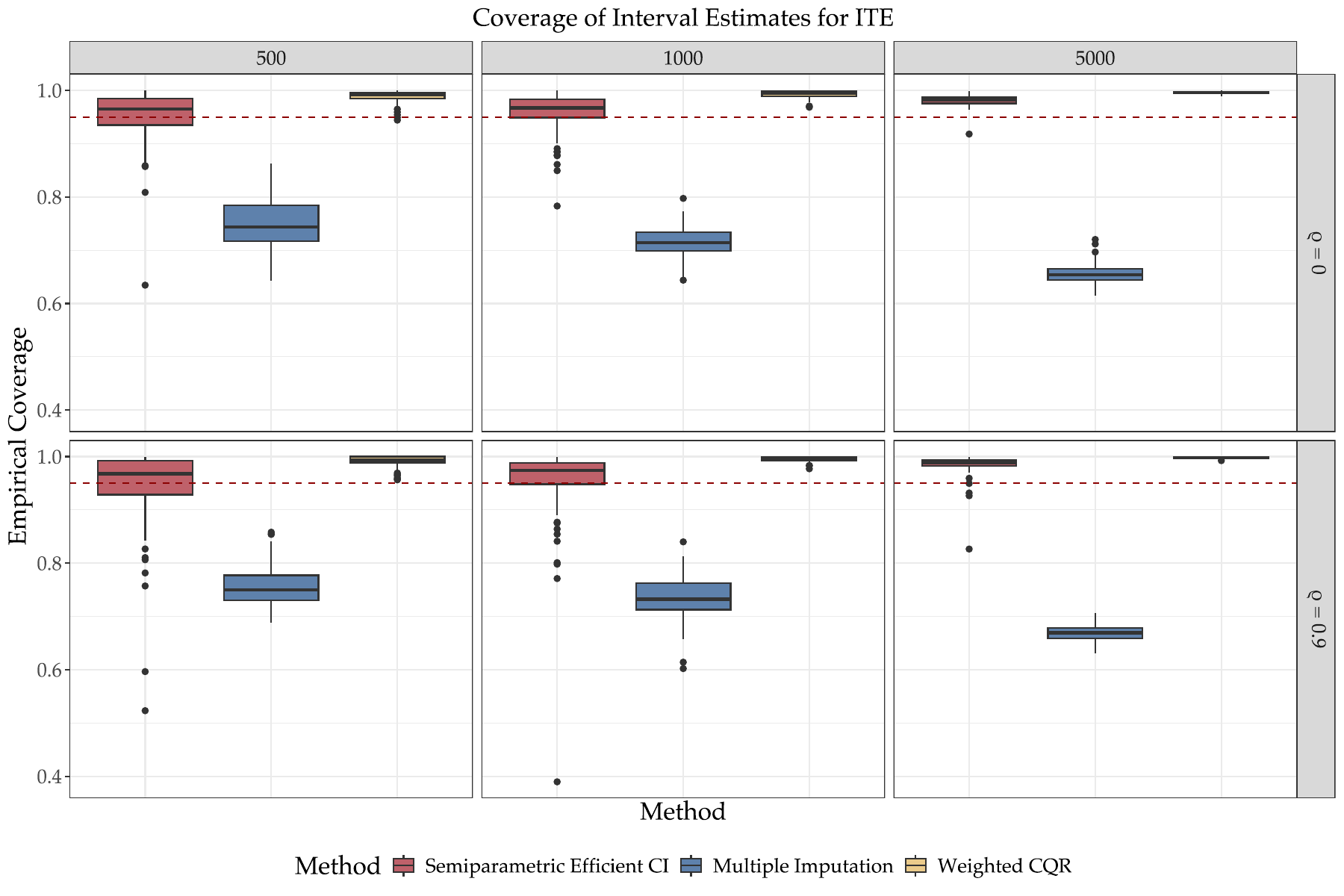}
            \begin{tablenotes}
                \footnotesize
                \setlength\labelsep{0pt}
                \item \textit{Note}: This figure shows the simulation results for the empirical coverage of prediction intervals constructed by conformal inference with semiparametric efficient estimator, multiple imputation with Amelia, and weighted CQR with unweighted nested approach for ITE of attrition group. The red horizontal line corresponds to the target coverage of $95\%$.
            \end{tablenotes}
        \end{threeparttable}
    \end{figure}

    \begin{figure}[h]
        \centering
        \caption{Comparison of Average Length of Prediction Intervals for ITE with Attrition}
        \label{fig:MCcomp3lenDGP2}
        \begin{threeparttable}
            \includegraphics[width=\textwidth]{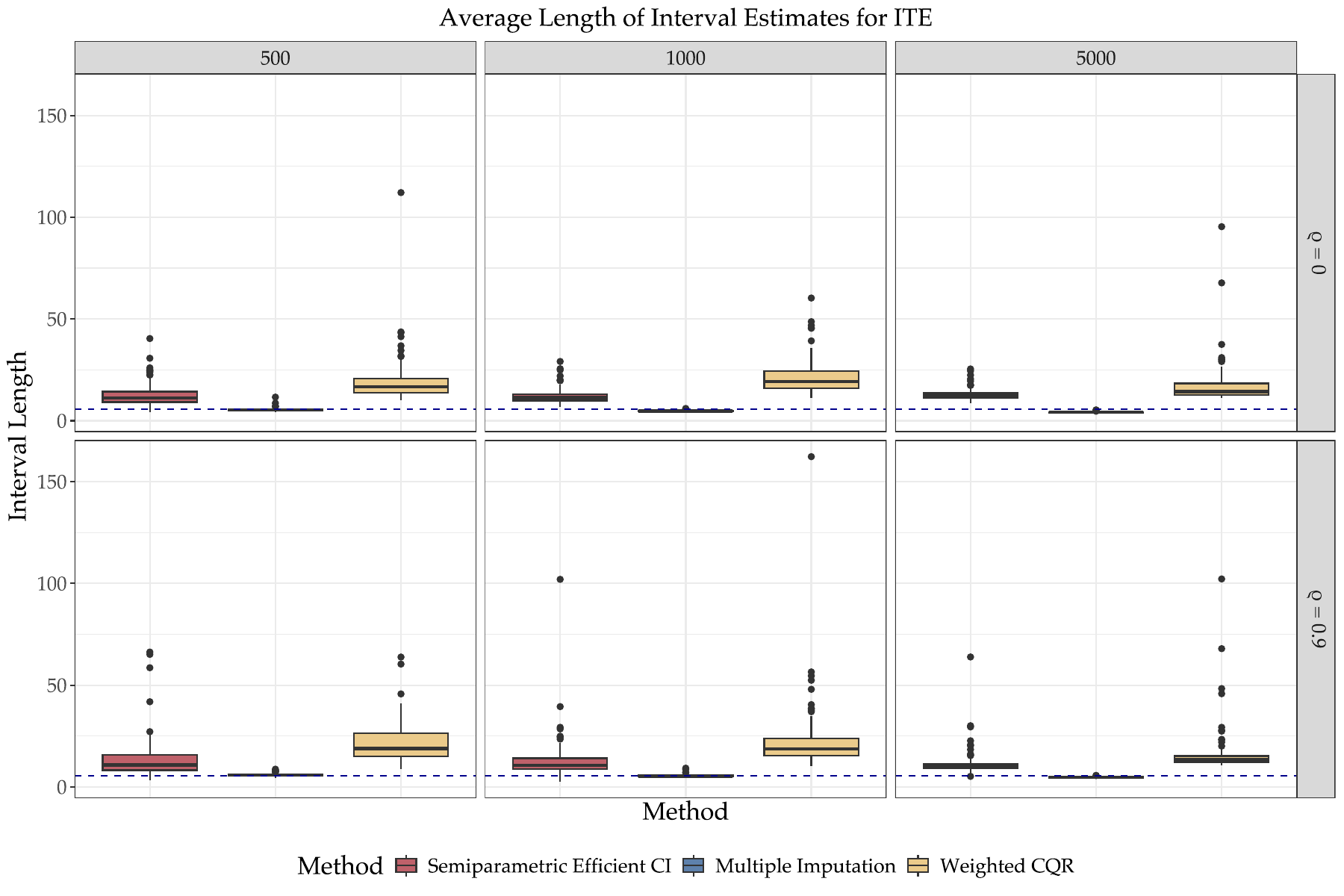}
            \begin{tablenotes}
                \footnotesize
                \setlength\labelsep{0pt}
                \item \textit{Note}: This figure shows the simulation results for the average length of prediction intervals constructed by conformal inference with semiparametric efficient estimator, multiple imputation with Amelia, and weighted CQR with unweighted nested approach for ITE of attrition group. The red horizontal line corresponds to the length of oracle intervals.
            \end{tablenotes}
        \end{threeparttable}
    \end{figure}
    
    \clearpage

    \section{Discussion on Replication Results}

    \subsection{Reanalysis of \citet{finkel2024can}}

    \begin{table}[h]
      \centering
      \caption{Balance Test: Observed vs.\ Attrition Groups}
      \label{tab:balance_finkel}
      \resizebox{\textwidth}{!}{%
      \begin{tabular}{lrrrrrrr}
        \toprule
        \multicolumn{1}{c}{ } & \multicolumn{3}{c}{Observed (R = 1)} & \multicolumn{3}{c}{Attrition (R = 0)} & \multicolumn{1}{c}{ } \\
        \cmidrule(l{3pt}r{3pt}){2-4} \cmidrule(l{3pt}r{3pt}){5-7}
        Variable & Mean\_Observed & SD\_Observed & N\_Observed & Mean\_Attrition & SD\_Attrition & N\_Attrition & p\_value\\
        \midrule
        D & 0.760 & 0.427 & 2203 & 0.761 & 0.427 & 1511 & 0.932\\
        age & 21.529 & 3.877 & 2203 & 21.635 & 4.131 & 1511 & 0.429\\
        ani\_2 & 0.068 & 0.252 & 2203 & 0.082 & 0.275 & 1511 & 0.115\\
        ani\_3 & 0.230 & 0.421 & 2203 & 0.197 & 0.398 & 1511 & 0.015\\
        ani\_4 & 0.326 & 0.469 & 2203 & 0.298 & 0.457 & 1511 & 0.064\\
        \addlinespace
        ani\_5 & 0.268 & 0.443 & 2203 & 0.289 & 0.454 & 1511 & 0.154\\
        autoc\_pref & 0.236 & 0.425 & 2203 & 0.255 & 0.436 & 1511 & 0.189\\
        dem\_pref & 0.585 & 0.493 & 2203 & 0.512 & 0.500 & 1511 & 0.000\\
        educ\_new\_2 & 0.108 & 0.310 & 2203 & 0.134 & 0.341 & 1511 & 0.017\\
        educ\_new\_3 & 0.245 & 0.430 & 2203 & 0.269 & 0.443 & 1511 & 0.101\\
        \addlinespace
        educ\_new\_4 & 0.393 & 0.488 & 2203 & 0.307 & 0.461 & 1511 & 0.000\\
        educ\_new\_5 & 0.213 & 0.410 & 2203 & 0.208 & 0.406 & 1511 & 0.684\\
        employed & 0.298 & 0.458 & 2203 & 0.347 & 0.476 & 1511 & 0.002\\
        f\_ani\_2 & 0.030 & 0.171 & 2203 & 0.050 & 0.219 & 1511 & 0.002\\
        f\_educ\_new\_2 & 0.025 & 0.155 & 2203 & 0.034 & 0.181 & 1511 & 0.105\\
        \addlinespace
        f\_polint\_new\_2 & 0.044 & 0.206 & 2203 & 0.063 & 0.243 & 1511 & 0.016\\
        female & 0.326 & 0.469 & 2203 & 0.361 & 0.480 & 1511 & 0.031\\
        polint\_new\_2 & 0.284 & 0.451 & 2203 & 0.249 & 0.432 & 1511 & 0.016\\
        polint\_new\_3 & 0.379 & 0.485 & 2203 & 0.349 & 0.477 & 1511 & 0.061\\
        polint\_new\_4 & 0.125 & 0.331 & 2203 & 0.156 & 0.363 & 1511 & 0.010\\
        \addlinespace
        regis\_dnw & 0.128 & 0.334 & 2203 & 0.162 & 0.369 & 1511 & 0.004\\
        regis\_no & 0.277 & 0.448 & 2203 & 0.273 & 0.445 & 1511 & 0.754\\
        \bottomrule
      \end{tabular}%
      }
    \end{table}

    \clearpage

    \begin{table}[h]
      \caption{HTE Regression: Political Interest Interactions}
      \centering
      \label{tab:hte}
      \begin{threeparttable}
      \begin{tabular}{lc}
      \toprule
        & Authoritarian Support\\
      \midrule
      D & $-0.024$\\
        & $(0.024)$\\[3pt]
      polint\_new\_2 & $-0.077^{***}$\\
        & $(0.028)$\\[3pt]
      polint\_new\_3 & $-0.091^{***}$\\
        & $(0.026)$\\[3pt]
      polint\_new\_4 & $-0.087^{**}$\\
        & $(0.035)$\\[3pt]
      D $\times$ polint\_new\_2 & $0.027$\\
        & $(0.032)$\\[3pt]
      D $\times$ polint\_new\_3 & $0.013$\\
        & $(0.030)$\\[3pt]
      D $\times$ polint\_new\_4 & $0.088^{**}$\\
        & $(0.040)$\\[3pt]
      Intercept & $0.324^{***}$\\
        & $(0.021)$\\
      \midrule
      Observations & 2,203\\
      \bottomrule
      \end{tabular}
      \begin{tablenotes}
        \footnotesize
        \setlength\labelsep{0pt}
        \item \textit{Note}: Standard errors in parentheses. $^{*}p<0.1$; $^{**}p<0.05$; $^{***}p<0.01$.
      \end{tablenotes}
      \end{threeparttable}
    \end{table}

\end{document}